\newcommand{\AND}{\mathsf{AND}}
\newcommand{\EPR}{\mathrm{EPR}}
\newcommand {\QAC}{\ensuremath{\mathsf{QAC}}}
\newcommand{\tr}{\mathrm{tr}}
\newcommand{\tv}{\mathrm{TV}}
\newcommand{\calA}{\mathcal{A}}
\title{Learning Junta Distributions, Quantum Junta States,\\ and {QAC$^0$} Circuits}
\author{
Jinge Bao \thanks{School of Informatics, University of Edinburgh, UK. \href{jinge.bao@ed.ac.uk}{jinge.bao@ed.ac.uk}}
\and Francisco Escudero Gutiérrez\thanks{Qusoft and CWI, the Netherlands. \href{feg@cwi.nl}{feg@cwi.nl} }}
\date{\vspace{-1cm}}
\begin{document}

\maketitle
\begin{abstract}
    In this work, we consider the problems of learning junta distributions, their quantum counterparts (quantum junta states), and $\mathsf{QAC}^0$ circuits, which we show to be close to juntas.\\
    
    \textbf{Junta distributions.} A probability distribution $p:\{-1,1\}^n\to \mathbb [0,1]$ is a $k$-junta if it only depends on $k$ bits. We show that they can be learned to within additive error $\eps$ in total variation distance by $O(2^k\log(n)/\eps^2)$ samples, which quadratically improves the upper bound given by Aliakbarpour,  Blais, and Rubinfeld (COLT'16) and matches their lower bound in every parameter.\\
    
    \textbf{Junta states.} We initiate the study of $n$-qubit states that are $k$-juntas, those that are the tensor product of a $k$-qubit state and an $(n-k)$-qubit maximally mixed state. We show that these states can be learned with error $\eps$ in trace distance with $O(12^{k}\log(n)/\eps^2)$ single copies. We also prove a lower bound of $\Omega((4^k+\log (n))/\eps^2)$ copies. Additionally, we show that, for constant $k$, $\tilde{\Theta}(2^n/\eps^2)$ copies are necessary and sufficient to test whether a state is $\eps$-close or $7\eps$-far from being a $k$-junta.\\ 
    
    \textbf{$\mathsf{QAC}^0$ circuits.} Nadimpalli et al.\ (STOC'24) recently showed that the Pauli spectrum of $\mathsf{QAC}^0$ circuits (with a limited number of auxiliary qubits) is concentrated on low degrees. We remark that they implied something stronger, namely that the Choi states of those circuits are close to being juntas. As a consequence, we show that $n$-qubit $\mathsf{QAC}^0$ circuits with size $s$, depth $d$ and $a$ auxiliary qubits can be learned from $2^{O(\log(s^22^a)^d)}\log (n)$ copies of the Choi state, improving the $n^{O(\log(s^22^a)^d)}$ by Nadimpalli et al.\\

    Along the way, we give a new proof of the optimal performance of Classical Shadows based on Pauli analysis. We also strengthen the lower bounds against $\mathsf{QAC}^0$ to compute the address function. Our techniques are based on Fourier and Pauli analysis, and our learning upper bounds are a refinement of the low-degree algorithm by Linial, Mansour, and Nisan.
\end{abstract}
    


\newpage

\section{Introduction} 

One of the main questions of computational learning theory is how efficiently we can learn an unknown object that is promised to have some structure. Two of the most studied structured objects are juntas, which are multi-bit or multi-qubit objects where only a few of the bits or qubits are relevant, and constant-depth circuits.
There is plenty of literature about learning junta objects, such as junta Boolean functions~\cite{AS07,ABRdW16}, junta distributions~\cite{ABR16,CJLW21}, quantum junta unitaries~\cite{CNY23}, and quantum junta channels~\cite{BY25}. Two celebrated models of constant-depth circuits that have been studied from the point of view of learning are $\mathsf{AC}^0$ circuits \cite{LMN93,EIS23} and their quantum analogue, $\QAC^0$ circuits \cite{NPVY24,ADOY25,VH25}.

We continue this line of research by improving the upper bounds on learning classical junta distributions and $\QAC^0$ circuits. To bridge junta distributions and $\QAC^0$ circuits, we initiate the first study on quantum junta states. All of our upper bounds exploit the observation that the Fourier/Pauli expansions of the considered objects are close to being supported on a few low-degree characters/Pauli operators. In other words, these objects are both low-degree and sparse. 
In the case of juntas, both properties follow from the definition. In the case of $\mathsf{QAC}^0$ circuits, they were shown to concentrate on low degrees in \cite{NPVY24}. In this work, we show that they are also close to juntas.

\subsection{Our results}
Our learning upper bounds are summarized in \cref{tab:summary}, where $n$ is the number of bits or qubits, $k$ stands for the number of relevant variables of junta objects, $s$ (which stands for size) is the number of multi-qubit gates of a $\QAC^0$ circuit, $d$ is the depth of a circuit, and $\eps$ is the error parameter with respect to natural metrics that we will specify later. In the case of classical objects, the complexity measure is the number of samples. In the quantum case, the complexity measure is the number of copies. 
\begin{table}[!h]
    \centering
    \begin{tabular}{c|c|cc|}
        \cline{2-4}
        \textbf{}                                            
        & \multicolumn{1}{c|}{\textit{Classical}}& \multicolumn{2}{c|}{\textit{Quantum}}\\ 
        \cline{2-4} 
        & \multicolumn{1}{c|}{\textbf{Junta distributions}} & \multicolumn{1}{c|}{\textbf{Junta states}} & \textbf{$\QAC^0$ circuits} \\ \hline
        \multicolumn{1}{|c|}{\multirow{2}{*}{\textbf{Previous best}}} & \multicolumn{1}{c|}{ $  2^{2k}\log(n)/\eps^4$}                             &       \multicolumn{1}{c|}{\multirow{2}{*}{---}}                       &    \multicolumn{1}{c|}{                $n^{\log(s/\eps)^{d}}$ }            \\  
        \multicolumn{1}{|c|}{} & \multicolumn{1}{c|}{ $  \cite{ABR16} $}                              & \multicolumn{1}{c|}{}                      &               \cite{NPVY24}            \\ \hline 
        \multicolumn{1}{|c|}{\textbf{Our result}}            & \multicolumn{1}{c|}{$2^{k}\log(n)/\eps^2$ }                             &                  \multicolumn{1}{c|}{$12^{k}\log(n)/\eps^2$ }                      &            $2^{\log(s/\eps)^{d}}\log(n)$               \\ \hline
    \end{tabular}
    \caption{Summary of our upper bounds.}
    \label{tab:summary}
\end{table}

Before we discuss our results in more detail, we make a few remarks about our main results.
\begin{enumerate}
    \item [$(i)$] \textbf{Junta distributions.} Our upper bound is essentially optimal, as it matches the lower bound $\Omega\rbra{2^k/\eps^2+k\log\rbra{n}/\eps}$ of \cite{ABR16} in every parameter (see \cref{theo:learnclasjuntas}). 
    \item [$(ii)$] \textbf{Junta states.} Recent works study the junta-learning problem of unitaries and quantum channels \cite{CNY23,BY25}, but there seems to be no previous work about quantum junta states. Hence, our result for junta states fills a gap in the literature. We also provide a $\Omega\rbra{\rbra{4^k+\log\rbra{n}}/\eps^2}$ lower bound that shows that our upper bound cannot be improved by much (see \cref{theo:learnquantumjuntas}). 
    \item [$(iii)$] \textbf{$\QAC^0$ circuits.} For constant $d$, $s$ and $\eps$, our result exponentially improves previous work (see \cref{theo:learningQAC0}). However, in the usual regime where $s = \poly\rbra{n}$ and $d$, $\eps$ are constants, it yields a quasi-polynomial number of samples, which was already attained in previous work \cite{NPVY24}.
\end{enumerate}

Additionally, we show that, for constant $k$, $\tilde{\Theta}(2^n/\eps^2)$ copies are necessary and sufficient to test whether a state is $\eps$-close or $7\eps$-far from being a $k$-junta.

\subsubsection{Learning junta distributions}
Learning in the presence of irrelevant information, such as the \emph{dummy} variables appearing in juntas, is one of the most famous yet open problems of classical computational learning theory since the 90's \cite{Blu94,BHL95,BL97}. Numerous works consider the problems of learning and testing junta Boolean functions and distributions \cite{MOS03,valiant2012finding,ABR16,CJLW21,CDL+24,NP24}. In particular, Aliakbarpour,  Blais, and Rubinfeld considered the problem of learning a junta distribution $p \colon \cbra{-1,1}^n \to \interval{0}{1}$ from given samples $x \sim p\rbra{x}$ \cite{ABR16}. They showed that in order to estimate $p$ to within additive error $\eps$ in total variation distance, $O\rbra{2^{2k}k\log\rbra{n}/\eps^4}$ samples suffice and $\Omega\rbra{2^k/\eps^2+k\log\rbra{n}/\eps}$ are necessary. As the first result of this work, we quadratically improve the learning upper bound compared to theirs. Moreover, it matches their lower bound in every parameter. 

\begin{restatable}{theorem}{theolearnclasjuntas}\label{theo:learnclasjuntas}
    Let $p \colon \cbra{-1,1}^n \to \interval{0}{1}$ be a $k$-junta distribution. The distribution can be learned to within additivie error $\eps$ in total variation distance with success probability $\ge 1 - \delta$ by
    \begin{align*}
        O\rbra*{\frac{2^kk\log\rbra*{n/\delta}}{\eps^2}}
    \end{align*}
    samples from $p$.
\end{restatable}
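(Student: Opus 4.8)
The plan is to reduce learning the junta distribution to two subtasks: first, identifying the set $J \subseteq [n]$ of relevant coordinates (with $|J| \le k$), and second, once $J$ is known, estimating the restriction of $p$ to $\{-1,1\}^J$. The second subtask is easy: given $J$, the distribution $p$ is determined by the $2^{|J|}$ probabilities $p_J(y) = \Pr_{x\sim p}[x_J = y]$, and by a standard empirical-frequency argument (or a coupon-collector / $\ell_1$-learning-of-discrete-distributions bound) $O(2^k/\eps^2)$ samples suffice to estimate this $2^k$-atom distribution to $\eps$ in total variation with constant success probability; boosting to $1-\delta$ costs a $\log(1/\delta)$ factor. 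So the crux is the first subtask: correctly finding $J$ from the samples.

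To find $J$, I would work with the Fourier expansion $p(x) = 2^{-n}\sum_{S} \hat p(S)\chi_S(x)$, where for a $k$-junta all nonzero $\hat p(S)$ have $S \subseteq J$. The key observation is that a coordinate $i$ is relevant if and only if there is some $S \ni i$ with $\hat p(S) \neq 0$, and in that case $\sum_{S \ni i} \hat p(S)^2$ is bounded below by a quantity depending only on $k$ (since $p$ is a genuine $k$-junta, the "influence"-type quantity $\mathbf{Inf}_i[p] := 2^n\sum_{S\ni i}\hat p(S)^2$ — note $\hat p(S) = 2^{-n}\mathbb{E}[\chi_S]$ so these are exponentially small — is at least some $\tau_k$ whenever $i$ is relevant). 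We can estimate each $\mathbf{Inf}_i[p]$, or equivalently the relevant low-degree Fourier weight, from samples: each $\hat p(S)$ for $|S|$ small is, up to the $2^{-n}$ normalization, the expectation $\mathbb{E}_{x\sim p}[\chi_S(x)]$, which is estimable from $O(\log(n/\delta)/\gamma^2)$ samples to additive accuracy $\gamma$, and a union bound over the $\binom{n}{\le k} = n^{O(k)}$ small sets — or rather over the $n$ coordinates after summing — gives the $\log n$ dependence. Thresholding identifies $J$ exactly with high probability.

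The step I expect to be the main obstacle is getting the sample complexity down to exactly $2^k$ rather than $2^{2k}$ (the source of Aliakbarpour et al.'s loss) while keeping only a single power of $\log n$. The naive approach of estimating every $\hat p(S)$ for $|S|\le k$ individually incurs either a $\binom{n}{k}$ union bound (too many logs, or an $n^{O(k)}$ blowup if done crudely) or requires accuracy fine enough to detect $2^{-n}$-scale coefficients. The fix is to avoid estimating individual coefficients and instead directly estimate, for each $i\in[n]$, a single statistic — the probability that $x_i$ differs between two independent draws conditioned on agreeing elsewhere, or more cleanly $\Pr_{x,x'\sim p}[x_i \neq x_i' \text{ and } x_{[n]\setminus\{i\}} = x'_{[n]\setminus\{i\}}]$ — which is nonzero exactly for relevant $i$ and bounded below by roughly $2^{-O(k)}$ for a true $k$-junta. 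Estimating $n$ such quantities to the right relative accuracy needs $O(2^k \log(n/\delta))$ samples by Chernoff plus a union bound over $n$, giving the clean $\log n$ and the right power of $2^k$. Combining the two phases, with the dominant cost being $\max\{2^k k\log(n/\delta)/\eps^2,\ 2^k\log(n/\delta)\}$, yields the claimed bound; the extra factor of $k$ in the theorem statement presumably comes from a slightly coarser accounting of the thresholding step (e.g. needing each of the $\le k$ relevant coordinates pinned down, or union-bounding over a set system of size $n^k$ rather than $n$).
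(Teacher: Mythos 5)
Your phase~2 is fine (empirical frequencies on a known set of $\le k$ coordinates), but phase~1 --- exact identification of the relevant set $J$ --- is where the proposal breaks, for two reasons. First, the claimed threshold $\tau_k$ does not exist: a relevant coordinate of a junta \emph{distribution} can have arbitrarily small effect. Take $p$ uniform on all bits except bit $1$, which has bias $\gamma$; this is a $1$-junta for every $\gamma>0$, yet every influence-type statistic scales with $\gamma$, so no quantity depending only on $k$ lower-bounds it. (The granularity $2^{-O(k)}$ you are implicitly invoking holds for degree-$k$ \emph{Boolean functions}, not for distributions, whose restriction $p_K$ is an arbitrary distribution on $k$ bits.) Hence ``thresholding identifies $J$ exactly'' is false, and exact recovery of $J$ is information-theoretically impossible with any bounded number of samples. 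Second, the specific statistic $\Pr_{x,x'\sim p}[x_i\neq x_i' \text{ and } x_{[n]\setminus\{i\}}=x'_{[n]\setminus\{i\}}]$ does not separate relevant from irrelevant coordinates: for irrelevant $i$ it equals $\tfrac{1}{2}\Pr[x_{[n]\setminus\{i\}}=x'_{[n]\setminus\{i\}}]>0$, and for \emph{every} $i$ it is at most $2^{-(n-k)}$, because the $n-k$ irrelevant coordinates are uniform and must collide. So it is exponentially small in $n$, cannot be estimated (nor can one condition on the collision event) without $2^{\Omega(n)}$ samples, and the conditional variant equals $1/2$ for irrelevant $i$ anyway.

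The obstacle you engineered around is not actually there, and the route you discarded is essentially the correct one (and the paper's). One never needs to identify $J$: estimate every degree-$\le k$ coefficient via the empirical mean of $\chi_S(x)$ for $x\sim p$ --- an $O(1)$-scale quantity, the $2^{-n}$ being pure normalization --- to accuracy $\eps/(2\sqrt{2^k})$, and round to zero every estimate below this threshold. The rounding guarantees the hypothesis is supported only on sets $S\subseteq K$ with $\widehat p(S)\neq 0$, so by Parseval the $L_2$ (hence $\ell_1$/TV) error sums over at most $2^k$ terms, each of size $O(\eps^2/2^k)$; the $n^k$ candidate sets enter only through the union bound, whose cost is $\log(n^k/\delta)=k\log(n/\delta)$ --- exactly the factor already present in the theorem statement, not ``too many logs.'' Hoeffding then gives $O(2^k k\log(n/\delta)/\eps^2)$ samples directly, and a cheap post-processing (restrict to the surviving $\le k$ coordinates, clip negatives, renormalize) turns the output into a genuine distribution.
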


Similar to the technique for proving the upper bound by Aliakbarpour et al, we leverage the fact that $k$-juntas have Fourier degree at most $k$. Furthermore, we observe that the Fourier spectrum is also sparse. Our algorithm is proper (i.e., it outputs a probability distribution) and its time complexity is $O\rbra{n^k 2^k/\eps^2}$ (see Remark \ref{rem:timecomplexity}), which also quadratically improves the upper bound of time complexity $O\rbra{n^k2^{2k}/\eps^4}$ by Aliakbarpour et al.

\subsubsection{Testing and learning junta states}

In quantum testing and learning theory, the most commonly studied objects are states, unitaries, and channels \cite{OW21,HHJ+17,HKOT23,Ouf23}. By contrast, the problems of learning and testing $k$-junta unitaries and channels were recently studied \cite{CJLW21,BY25}, but to the best of our knowledge, no literature has explored the analogue version for quantum states.\footnote{An incomparable notion of $k$-junta states was explored in \cite[Algorithm 1]{ZLK+24}. Those states are pure states where all but $k$ registers equal $\ket{0}$.}

\begin{definition}[Junta state]
    A $n$-qubit state $\rho$ is said to be a $k$-junta state if there are a set $K \subseteq [n]$ of size $k$ and a state $\rho_K$ defined on $K$ such that 
    \begin{align*}
        \rho = \rho_K \otimes \frac{I_{\sbra{n}-K}}{2^{n-k}}.
    \end{align*}
    In other words, $\rho$ is a $k$-junta state if $\rho$ is the tensor product of a nontrivial $k$-qubit state and the maximally mixed state on the rest $\rbra{n-k}$ qubits.
\end{definition}
 
Note that $k$-junta states are the quantum generalizations of $k$-junta distributions.  Therefore, the problems of learning and testing quantum junta states are the quantum analogue of the problems considered by Aliakbarpour et al. \cite{ABR16}. We prove nearly optimal results for testing and learning quantum junta states in terms of copy complexity.

\begin{restatable}{theorem}{theolearnquantumjuntas}\label{theo:learnquantumjuntas}
    Let $\rho$ be a $n$-qubit $k$-junta quantum state. Then, $\rho$ can be learned to within additive error $\eps$ in trace distance and success probability $\ge 1 - \delta$, by
    \begin{align*}
        O\rbra*{\frac{12^{k}\log\rbra*{n/\delta}}{\eps^2}}
    \end{align*}
    copies of $\rho$, and $\Omega\rbra{\rbra{\log\rbra{n}+4^k}/\eps^2}$ are necessary for this task. Furthermore, the algorithm just does Pauli measurements on single copies of the state.
\end{restatable}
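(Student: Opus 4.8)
The plan is to mirror the classical argument (Theorem \ref{theo:learnclasjuntas}) in the Pauli picture. Write $\rho = \frac{1}{2^n}\sum_{P \in \{I,X,Y,Z\}^n} \widehat{\rho}(P)\, P$, where $\widehat{\rho}(P) = \mathrm{tr}(P\rho)$. Being a $k$-junta means that $\widehat{\rho}(P) = 0$ unless $\mathrm{supp}(P) \subseteq K$ for the relevant set $K$ of size $k$; in particular the Pauli spectrum is supported on at most $4^k$ Paulis, all of degree $\le k$. So the state is both low-degree and sparse, exactly as in the classical case, and the strategy is again a refinement of the low-degree / sparse-recovery algorithm: identify the relevant qubits, estimate the surviving Pauli coefficients, and reconstruct.

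The key steps, in order, are: (1) \emph{Estimate individual Pauli coefficients.} Measuring a single copy of $\rho$ in the eigenbasis of a chosen Pauli $P$ yields an unbiased $\pm 1$ estimator of $\widehat{\rho}(P)$; by Hoeffding, $O(\log(1/\delta')/\eta^2)$ copies estimate a fixed $\widehat{\rho}(P)$ to additive error $\eta$. (2) \emph{Find the relevant set $K$.} For each qubit $i$, test whether $i$ is relevant by checking whether $\sum_{P:\, i \in \mathrm{supp}(P)} \widehat{\rho}(P)^2$ is bounded away from $0$ — equivalently, estimate the ``influence'' of qubit $i$. This quantity equals $\frac{1}{2^{n-1}}\mathrm{tr}\!\big((\rho - \rho^{(i)})^2\big)$ up to normalization, where $\rho^{(i)}$ replaces qubit $i$ by the maximally mixed state; one can estimate it either by a suitable single-copy Pauli-sampling scheme or by a swap-test-style estimator on two copies, and a union bound over the $n$ qubits contributes the $\log n$ factor. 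A qubit outside the true $K$ has influence exactly $0$, while inside $K$ we only keep qubits whose influence exceeds an $\eps$-dependent threshold (qubits with tiny influence can be safely discarded and replaced by maximally mixed, costing little in trace distance). (3) \emph{Reconstruct.} Once a candidate set $\widehat{K}$ of size $\le k$ is identified, estimate all $4^k$ Paulis supported on $\widehat{K}$ to additive error $\eta \sim \eps/2^k$, form $\widetilde{\rho} = \frac{1}{2^n}\sum_{P \subseteq \widehat{K}} \widetilde{\widehat{\rho}}(P) P$, project it onto the set of density matrices, and output it. (4) \emph{Error accounting.} Control $\|\rho - \widetilde{\rho}\|_1 \le 2^{k/2}\|\rho-\widetilde\rho\|_2$ (since the error is supported on a $\le 4^k$-dimensional Pauli subspace) and $\|\rho-\widetilde\rho\|_2^2 = \frac{1}{2^n}\sum_P (\widehat{\rho}(P)-\widetilde{\widehat\rho}(P))^2$, so $4^k$ coefficient estimates each of error $\eta$ give trace-distance error $O(2^{k/2}\cdot 2^k \eta)$... which forces $\eta \sim \eps/2^{3k/2}$ and hence copy count $\widetilde O(4^k \cdot 2^{3k}/\eps^2) = \widetilde O(2^{5k}/\eps^2)$ — too lossy. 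To reach $12^k = 4^k 3^k$ one should be more careful: split copies so that the total is $\sum_{P\subseteq\widehat K}(\text{copies for }P)$ with a shared Chernoff budget, i.e. use that estimating $4^k$ coefficients to total squared-$\ell_2$ error $\eps^2/2^k$ needs, by a union bound, $O(4^k \log(4^k/\delta)\cdot 2^k/\eps^2)$ copies for the $\ell_2$ stage, and bound the trace distance via a tighter inequality or via the purity/fidelity route (Fannes-type or direct fidelity estimation) rather than the crude $2^{k/2}$ blow-up. The projection step only decreases the distance to the true state.

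The main obstacle is step (4): getting the exponential constant down to $12^k$ rather than the naive $2^{O(k)}$ with a larger base. This requires \emph{not} going through $\|\cdot\|_1 \le \sqrt{\mathrm{rank}}\,\|\cdot\|_2$ blindly, but instead exploiting that after reconstructing $\rho_{\widehat K}$ we are really estimating a $k$-qubit state in trace distance, for which shadow-tomography / direct-fidelity-style bounds give $O(4^k/\eps^2)$ (times $\log n$ for the search, times the $3^k$ overhead coming from single-copy Pauli measurements distributing the budget over the three bases per qubit). The second delicate point is step (2): making the influence test robust enough that the union bound over $n$ qubits costs only $\log n$ and not $\mathrm{poly}(n)$ copies, and ensuring that thresholding away small-influence qubits changes $\rho$ by at most $O(\eps)$ in trace distance — this needs the same ``total influence is bounded'' accounting that underlies the classical junta-testing results, here phrased via $\sum_i \mathrm{Inf}_i(\rho) \le k \cdot O(1)$ for a true $k$-junta. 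For the matching lower bound $\Omega((\log n + 4^k)/\eps^2)$, the plan is a two-part argument: the $4^k/\eps^2$ term by reduction to tomography of an unknown $k$-qubit state (pack an $\eps$-net of $k$-qubit states and apply a standard information-theoretic / Holevo bound), and the $\log n/\eps^2$ term by a hypothesis-selection argument over the $\binom{n}{k}$ choices of relevant set combined with a single hard pair of states distinguishable only with $\Omega(1/\eps^2)$ copies, exactly paralleling the classical $\Omega(k\log n/\eps)$ and $\Omega(2^k/\eps^2)$ constructions of \cite{aliakbarpour2016learning}.
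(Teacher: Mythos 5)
Your overall skeleton---single-copy Pauli (classical-shadow) estimation of the weight-$\le k$ Pauli coefficients, thresholding, and a Parseval-based error analysis---matches the paper's route (\cref{theo:classicalshadows,theo:upperboundquantumjuntas}), but two of your steps diverge in ways that leave genuine gaps. First, your stage (2) of identifying $K$ via influence estimation is both unnecessary and problematic: the paper never locates $K$. It estimates \emph{every} coefficient $\Tr[P\rho]$ with $|\supp(P)|\le k$ to accuracy $\approx \eps/2^k$ by shadows (the union bound over the $(3n)^k$ low-weight Paulis is the sole source of the $k\log n$ factor) and rounds every estimate below that threshold to zero; since off-junta coefficients are exactly zero, they are automatically eliminated, and the surviving error is confined to the $4^k$ Paulis supported in $K$. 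Your influence test, by contrast, is unspecified in its single-copy form, its swap-test variant needs two-copy measurements and so contradicts the ``single copies'' clause of the theorem, and the claim that discarding low-influence qubits costs only $O(\eps)$ in trace distance silently needs a $2^{k}$-type conversion from an $\ell_2$ quantity to trace norm that you do not account for.

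Second, the ``main obstacle'' you flag in step (4) is not an obstacle: the crude route you dismiss gives exactly $12^k$ once the normalization is kept straight. If the estimates $\tilde{\widehat\rho}(P)$ of $\widehat\rho(P)=\Tr[P\rho]$ are within $\eta$ and vanish off $K$, the error matrix is $\Delta_K\otimes \Id_{[n]-K}/2^{n-k}$ with $\Delta_K=\frac{1}{2^k}\sum_{\supp(P)\subseteq K}\bigl(\widehat\rho(P)-\tilde{\widehat\rho}(P)\bigr)P_K$, and
\begin{equation*}
\norm{\rho-\tilde\rho}_{\mathrm{tr}}=\norm{\Delta_K}_{\mathrm{tr}}\le \sqrt{2^k}\,\norm{\Delta_K}_{F}=\Bigl(\sum_{\supp(P)\subseteq K}\bigl|\widehat\rho(P)-\tilde{\widehat\rho}(P)\bigr|^2\Bigr)^{1/2}\le 2^k\eta,
\end{equation*}
because $\norm{\Delta_K}_F^2=\frac{1}{2^k}\sum_P|\widehat\rho(P)-\tilde{\widehat\rho}(P)|^2$. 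So $\eta=\eps/2^k$ suffices, and the shadow cost $O(3^k\log((3n)^k/\delta)/\eta^2)=O(12^k\,k\log(n/\delta)/\eps^2)$ follows with no fidelity-estimation or shadow-tomography detour; your $2^{5k}$ figure comes from dropping the $1/2^k$ in the Frobenius--Parseval identity and then stacking the rank factor on top of it. On the lower bound, the $\Omega(4^k/\eps^2)$ part is fine (the paper reduces directly to the $k$-qubit tomography bound of \cite{haah2017sample}), but your $\Omega(\log n/\eps^2)$ sketch omits the substantive step: the paper exhibits $n$ explicit $1$-junta states (qubit $i$ in $\tfrac12(I+\eps Z)$, all others maximally mixed), which are pairwise $\eps$-separated in trace distance, and proves $\chi(\{\rho_i^{\otimes T}\})=O(T\eps^2)$ via (sub)additivity of the von Neumann entropy before invoking \cref{lem:Holevolowerbound}; ``hypothesis selection over $\binom{n}{k}$ sets plus one hard pair'' does not by itself produce the $1/\eps^2$ factor multiplying $\log n$.
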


For the upper bound, we perform classical shadow tomography with Pauli measurements \cite{HKP20,EFH+22}. Furthermore, we include a novel proof of the rigorous guarantees of the classical shadow tomography algorithm based on Pauli analysis that might be of independent interest (see \cref{theo:classicalshadows}). The second term of the proved lower bound $\Omega\rbra{4^k/\eps^2}$ follows from the lower bound by Haah et al. to learn $k$-qubit states \cite{HHJ+17}. In order to prove the first term of the lower bound $\Omega\rbra{\log\rbra{n}/\eps^2}$, we show that there are $n$ states which are $1$-junta but hard to distinguish. 

For the testing problem, we have the following result,

\begin{restatable}{theorem}{theo:testingjuntastates}\label{theo:testingjuntastates}
    Let $\rho$ be a $n$-qubit state. Let $k \in \N$ be a constant. Then 
    \begin{align*}
        \widetilde{\Theta}\rbra*{\frac{2^n\log\rbra*{1/\delta}}{\eps^2}}
    \end{align*}
    copies of $\rho$  are necessary and sufficient to test whether $\rho$ is $\eps$-close or $7\eps$-far in trace distance from being $k$-junta with success probability $\ge 1 - \delta$.\footnote{Here, $\widetilde{\Theta}\rbra{\cdot}$ hides poly-log factors in the argument.} 
\end{restatable}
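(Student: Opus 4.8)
The plan is to reduce the problem, in both directions, to deciding closeness of $\rho$ to a fixed known state. For $K\subseteq[n]$ with $|K|=k$, write $\rho_{[K]}:=\bigl(\mathrm{Tr}_{[n]\setminus K}\rho\bigr)\otimes\tfrac{I_{[n]\setminus K}}{2^{n-k}}$ for the $k$-junta on $K$ obtained from $\rho$ by maximally mixing the irrelevant qubits. Since the partial trace is trace-norm contractive and $\rho_{[K]}$ has the same reduced state on $K$ as $\rho$, a two-line triangle-inequality argument shows $\|\rho-\rho_{[K]}\|_1\le 2\|\rho-\sigma\|_1$ for every $k$-junta $\sigma$ supported on $K$; as also $\|\rho-\rho_{[K]}\|_1\ge\min_\sigma\|\rho-\sigma\|_1$, the quantity $\min_K\|\rho-\rho_{[K]}\|_1$ is within a factor $2$ of the trace-norm distance from $\rho$ to the set of all $k$-juntas. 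So it suffices to decide, over the $O(n^k)$ choices of $K$, whether $\rho$ is close to the single state $\rho_{[K]}$, accepting iff some $K$ works.

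For the upper bound I would first learn all the $2^k$-dimensional reduced states $\rho_K=\mathrm{Tr}_{[n]\setminus K}\rho$ to trace distance $\eps/10$ at once, which by classical-shadow tomography with Pauli measurements costs $O(12^k k\log(n/\delta)/\eps^2)$ copies (exactly as in \cref{theo:learnquantumjuntas}), negligible next to $2^n/\eps^2$. Setting $\widehat\sigma_K:=\widehat\rho_K\otimes\tfrac{I}{2^{n-k}}$, so $\|\widehat\sigma_K-\rho_{[K]}\|_1\le\eps/5$, I would then run, for each $K$, a quantum state certification test distinguishing $\|\rho-\widehat\sigma_K\|_1\le 5\eps$ from $\|\rho-\widehat\sigma_K\|_1\ge 6\eps$; because $\widehat\sigma_K$ has at least $2^{n-k}$ eigenvalues of order $2^{-n}$ (it is essentially maximally mixed), certification against it needs only $\widetilde O(2^n/\eps^2)$ copies — the same order as mixedness testing — via the Badescu--O'Donnell--Wright toolkit. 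Accept iff some $K$ passes. If $\rho$ is $\eps$-close to a $k$-junta $\sigma^\ast$ on $K^\ast$, then $\|\rho-\widehat\sigma_{K^\ast}\|_1\le 2\|\rho-\sigma^\ast\|_1+\eps/5\le 4\eps+\eps/5<5\eps$ and that test accepts; if $\rho$ is $7\eps$-far from every $k$-junta, then $\|\rho-\rho_{[K]}\|_1\ge 14\eps$ for all $K$ (since $\rho_{[K]}$ is a $k$-junta), so $\|\rho-\widehat\sigma_K\|_1>6\eps$ and all tests reject; the generous constant $7$ is there to absorb the factor-$2$ proxy loss and the $O(\eps)$ learning error. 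Amplifying to confidence $1-\delta$ and union-bounding over the $O(n^k)$ sets costs only a poly-logarithmic overhead, giving $\widetilde O\bigl(2^n\log(1/\delta)/\eps^2\bigr)$ copies overall.

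For the lower bound I would reduce from $n$-qubit mixedness testing — deciding $\rho=I/2^n$ versus $d_{\mathrm{tr}}(\rho,I/2^n)\ge\eps_0$, which is known to require $\Omega(2^n/\eps_0^2)$ copies. The maximally mixed state is a $k$-junta, so yes-instances are $0$-close to a $k$-junta. For the no-instances I would take the standard hard distribution $\rho=U\Lambda U^\dagger$ with $U$ Haar-random and $\Lambda$ a fixed spectrum at trace distance $\eps_0$ from $I/2^n$ that stays indistinguishable from $I/2^n$ with $o(2^n/\eps_0^2)$ copies. For fixed $K$, the reduced state $\mathrm{Tr}_{[n]\setminus K}\rho$ concentrates, with overwhelming probability, within trace distance $2^{-\Omega(n)}$ of $I/2^k$ (a generic state with near-flat spectrum is locally maximally mixed), whence $\|\rho-\rho_{[K]}\|_1\ge\|\rho-I/2^n\|_1-2^{-\Omega(n)}\ge 2\eps_0-2^{-\Omega(n)}$; a union bound over the $\binom nk=\mathrm{poly}(n)$ sets together with the factor-$2$ proxy bound shows $\rho$ is at trace distance $\ge\eps_0/4$ from every $k$-junta with overwhelming probability. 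Choosing $\eps=\eps_0/28$ puts the two cases on opposite sides of the $\eps$-versus-$7\eps$ threshold, so any $k$-junta tester solves $n$-qubit mixedness testing and needs $\Omega(2^n/\eps_0^2)=\Omega(2^n/\eps^2)$ copies; the $\log(1/\delta)$ dependence is standard.

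The main obstacle is the certification step of the upper bound: it is a \emph{tolerant}, constant-gap trace-distance test against an only approximately known, near-maximally-mixed target, and it must run within $\widetilde O(2^n/\eps^2)$ copies. One cannot shortcut it through the cheaply estimable Hilbert--Schmidt distance $\|\rho-\rho_{[K]}\|_2$, since a state $\eps$-close to a junta in trace distance may still be $\Theta(\eps)$-far from it in $2$-norm while a state $\Omega(\eps)$-far from all juntas in trace distance may be only $\Theta(\eps/2^{n/2})$-far in $2$-norm, so the two cases overlap completely there. On the lower-bound side the only delicate point is checking that the hard no-instances are far from the whole $k$-junta family rather than just from the maximally mixed state, which is precisely what the marginal-concentration bound provides.
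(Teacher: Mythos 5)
Your upper bound is essentially the paper's: the same factor-$2$ proxy (\cref{prop:proxyfordtojuntas}), the same enumeration over the $O(n^k)$ candidate sets $K$, learning each reduced state and then certifying $\rho$ against the learned junta $\widehat\sigma_K$ with \cref{theo:quantumstatecert}, accepting iff some $K$ passes. The one deviation is your choice of thresholds: by placing the reject threshold at $6\eps$ right next to the accept threshold $5\eps$ you manufactured the ``tolerant, narrow-gap certification'' obstacle you flag at the end, but your own inequalities show it is not there. In the far case you derived $\norm{\rho-\rho_{[K]}}_1\geq 14\eps$ for every $K$ (since $\rho_{[K]}$ is itself a $k$-junta), hence $\norm{\rho-\widehat\sigma_K}_1\geq 14\eps-\eps/5$, while in the close case you get at most $4\eps+\eps/5$; this leaves a gap of a factor larger than $2$, so the standard factor-$2$ certification of \cref{theo:quantumstatecert} applies directly (the paper runs it as ``$3\eps$-close versus $6\eps$-far'' in trace-norm units), and no tolerant trace-distance estimation is needed --- the generous constant $7$ in the statement exists precisely to leave this slack.

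The lower bound is where you genuinely diverge. The paper reduces black-box from mixedness testing on $n-k$ qubits (\cref{theo:maximallymixedtest}) via the padded states $\ket{0^k}\bra{0^k}\otimes\rho'$: if $\rho'$ is maximally mixed the padded state is a $k$-junta, and if $\rho'$ is far from maximally mixed, a short monotonicity argument (splitting on whether $K=[k]$ or $K\ni$ no index $i\le k$) shows the padded state is far from \emph{every} $k$-junta; this loses only a factor $2^{-k}$, constant for constant $k$. You instead reduce from $n$-qubit mixedness testing by opening up its hard ensemble $U\Lambda U^\dagger$ with Haar-random $U$ and proving that its $k$-qubit marginals are $2^{-\Omega(n)}$-close to maximally mixed with overwhelming probability, so the no-instances are far from all $k$-juntas. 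This can be made to work, but it is not a black-box use of \cref{theo:maximallymixedtest}: you must condition the indistinguishability argument on the high-probability far-from-junta event, and you need the marginal-concentration lemma (a L\'evy/second-moment computation) with failure probability small enough to union bound over the $\binom{n}{k}$ sets, which is extra machinery the paper's padding reduction avoids entirely at the cost of the harmless $2^{-k}$ factor. Your route buys the bound directly at dimension $2^n$ and a nice structural fact (generic near-flat states are far from juntas); the paper's route is the lighter and more self-contained one.
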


For the upper bound, we use the quantum state certification algorithm of B\u{a}descu, O'Donnell, and Wright \cite{BOW19}. For the lower bound, we reduce the testing junta-ness to testing whether a state is maximally mixed \cite{OW21}.

\subsubsection{Learning \texorpdfstring{$\mathsf{QAC}^0$}{} circuits}
The $\mathsf{QAC}^0$ circuits were proposed by Moore as the quantum analogue of AC$^0$ circuits \cite{Moo99}. In that work, Moore asked whether $\mathsf{QAC}^0$ circuits can compute parity. Despite various efforts, the question remains open to date \cite{FFG+06,Ros20,NPVY24,ADOY25,FGPT25}. In a recent work, Nadimpalli, Parham, Vasconcelos, and Yuen made progress by showing that the Pauli spectrum of the Choi state of a $\mathsf{QAC}^0$ circuit with not too many auxiliary qubits satisfied low-degree concentration \cite{NPVY24}. In addition, we note that the Choi state of a $\mathsf{QAC}^0$ circuit is not only concentrated on low-degree operators, but also close to being a quantum junta state (see \cref{theo:QACconc}). Based on this new observation, alongside the algorithm of \cref{theo:learnquantumjuntas}, we can prove a stronger result.

\begin{restatable}{theorem}{theolearningQAC}\label{theo:learningQAC0}
    Let $\rho$ be the Choi state of a $n$-qubit $\mathsf{QAC}^0$ circuit with size $s$, depth $d$, and $a$ auxiliary qubits. Then using 
    \begin{align*}
        2^{O\rbra*{\rbra*{\log\rbra*{s^2 2^a/\eps}}^d}}\log\rbra*{n/\delta}
    \end{align*}
     copies of $\rho,$ one can output a $\rho^\prime$ such that with probability $\ge 1 - \delta$ satisfies 
     \begin{align*}
        2^n\Abs*{\rho-\rho'}_{\textup{F}}^2\leq \eps.
    \end{align*}
    Furthermore, the algorithm just does Pauli measurements on single copies of the state.
\end{restatable}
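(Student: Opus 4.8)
The plan is to combine the structural fact \cref{theo:QACconc} — that the Choi state $\rho$ is close to a junta — with the Classical Shadows guarantee \cref{theo:classicalshadows}, essentially rerunning the junta-learning algorithm of \cref{theo:learnquantumjuntas} while accounting for the fact that $\rho$ is now only \emph{approximately} a junta. Fix $\gamma:=\eps/6$. By \cref{theo:QACconc} there is a set $K\subseteq[n]$ with $|K|=k=O((\log(s^22^a/\gamma))^d)=O((\log(s^22^a/\eps))^d)$ such that $\rho$ is $\gamma$-close to a $k$-junta state supported on $K$; writing $P\sqsubseteq K$ when the Pauli $P$ is the identity outside $K$, this gives $\sum_{P\not\sqsubseteq K}\Tr(\rho P)^2\le\gamma$, and in particular $\sum_{|P|>k}\Tr(\rho P)^2\le\gamma$ since $P\sqsubseteq K$ forces $|P|\le k$.

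The algorithm I would use measures each copy in an independent uniformly random single-qubit Pauli basis and, via \cref{theo:classicalshadows} — under which estimating $\Tr(\rho P)$ for a weight-$w$ Pauli to additive error $\epsilon_0$ with confidence $1-\delta'$ costs $O(3^w\log(1/\delta')/\epsilon_0^2)$ copies — produces simultaneously, for every Pauli $P$ of weight at most $k$ (there are $\sum_{j\le k}\binom nj 3^j\le n^{O(k)}$ of them), an estimate $\hat o_P$ with $|\hat o_P-\Tr(\rho P)|\le\epsilon_0:=\sqrt{\eps/18}\,/\,2^k$, with total failure probability at most $\delta$. By a union bound this costs $N=O\!\big(3^k(k\log n+\log(1/\delta))/\epsilon_0^2\big)=O(12^k k\log(n/\delta)/\eps)$ copies, which is $2^{O((\log(s^22^a/\eps))^d)}\log(n/\delta)$ because $12^k k/\eps=2^{O(k)}$. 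The algorithm then \emph{thresholds}: with $\theta:=2\epsilon_0$ it outputs $\rho':=2^{-n}\sum_{|P|\le k,\,|\hat o_P|>\theta}\hat o_P\,P$ (which need not be a legitimate density matrix — harmless, since we only want closeness in Frobenius norm).

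For the analysis I would use the identity $2^n\norm{\rho-\rho'}_{\text{F}}^2=\sum_P(\Tr(\rho P)-\widehat{\rho'}(P))^2$, where $\widehat{\rho'}(P)$ equals $\hat o_P$ if $P$ is kept and $0$ otherwise, and split the sum three ways. (i) Paulis with $|P|>k$: here $\widehat{\rho'}(P)=0$, so the contribution is at most the tail $\sum_{|P|>k}\Tr(\rho P)^2\le\gamma$. (ii) Paulis with $P\sqsubseteq K$: there are only $4^k$ of them, and whether $P$ is kept or dropped its error is at most $\theta+\epsilon_0=3\epsilon_0$, for a total $\le 9\cdot4^k\epsilon_0^2$. (iii) Paulis with $|P|\le k$ but $P\not\sqsubseteq K$: the dropped ones again contribute at most $\sum_{P\not\sqsubseteq K}\Tr(\rho P)^2\le\gamma$, while every kept one satisfies $\Tr(\rho P)^2>(\theta-\epsilon_0)^2=\epsilon_0^2$, so by Markov at most $\gamma/\epsilon_0^2$ of them are kept and each contributes $\le\epsilon_0^2$, for a total $\le\gamma$. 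Altogether $2^n\norm{\rho-\rho'}_{\text{F}}^2\le 3\gamma+9\cdot4^k\epsilon_0^2=\eps/2+\eps/2=\eps$, as required, and the algorithm uses only single-copy Pauli measurements.

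The step I expect to be the main obstacle is precisely this thresholding and the balancing of parameters around it: if one instead kept \emph{all} Paulis of weight $\le k$ and reconstructed from them, the reconstruction error would be $n^{O(k)}\epsilon_0^2$, forcing $\epsilon_0^2\lesssim\eps/n^{O(k)}$ and reintroducing the undesired $n^{O(k)}$ copy cost. The point is that only $O(4^k)$ Paulis are ever effectively used — the $4^k$ genuinely on $K$ plus the few spurious ones surviving the threshold — and one must tune $\theta$ against $\epsilon_0$ and the junta-tail $\gamma$ so that neither the true coefficients we drop nor the spurious ones we keep overspend the $\eps$-budget. The other place to be careful is verifying that \cref{theo:classicalshadows} really delivers the stated per-observable guarantee after a union bound over all $n^{O(k)}$ low-weight Paulis, so that only a benign $k\log n$ — rather than $n^{O(k)}$ — appears inside the logarithm of the copy complexity.
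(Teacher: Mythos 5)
Your proposal is correct and follows essentially the paper's route: the paper proves \cref{theo:learningQAC0} by combining \cref{theo:QACconc} with the robust junta-learning result \cref{theo:upperboundquantumjuntas} (Classical Shadows on low-weight Paulis plus thresholding) and Parseval's identity, which is exactly the algorithm and analysis you reconstruct inline, with only a cosmetic difference in how the off-junta kept coefficients are counted (your Markov-counting step versus the paper's bound $|\widehat\rho(P)-\widehat\rho''(P)|\le|\widehat\rho(P)|$). Your parameter balancing and the resulting copy complexity $2^{O((\log(s^22^{a}/\eps))^d)}\log(n/\delta)$ match the paper's.
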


The only previous result on learning $\mathsf{QAC}^0$ circuits was \cite[Theorem 39]{NPVY24}, and our \cref{theo:learningQAC0} improves it from $n^{O\rbra{\rbra{\log\rbra{s^2 2^a/\eps}}^d}}$ copies to $2^{O\rbra{\rbra{\log\rbra{s^2 2^a/\eps}}^d}}\log(n)$ copies.\footnote{In a concurrent work, Huang and Vasconcelos extend this result to recover not only the Choi-state but also the unitary defined by the circuit \cite{VH25}. We also note that even for constant $s$ and constant $a$, the number of 1-qubit gates of the circuit can be of the order $nd$, so the results for learning circuits of bounded gate complexity of \cite{ZLK+24} yield upper bounds of the kind $O(n)$, worse than ours $O(\log(n))$ for this regime.}  At this point, it might be unclear why we have chosen to learn the Choi state of the circuit in the $2^n$-Frobenius norm. This is the same figure of merit as the one considered in \cite{NPVY24} (the authors of that work use a slightly different notation), and in \cref{sec:learnQAC} we explain the reason why it is natural to consider it for this learning task.

In addition, in \cref{sec:lowerboundaddress} we use that $\QAC^0$ are close to juntas, and not only to low-degree, to show new lower bounds for computing the address function, which is the canonical example of a low-degree function that depends on many variables. 


\subsection{Our learning algorithms in a nutshell}
All of our algorithms are refinements of the \emph{low-degree algorithm} of Linial, Mansour, and Nisan \cite{LMN93}. To sketch them, for simplicity, we will consider functions $f \colon \cbra{-1,1}^n \to \sbra{-1,1}$. Assume that we are promised that the Fourier spectrum of $f$ is supported on $L$ monomials of degree at most $d$, i.e. 
\begin{align*}
    f\rbra*{x} = \sum_{s\in \sbra*{L}} \widehat{f}\rbra{S_s} \prod_{i \in S_s}x_i
\end{align*}
for some $S_s \subseteq \sbra{n}$ with $\abs{S_s} \le d$. First, we will see how the low-degree algorithm would perform to learn $f$ from samples $\rbra{x,f\rbra{x}}$ where $x$ is uniformly picked from $\cbra{-1,1}^n$.

\fbox{\begin{minipage}{40em}
\begin{center} \textbf{Low-degree algorithm} \end{center}
\vspace{0.2cm}
\textbf{Step 1.} For every $\abs{S} \le d$, obtain $\widehat{f}^\prime(S)$ that approximates $\widehat{f}\rbra{S}$ up to error $\sqrt{\eps/n^d}$. 

\vspace{0.2cm} 
\textbf{Output.} We output $f^\prime\rbra{x} = \sum_{\abs{S} \le d} \widehat{f}^\prime\rbra{S}\prod_{i \in S}x_i$.
\end{minipage}
}

\vspace{0.2cm}
\noindent It is well-known that with $\rbra{1/\alpha^2} \cdot \log\rbra{M}$ samples one can estimate $M$ Fourier coefficients of $f$ up to error $\alpha$, so the low-degree algorithm requires  $\rbra{n^d/\eps^2} \cdot \log\rbra{n^d}$ samples. Now, $f^\prime$ is close to $f$, because 
\begin{align*}
    \sum_{\abs*{S} \le d}\abs*{\widehat{f}\rbra*{S} - \widehat{f}^\prime\rbra*{S}}^2 \le \sum_{\abs{S} \le d}\frac{\eps}{n^d} \le \eps,
\end{align*}
where in the first inequality we have used the guarantees of Step 1, and in the second that $\abs{\cbra{\abs{S} \le d}} \le n^d$. In particular, this implies that $\Pr\sbra{f\rbra{x} \ne \sign\rbra{f^\prime\rbra{x}}} \le \eps$. 

However, note that the low-degree algorithm does not use the fact that $f$ is supported on $L$ monomials out of the $\sim n^d$ low-degree monomials. Using that, one can improve on the low-degree algorithm.

\fbox{\begin{minipage}{40em}
\begin{center} \textbf{Low-degree and sparse algorithm} \end{center}
\vspace{0.2cm}
\textbf{Step 1.} For every $\abs{S} \le d$, obtain $\widehat{f}^\prime\rbra{S}$ that approximates $\widehat{f}\rbra{S}$ up to error $\sqrt{\eps/4L}$. 

\vspace{0.2cm} 
\textbf{Step 2.} For every $\abs{S} \le d$, if $\abs{\widehat{f}^{\prime\prime}\rbra{S}} \le \sqrt{\eps/4L}$, set $\widehat{f}^\prime\rbra{S}=0,$ otherwise set $\widehat{f}^{\prime\prime}\rbra{S} = \widehat{f}^\prime\rbra{S}.$

\vspace{0.2cm}
\textbf{Output.} We output $f^{\prime\prime}\rbra{x} = \sum_{\abs{S} \le d} \widehat{f}^{\prime\prime}\rbra{S}\prod_{i \in S}x_i$.
\end{minipage}
}

\vspace{0.2cm}
Note that Step 1 now just requires $\rbra{L/\eps^2} \cdot \log\rbra{n^d}$ samples, considerably less than the $\rbra{n^d/\eps^2} \cdot \log\rbra{n^d}$ samples of the low-degree algorithm. Also, notice that by adding the rounding of Step 2 we make sure that $\widehat{f}^{\prime\prime}\rbra{S} = 0 $ for $S \notin \cbra{S_1,\dots,S_L}$ and every $S \in \cbra{S_1,\dots,S_L}$ satisfies that $\abs{\widehat{f}\rbra{S} - \widehat{f}^{\prime\prime}\rbra{S}} \le \sqrt{\eps/L}$. Hence, we still have that 
\begin{align*}
    \sum_{\abs*{S} \le d}\abs*{\widehat{f}\rbra*{S} - \widehat{f}^\prime\rbra*{S}}^2 = \sum_{S \in \cbra*{S_1,\dots,S_L}} \abs*{ \widehat{f}\rbra*{S} -\widehat{f}^\prime\rbra*{S} }^2 \le \sum_{ S \in \cbra*{S_1,\dots,S_L}} \frac{\eps}{L} = \eps,
\end{align*}
where in the first equality we have used that $\widehat{f}^{\prime\prime}\rbra{S} = 0$ for every $ S \notin \cbra{S_1,\dots,S_L}$. 

In conclusion, if we are promised that the Fourier or Pauli spectrum of our object is supported on $L \ll n^d$ coefficients of degree at most $d$, one should add a rounding step in the low-degree algorithm. This remark is used by Eskenazis, Ivanisvili and Streck to learn Boolean functions \cite {EIS23}. In this work, we generalize it into broader cases, especially for learning quantum objects.

\section{Preliminaries}

We use $\ell_q$ to denote the $q$-norm with the counting measure, and $L_q$ to denote the $q$-norm with the uniform probability measure. All expectations are taken with respect to the uniform probability measure unless otherwise stated. All logarithms are in base $2$. We denote the Frobenius norm of a matrix $M$ by $\Abs{M}_{\textup{F}} = \sqrt{\Tr\sbra{M^\dagger M}}$, and the trace norm by $\Abs{M}_{\Tr} = \Tr[\sqrt{M^\dagger M}] = \Abs{M}_{S_1}$. For $n \ge 1$, we write $\sbra{n} =
\cbra{1,\dots,n}$. We write $\Id_{\sbra{n}}$ to denote the $2^n \times 2^n$ identity matrix. Given $S \subseteq [n]$, we write $\Id_S$ to denote the identity $2^{\abs{S}} \times 2^{\abs{S}}$ identity matrix acting on the qubits indexed by $S$. 

\subsection{Brief introduction to quantum information}
For an extensive introduction to quantum information, we refer the reader to \cite{NC10,Wil13,Wat18}. A quantum state on $n$ qubits is a positive semidefinite $2^n \times 2^n$ complex matrix with trace $1$. In particular, a probability distribution on $n$ bits defines an $n$-qubit quantum state by embedding the values of the probability in the diagonal of a matrix. Quantum systems are described by states, and the way to extract information from them is via the outcome of measurements. Formally, a measurement on $n$-qubits is a family of positive semidefinite $2^n\times 2^n$ complex matrices that sum to the identity. If $\mathcal X$ is an alphabet, and $\cbra{M_x}_{x \in \mathcal{X}}$ is a measurement, the probability of outputting the outcome $x$ when measuring a state $\rho$ is given by $\Tr\sbra{\rho M_x}$.

The standard way of showing lower bounds for quantum state learning is via an argument of Holevo. To state it, we first introduce the Holevo information of a set of states $\cbra{\rho_i}$, which is given by 
\begin{align*}
    \chi\rbra*{\cbra{\rho_i}}=S\rbra*{\frac{1}{n}\sum_{i\in \sbra*{n}}\rho_i} - \frac{1}{n}\sum_{i\in \sbra*{n}}S\rbra*{\rho_i},
\end{align*}
where $S\rbra*{\rho} = -\Tr\sbra*{\rho\log\rbra*{\rho}}$ is the von Neumann entropy. Now we are ready to write the precise statement we will use to show a lower bound for learning $k$-junta states. For a proof, see \cite[Lemma S14]{MMB+25}. 
\begin{lemma}\label{lem:Holevolowerbound}
    Let $\cbra{\rho_i}_{i\in \sbra*{M}}$ be a family of $M$ states that satisfy $\Abs{\rho_i-\rho_j}_{\Tr} \ge \eps$ for every $i \ne j$. Assume that $T$ copies are sufficient to learn this family of states with probability $\geq 2/3$. Then, 
    \begin{align*}
    \chi\rbra*{\cbra{\rho_i^{\otimes T}}} = \Omega\rbra*{ \log \rbra*{M}}.
    \end{align*}
\end{lemma}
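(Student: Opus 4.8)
The plan is a standard information-theoretic argument: convert the assumed learner into a procedure that identifies the hidden state, and then bound the identification probability by the Holevo quantity via Fano's inequality. First I would pass from learning to identification. Draw an index $I$ uniformly at random from $[M]$, feed $\rho_I^{\otimes T}$ to the learning algorithm, and let $\hat\rho$ be its output; then return the index $\hat I \in [M]$ for which $\norm{\hat\rho - \rho_{\hat I}}_{\mathrm{tr}}$ is smallest (ties broken arbitrarily). On the event that the learner succeeds --- i.e.\ that $\norm{\hat\rho - \rho_I}_{\mathrm{tr}}$ is below half the pairwise separation --- the triangle inequality together with $\norm{\rho_i - \rho_j}_{\mathrm{tr}} \ge \eps$ for $i \ne j$ forces $\rho_I$ to be the unique closest state to $\hat\rho$, so $\hat I = I$. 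Hence this measurement outputs the correct index with probability at least the learner's success probability $\ge 2/3$ (and if the learner's error tolerance is not already a small enough fraction of $\eps$, one first amplifies its success probability by repetition before applying this decoding, at the cost of only constant factors in $T$).

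Next I would invoke the Holevo bound. The construction above defines a classical--quantum setup: a uniform classical message $I \in [M]$ is encoded as the quantum state $\rho_I^{\otimes T}$, and the decoding measurement produces a classical outcome $\hat I$, inducing a joint distribution on $(I,\hat I)$. Holevo's theorem then gives
\[
I(I : \hat I) \;\le\; \chi\!\left(\{\rho_i^{\otimes T}\}\right),
\]
since the right-hand side is exactly the Holevo information of the ensemble $\{(1/M,\rho_i^{\otimes T})\}_{i\in[M]}$ in the notation of the excerpt.

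Finally I would use Fano's inequality to lower bound the left-hand side. Writing $p = \Pr[\hat I \ne I] \le 1/3$ and letting $h(\cdot)$ denote the binary entropy, Fano gives $H(I \mid \hat I) \le h(p) + p\log(M-1) \le h(1/3) + \tfrac13\log M$, so
\[
I(I : \hat I) \;=\; \log M - H(I \mid \hat I) \;\ge\; \tfrac23\log M - h(1/3).
\]
Combining the two displays yields $\chi(\{\rho_i^{\otimes T}\}) \ge \tfrac23\log M - O(1) = \Omega(\log M)$, which is the claim. The information-theoretic core (Holevo $+$ Fano) is entirely routine; the only point that warrants care is the first step, namely checking that the learner's trace-distance guarantee, with its $2/3$ success probability, really does pin down the hidden index given only an $\eps$-separated packing --- i.e.\ tracking the margin constants so that nearest-neighbour decoding is unambiguous.
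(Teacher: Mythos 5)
Your argument is correct and is the standard route — convert the learner into an identification procedure via nearest-neighbour decoding over the $\eps$-separated family, then chain Holevo's bound with Fano's inequality — which is essentially the proof of the source the paper cites for this lemma (it is stated there without proof, deferring to \cite[Lemma S14]{mele2024learning}). Your closing caveat is the right one to flag: the decoding step needs the learner's trace-distance error to be below $\eps/2$, which is the implicit convention here (amplifying the success probability, as you suggest, does not shrink the error, but with the $2/3$ success guarantee and error a sufficiently small fraction of $\eps$ the argument goes through as written).
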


Additionally, we will need two facts about von Neumann entropy. The first is its additivity under the tensor product, 
\begin{equation}\label{eq:additivityentropy}
    S\rbra*{\rho_A \otimes \rho_B} = S\rbra*{\rho_A} + S\rbra*{\rho_B},
\end{equation}
and the second is subadditivity 
\begin{equation}\label{eq:subadditivityentropy}
    S\rbra*{\rho_{AB}} \le S\rbra*{\rho_A} + S\rbra*{\rho_B}.
\end{equation}

\subsection{Fourier and Pauli analysis}
\paragraph{Fourier analysis.}
In this section, we will talk about the space of functions defined on the Boolean hypercube $f \colon \cbra{-1,1}^n \to \R$ endowed with the inner product $\ave{f,g} = \mathbb{E}_x \sbra{f\rbra{x}g\rbra{x}}$, where the expectation is taken with respect to the uniform measure of probability. For $S \subseteq [n]$, the Fourier characters, defined by $\prod_{i \in  S}x_i$, constitute an orthonormal basis of this space. Hence, every $f$ can be identified with a multilinear polynomial via the Fourier expansion
\begin{equation*}
    f=\sum_{S \subseteq \sbra*{n}} \widehat{f}\rbra*{s}\prod_{i \in  S}x_i,
\end{equation*}
where $\widehat{f}\rbra{S}$ are the Fourier coefficients given by
$\widehat{f}\rbra{S} = \mathbb{E}_{x} \sbra{f\rbra{x}\prod_{i \in  S}x_i}$. The degree of $f$ is the minimum $d$ such that $\widehat f\rbra{S} = 0$ if $\abs{S} > d$. We will often use  Parseval's identity: 
\begin{equation*}
    \Abs*{f}_{L_2}^2 = \ave*{f,f} =\sum_{S \subseteq \sbra{n}} \widehat{f}\rbra*{S}^2.
\end{equation*}
For an extensive introduction to Fourier analysis, see~\cite{O'D14}.

\paragraph{Pauli analysis.} In this section we consider the space of $2^n \times 2^n$ complex matrices endowed with the usual inner product $\ave{A,B} =
\frac{1}{2^n}\Tr\sbra{A^\dagger B}$. The Pauli operators $\cbra{I,X,Y,Z}^{ \otimes n }$, where
\begin{align*}
    I=\begin{pmatrix}
     1 & 0\\ 0 &1   
    \end{pmatrix},\quad X=\begin{pmatrix}
     0 & 1\\ 1 & 0   
    \end{pmatrix},\quad Y=\begin{pmatrix}
     0 & -i\\ i &0   
    \end{pmatrix},\quad \text{and}\quad Z=\begin{pmatrix}
     1 & 0\\ 0 &-1   
    \end{pmatrix},
\end{align*}
form an orthonormal basis for this space. 
The Pauli expansion of a matrix $M$ is given by
\begin{equation}
    M = \sum_{P \in \cbra*{I,X,Y,Z}^{\otimes n}} \widehat{M}\rbra*{P} P, 
\end{equation}
where $\widehat{M}\rbra{P} = \ave{P,M}$ are Pauli coefficients of $M$. We will refer to the collection of non-zero Pauli coefficients as the Pauli spectrum of $M$. 
As $\cbra{P}_{P \in \cbra{I,X,Y,Z}^{\otimes n}}$ is an orthonormal basis, we have Parseval's identity
\begin{equation*}
    \ave*{M,M} = \sum_{P \in \cbra*{I,X,Y,Z}^{\otimes n}}\abs*{\widehat{M}\rbra*{P}}^2.
\end{equation*}
Given a matrix $M$, its degree is the minimum $d$ such that $\widehat{M}\rbra{P}=0$ for any $P$ that takes more than $d$ times a non-identity value. This notion of degree for matrices generalizes the classical notion of Fourier degree. For an extensive introduction to Pauli analysis, see~\cite{MO10}.

\subsection{Concentration inequalities} 

We state a few concentration inequalities that we often use. 

\begin{lemma}[Hoeffding bound]\label{lem:hoeffding}
    Let $X_1,\dots,X_m$ be independent-random variables that satisfy $-a_i \le \abs{X_i} \le a_i$ for some $a_i > 0$.
    Then, for any $\tau > 0$, we have
    \begin{align*}
        \Pr\sbra*{\abs*{\sum_{i \in \sbra*{m}} X_i-\sum_{i\in \sbra*{m}} \mathbb{E}\sbra*{X_i}} > \tau}
        \le 2\exp\rbra*{-\frac{\tau^2}{2\rbra*{a_1^2 + \cdots + a_m^2}}}.
    \end{align*}
\end{lemma}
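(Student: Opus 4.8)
The plan is to use the standard exponential-moment (Chernoff--Cram\'er) method. Since $a_i>0$, the hypothesis says simply that $X_i\in[-a_i,a_i]$ almost surely. First I would center the variables, setting $Y_i=X_i-\mathbb E[X_i]$, so that each $Y_i$ has mean zero, the $Y_i$ remain independent, and $Y_i$ ranges over a translate of the range of $X_i$, hence over an interval of length at most $2a_i$.

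The heart of the argument is Hoeffding's lemma: if a random variable $Z$ has $\mathbb E[Z]=0$ and is supported on an interval $[\alpha,\beta]$, then $\mathbb E[e^{sZ}]\le e^{s^2(\beta-\alpha)^2/8}$ for every $s\in\mathbb R$. I would prove this from convexity of $t\mapsto e^{st}$: for $z\in[\alpha,\beta]$ one has $e^{sz}\le\frac{\beta-z}{\beta-\alpha}e^{s\alpha}+\frac{z-\alpha}{\beta-\alpha}e^{s\beta}$; taking expectations and using $\mathbb E[Z]=0$ reduces the right-hand side to $e^{\psi(s(\beta-\alpha))}$ for an explicit function $\psi$ satisfying $\psi(0)=\psi'(0)=0$ and $\psi''\le 1/4$, whence $\psi(h)\le h^2/8$ by a second-order Taylor estimate. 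Applying this to each $Y_i$ with $\beta-\alpha\le 2a_i$ gives $\mathbb E[e^{sY_i}]\le e^{s^2a_i^2/2}$.

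Then I would tensorize and optimize. By independence, $\mathbb E\big[\exp(s\sum_i Y_i)\big]=\prod_i\mathbb E[e^{sY_i}]\le\exp\big(\frac{s^2}{2}A\big)$ where $A=\sum_i a_i^2$. Markov's inequality applied to the nonnegative random variable $\exp(s\sum_i Y_i)$ gives, for every $s>0$, $\Pr[\sum_i Y_i>\tau]\le\exp\big({-s\tau}+\frac{s^2}{2}A\big)$; choosing $s=\tau/A$ makes the exponent equal to $-\tau^2/(2A)$. The same bound for $\Pr[\sum_i Y_i<-\tau]$ follows by replacing $Y_i$ with $-Y_i$, and a union bound over the two tails yields the stated inequality with the factor $2$.

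The only step with any content is Hoeffding's lemma, and within it the elementary estimate $\psi(h)\le h^2/8$; everything else is a direct application of convexity, independence, and a one-variable optimization. This is a textbook result, so the proof is recorded only for completeness and presents no real obstacle.
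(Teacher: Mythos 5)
Your proposal is correct: the centering step, Hoeffding's lemma via convexity of $t\mapsto e^{st}$ with the $\psi(h)\le h^2/8$ estimate, tensorization by independence, Markov's inequality with the choice $s=\tau/\sum_i a_i^2$, and the two-sided union bound together give exactly the stated bound $2\exp\bigl(-\tau^2/(2(a_1^2+\cdots+a_m^2))\bigr)$. The paper states this lemma as a standard concentration inequality without proof, so there is no in-paper argument to compare against; your exponential-moment derivation is the canonical textbook proof and fully suffices.
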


\begin{lemma}[Bernstein inequality]\label{lem:Bernstein} 
    Let $X_1,\dots,X_m$ be independent-random variables with $\abs{X_i} \le M$ for some $M > 0$. Then, 
    \begin{align*}
        \Pr\sbra*{\abs*{\sum_{i \in \sbra*{m}} X_i-\sum_{i \in \sbra*{m}} \mathbb{E}\sbra*{X_i}} > \tau}
        \le
        2\exp\rbra*{-\frac{\tau^2/2}{\sum_{i \in \sbra*{m}}\mathrm{Var}\sbra*{X_i} + \tau M/3}}.
    \end{align*}
\end{lemma}

\subsection{Very brief introduction to \texorpdfstring{$\QAC^0$}{} circuits}\label{sec:learnQAC}
A $\QAC^0$ is a circuit composed of single-qubit gates and Toffoli gates, which are the unitaries defined via 
\begin{align*}
    \ket*{x_1,\dots,x_l,b} \mapsto \ket*{x_1,\dots,x_l,b\cdot \AND\rbra*{x_1,\dots,x_l}},
\end{align*}
where here $x_1,\dots,x_l,b \in \cbra{-1,1}$ and $\AND(x_1,\dots,x_l) = -1$ if and only if $x_1 = \dots = x_l = -1$. Given a $\rbra{n+a+1}$-qubit $\QAC^0$ circuit, one should think of the first $n$ qubits as input qubits, of the next $a$ qubits as auxiliary qubits, and of the last qubit as an output qubit. Also, the last $\rbra{a+1}$ qubits are initialized in a fixed state $\sigma$. Hence, a $\QAC^0$ circuit defines an $n$-to-$1$ qubit channel via 
\begin{align*}
    \Phi_\sigma\rbra*{\rho} = \Tr_{\sbra*{n+a}}\sbra*{U \rbra*{\rho\otimes \sigma} U^\dagger},
\end{align*}
where $U$ is the unitary implemented by the circuit and $\Tr_{\sbra{n+a}}$ is the trace with respect to the input and auxiliary qubits. The Choi state of a $\QAC^0$ circuit is the Choi state of its correspondent channel, namely the $\rbra{n+1}$-qubit state 
\begin{align*}
    \rho_{\Phi_\sigma} = \Phi_\sigma \otimes \Id_{n}\rbra*{{\ket{\EPR_n}\bra{\EPR_n}}},
\end{align*}
where $\ket{\EPR_n}$ is the tensor product of $n$ EPR states.

The original motivation when Moore introduced of $\QAC^0$ circuits was to use them to approximate Boolean functions $f \colon \cbra{-1,1}^n \to \cbra{-1,1}$ \cite{Moo99}, namely to approximate $n$-to-$1$ qubit channels like 
\begin{align*}
    \Phi_f\rbra*{\rho} = \sum_{x \in \cbra{-1.1}^n} \bra{x} \rho \ket{x} \ket{f\rbra{x}}\bra{f\rbra{x}}.
\end{align*}
It is easy to check that the Choi state of these channels is given by 
\begin{equation}\label{eq:ChoiofPhif}
    \rho_{f} = \frac{1}{2^n}\rbra*{I^{\otimes \rbra*{n+1}} + \sum_{S \subseteq \sbra*{n}} \widehat{f}\rbra*{S} Z_S \otimes Z},
\end{equation}
where $Z_S = \otimes_{i \in \sbra{n}}Z^{\delta_{i \in S}}$. Hence, for $f,g \colon \cbra{-1,1}^n \to \cbra{-1,1}$, we have that 
\begin{equation}\label{eq:motivationof2nnorm}
    2^n\abs*{\rho_f - \rho_g}_{F}^2 = 2^{2n} \sum_{P \in \cbra{I,X,Y,Z}^{\otimes \rbra*{n+1}}}\abs*{\widehat{\rho}_f\rbra*{P} - \widehat{\rho}_g\rbra*{P}}^2 
    = \sum_{S \subseteq \sbra*{n}}\abs*{\widehat{f}\rbra*{S} - \widehat{g}\rbra*{S}}^2
    =\Pr\sbra*{f\rbra*{x} \ne g\rbra*{x}},
\end{equation}
where in the first equality we have used Parseval's identity, in the second \cref{eq:ChoiofPhif} and the last equality is elementary. From \cref{eq:ChoiofPhif} follows that learning the Choi state of a $\QAC^0$ circuit in the $2^n$-Frobenius norm is a pretty natural problem. 

\section{Learning Junta Distributions}

In this section, we prove \cref{theo:learnclasjuntas}, which we restate for the reader's convenience.
\theolearnclasjuntas*
We begin by recalling what the usual model for learning distributions is. Given a distribution $p \colon \cbra{-1,1}^n \to \sbra{0,1}$, one can access it by sampling $x \in \cbra{-1,1}^n$ with probability $p\rbra{x}$. The goal of the learner is to use a few samples to output another distribution $p^\prime \colon \cbra{-1,1}^n \to \sbra{0,1}$ that is $\eps$-close to $p$ in total variation distance, which is given by 
\begin{equation*}
    d_{\tv}\rbra*{p,p^\prime} = \frac{1}{2}\Abs*{p-p^\prime}_{\ell_1} = \frac{1}{2}\sum_{x \in \cbra{-1,1}^n} \abs*{p\rbra*{x}-p^\prime\rbra*{x}}.
\end{equation*}
If $p \colon \cbra{-1,1}^n \to \sbra{0,1}$ is a $k$-junta depending on the variables of a set $K \subseteq \sbra{n}$ of size $k$, then it can be written as 
\begin{equation*}
    p\rbra*{x} = \sum_{S \subseteq K}\widehat{p}\rbra*{S}\prod_{i \in S}x_i,
\end{equation*}
where $\widehat{p}\rbra*{S} = \mathbb{E}_{x \in \cbra{-1,1}^n} p\rbra{x}\prod_{i \in S}x_i$ are the Fourier coefficients of $p$. Note that all non-zero Fourier coefficients of a $k$-junta correspond to monomials of degree $\le k$, and there are at most $2^k$ of them. We use this to show a nearly optimal algorithm to learn $k$-junta distributions.

\begin{proof}[ of \cref{theo:learnclasjuntas}]
    Let $T = O\rbra{\frac{2^k}{\eps^2}k\log\rbra{\frac{n}{\delta}}}$ be the number of samples $\rbra{x^1, \dots, x^T}$ we take. For every $S \subseteq \sbra{n}$ with $\abs{S} \le k$ we define the empirical Fourier coefficient 
    \begin{equation*}
        \widehat{p}^\prime\rbra*{S} = \frac{1}{2^n T}\sum_{s \in \sbra*{T}}\prod_{i \in S}x^s_i.
    \end{equation*}
    Then, $\mathbb{E}\sbra{\widehat{p}^\prime\rbra{S}} = \widehat{p}\rbra{S}$. Moreover, by a Hoeffding bound (\cref{lem:hoeffding}) and a union bound over the at most $n^k$ sets of size at most $k$, we have that with probability $\ge 1-\delta$  
    \begin{equation}\label{eq:p'goodapp}
        \abs*{\widehat{p}^\prime\rbra*{S} - \widehat{p}\rbra*{S}} \le \frac{\eps}{ 2\cdot 2^n\sqrt{2^k}} \text{ for every } \abs*{S} \le k.
    \end{equation}
    For every $\abs{S} \le k$, we define 
    \begin{equation}\label{eq:p''def}
        \widehat{p}^{\prime\prime}\rbra*{S} =
        \begin{cases}
            0 &  \text{ if }|\widehat p^\prime(S)|\leq \eps/(2\cdot 2^n\cdot \sqrt{2^k}),\\
            \widehat p^\prime(S) & \text{otherwise.}
        \end{cases}
    \end{equation}
    Now, from \cref{eq:p'goodapp} it follows that if $\widehat{p}\rbra{S} = 0$, then $\widehat{p}^{\prime\prime}\rbra{S} = 0$.
    In particular, suppose $K$ is the set of (at most $k$) variables that $p$ depends on. Then
    \begin{equation}\label{eq:p''goodapp1}
        \widehat{p}^{\prime\prime}\rbra*{S} = 0, \text{ for every } S \not\subseteq K.
    \end{equation}
    In addition, we have that for every $S$ with $\abs{S} \le k$
    \begin{equation}\label{eq:p''goodapp2}
        \abs*{\widehat{p}^{\prime\prime}\rbra*{S} - \widehat{p}\rbra*{S}} \le \frac{\eps}{ 2^n\sqrt{2^k}}.
    \end{equation}
    We define $p^{\prime\prime}\rbra*{x} = \sum_{\abs{S} \le k} \widehat{p}^{\prime\prime}\rbra*{S}\prod_{i\in S}x_i$ and claim that is close to $p$. Indeed,
    \begin{equation*}
    \begin{aligned}
        \Abs*{p-p^\prime}_{L_2}^2 & = \sum_{S \subseteq K}\abs*{\widehat{p}\rbra*{S} - \widehat{p}^{\prime\prime}\rbra*{S}}^2 + \sum_{S \not\subseteq K}\abs*{\widehat{p}^{\prime\prime}\rbra*{S}}^2\\
        & = \sum_{S \subseteq K}\abs*{\widehat{p}\rbra*{S} - \widehat{p}^{\prime\prime}\rbra*{S}}^2\\
        & \le 2^k\frac{\eps^2}{2^{2n}2^k}\\
        & = \frac{\eps^2}{2^{2n}},
    \end{aligned}
    \end{equation*}
    where in the first line we have used Parseval's identity; in the second line we have used \cref{eq:p''goodapp1}; and in the third \cref{eq:p''goodapp2}, and that there are $2^k$ subsets of a set with $k$ elements.
    Hence, $\Abs{p-p^\prime}_{L_2} \le \eps/2^n.$ Finally, as $\Abs{\cdot}_{\ell_1} \le 2^n\Abs{\cdot}_{L_2},$ the result follows. 
\end{proof}

\begin{remark}\label{rem:timecomplexity}
    The algorithm described in the proof of \cref{theo:learnclasjuntas} does not output a distribution, but only a function $p^\prime \colon \cbra{-1,1}^n \to \R$ that is a $k$-junta. However, it is easy to round this $p^\prime$ to a distribution in time $O\rbra{2^k}$ without harming the approximation. Indeed, let $K \subseteq \sbra{n}$ be the set of $k$ variables that $p^\prime$ depends on. We consider $p^\prime|_{K}$ as the restriction of $p^\prime$ to these variables and round all its values to $0$ if they are negative.
    This step only takes time $O\rbra{2^k}$, and it does not harm the approximation because the range of $p$ is contained in $\interval[open right]{0}{\infty}$. Let $p^{\prime\prime}|_{K} \colon \cbra{-1,1}^k \to \interval[open right]{0}{\infty}$ the resulting function and $p^{\prime\prime} \colon \cbra{-1,1}^n \to \interval[open right]{0}{\infty} \colon x \to p^{\prime\prime}|_{K}(x_{K})$ its extension to $n$ variables. 
    We define $C = 2^{n-k} \sum_{x \in \cbra{-1,1}^k}p^{\prime\prime}|_{K}\rbra{x}$, which we can compute in time $O\rbra{2^k}$. As $\Abs{p - p^{\prime\prime}}_{\ell_1} \le \eps$, by triangle inequality we have that 
    \begin{equation*}
        \abs*{1-C} = \abs*{\sum_{x \in \cbra*{-1,1}^n} p\rbra*{x} - \sum_{x \in \cbra*{-1,1}^n} p^{\prime\prime}\rbra*{x}} \le \Abs*{p-p^{\prime\prime}}_{\ell 1}\leq \eps.
    \end{equation*}
    We now define $p^{\prime\prime\prime} \coloneqq p^{\prime\prime}(x)/C$. By construction, $p^{\prime\prime\prime}$ is a probability distribution. In addition, $p^{\prime\prime\prime}$ is $O\rbra{\eps}$-close to $p$, because 
    \begin{equation*}
        \Abs*{p^{\prime\prime\prime}-p}_{\ell_1} \le \Abs*{p^{\prime\prime\prime} -p^{\prime\prime}}_{\ell_1} + \Abs*{p^{\prime\prime}-p}_{\ell_1} \le \rbra*{1-\frac{1}{C}} + \eps \le O\rbra*{\eps},
    \end{equation*}
    where in the second step we have used $C \approx 1\pm \eps$ and that for $\eps = O\rbra{1}$, $\abs{1/(1 \pm \eps)-1} = O\rbra{\eps}$ by Taylor's theorem. Hence, the total time complexity of the algorithm is $O\rbra{n^k \cdot 2^k k\log\rbra{n}/\eps^2}$, coming from computing the $O\rbra{n^k}$ empirical low-degree Fourier coefficients.
\end{remark}

\section{Learning and Testing Quantum Junta States}

In this section, we prove \cref{theo:learnquantumjuntas,theo:testingjuntastates}.
We begin by recalling the usual access model for quantum states~\cite{o2016efficient,HHJ+17}.
We are given copies of $\rho^{\otimes m}$ for $m \in \N$ on which we can measure. We consider the trace distance 
\begin{equation*}
    d_{\tr}\rbra*{\rho,\rho^\prime} = \Abs*{\rho - \rho^\prime}_{\tr} \le \Tr\sbra*{\abs*{\rho-\rho^\prime}}.
\end{equation*}
Note that an $n$-qubit state $\rho$ is a $k$-junta state if and only if it can be written as 
\begin{equation*}
    \rho = \sum_{\substack{ P \in \cbra*{I,X,Y,Z}^{\otimes n} \\ \supp\rbra*{P} \subseteq K}} \widehat{\rho}\rbra*{P} P,
\end{equation*}
for some $K \subseteq \sbra{n}$ of size $k$, where $\widehat{\rho}\rbra{P} = \Tr\sbra{\rho P}/2^n$ are the Pauli coefficients, $\supp\rbra{\otimes_{i\in \sbra{n}} P_i} = \cbra{i\in [n] \colon P_i \ne I}$. We emphasize that the quantum state is the generalization of classical probability distribution and that this generalization extends to the Pauli spectrum several notions related to the Fourier spectrum. Indeed, given a probability distribution $p \colon \cbra{-1,1}^n \to \sbra{0,1}$, it defines a $n$-qubit quantum state 
\begin{equation*}
    \rho_p = \sum_{x \in \cbra*{-1,1}^n} p\rbra*{x}\ket{x}\bra{x},
\end{equation*}
that satisfies $\widehat{\rho}_p\rbra{P} = \widehat{p}\rbra{\supp\rbra{P}}$ if $P \in \cbra{I,Z}^{\otimes n}$, and $\widehat{\rho}_p\rbra{P} = 0$ otherwise. Similarly, the biggest size of the support of a $P$ such that $\widehat{\rho}\rbra{P} \ne 0 $ generalizes the notion of degree. Furthermore, $p$ is a $k$-junta distribution if and only if $\rho_p$ is a $k$-junta state.

\subsection{Learning junta states}

As in the classical case, the non-zero Pauli coefficients of a $k$-junta state correspond to \emph{low-degree} Pauli operators, those with small support, and they are at most $4^k$.
Using this, we could learn $k$-junta states in a similar way that we used to learn $k$-junta distributions if we had a mechanism for learning the low-degree Pauli coefficients. Such a mechanism is the Classical Shadows algorithm by Huang, Kueng, and Preskill \cite{HKP20}, which was later improved by Elben et al. \cite[Sec.II.B.]{EFH+22}. 

\begin{theorem}[\cite{HKP20,EFH+22}]\label{theo:classicalshadows}
    Let $\rho$ be a $n$-qubit state. Then, by performing Pauli measurements on 
    \begin{equation*}
       O\rbra*{\frac{3^k\log\rbra*{\rbra*{3n}^k/\delta}}{2^{2n}\eps^2}}
    \end{equation*}
    single copies \footnote{The unusual factor $2^{2n}$ appears because the Pauli coefficients are the expectations of the Pauli observables over~$2^n$.} of $\rho$, one can output estimates $\widehat{\rho}^\prime\rbra{P}$ such that with success probability $\ge 1 - \delta$ satisfy 
    \begin{equation*}
        \abs*{\widehat{\rho}\rbra*{P} - \widehat{\rho}^\prime\rbra*{P}} \le \eps    
    \end{equation*}
    for every $P \in \cbra{I,X,Y,Z}^{\otimes n}$ with $\abs{\supp\rbra{P}} \le k$.
\end{theorem}

We include a proof of \cref{theo:classicalshadows} that uses a novel Pauli analytic approach inspired by the proof of the non-commutative Bohnenblust-Hille inequality by Volberg and Zhang \cite{VZ23}. 

\begin{proof}[ of \cref{theo:classicalshadows}]
     We will make use of $T = O\rbra{3^k\log\rbra{\rbra{3n}^k/\delta}/(2^{2n}\eps^2}$ copies of $\rho$. Let $B_{Q}$ be a basis that diagonalizes $Q \in \cbra{X,Y,Z}^{\otimes n}$. For every $s \in \sbra{T}$, we will pick $Q^s \in \cbra{X,Y,Z}^{\otimes n}$ independently uniformly at random and measure $\rho$ in the basis $B_{Q^s}$. For every $i \in \sbra{n}$, let $x^s_i = \pm 1$ if the outcome of the $s$-th measurement on the $i$-th qubit is the $\pm 1$ eigen-space of $ Q^s_i$. Then, for every $P \in \cbra{I,X,Y,Z}^{\otimes n}$ we define a empirical estimator of $\widehat{\rho}\rbra{P}$ via
    \begin{equation*}
        \widehat{\rho}^\prime\rbra*{P} = \frac{3^{ \abs*{\supp\rbra*{P}}}}{2^nT}\sum_{s \in \sbra*{T}}\prod_{i \in \supp\rbra*{P}}x_i^s\delta_{P_i = Q_i^s}.
    \end{equation*}
    We claim that $\widehat{\rho}^\prime\rbra{P}$ equals $\widehat{\rho}\rbra{P}$ on expectation. Indeed,
    \begin{equation*}
    \begin{aligned}
        \mathbb{E}\sbra*{\widehat{\rho}^\prime\rbra*{P}} & =\frac{3^{\abs*{\supp\rbra*{P}}}}{2^n}\mathbb{E}_{Q \in \cbra*{X,Y,Z}^{\otimes n}} \prod_{i \in \supp\rbra*{P}}\sum_{x_i \in \cbra*{-1,1}}\Pr_{\rho,B_{Q_i}}\sbra*{x_i}x_i\delta_{P_i=Q_i}\\
        & = \frac{3^{\abs*{\supp\rbra*{P}}}}{2^n} \mathbb{E}_{Q \in \cbra{X,Y,Z}^{\otimes \supp\rbra*{P}}} \prod_{i \in \supp\rbra*{P}}\sum_{x_i \in \cbra*{-1,1}} x_i\Pr_{\rho,B_{Q_i}}[x_i]\delta_{P_i = Q_i}\\
        & = \frac{1}{2^n} \prod_{i \in \supp\rbra*{P}}\sum_{x_i \in \cbra*{-1,1}}x_i\Pr_{\rho,P_i}\sbra*{x_i}\\
        & = \frac{1}{2^n} \prod_{i \in \supp\rbra*{P}}\Tr\sbra*{\rho P_i}\\
        & = \frac{1}{2^n}\Tr\sbra*{\rho P}\\
        & = \widehat{\rho}\rbra*{P},
    \end{aligned}
    \end{equation*}
    the first line is true because the expectation of $\widehat{\rho}^\prime\rbra{P}$ does not change if $T$ changes; the second line follows from the fact that inside $\mathbb{E}_Q$ there is no dependence on the variables outside $\supp\rbra{P}$; the third line is true because the term inside $\mathbb{E}_Q$ is $0$ unless $Q_i = P_i$ for every $i \in \supp\rbra{P}$; and fourth line is true because $\Pr_{\rho,B_{P_i}}\sbra*{x_i} = \Tr\sbra*{\rho\ket{P_i\rbra*{x_i}}\bra{P_i\rbra*{x_i}}}$ where $\ket{P_i\rbra{x_i}}$ is a unit eigenvector of $P_i$ with eigenvalue $x_i$. 
    
    In addition, the second moment (and thus the variance) of $\widehat{\rho}^\prime\rbra{P}$ for $T=1$ is considerably smaller than the trivial upper bound $\mathbb{E}\sbra{\abs{\widehat{\rho}^\prime\rbra{P}}^2} \le \Abs{ \widehat{\rho}^\prime\rbra{P}}^2_{\infty} = 9^{\abs{\supp\rbra{P}}}/4^n$. Indeed, for $T=1$ we have
    \begin{equation*}
    \begin{aligned}
        \abs*{\mathbb{E}\sbra*{\widehat{\rho}^\prime\rbra*{P}}}^2
        & = \frac{9^{\abs*{ \supp\rbra*{P} }}}{4^n} \mathbb{E}_{Q \in \cbra*{X,Y,Z}^{\otimes n}} \prod_{i \in \supp\rbra*{P}} \sum_{x_i \in \cbra*{-1,1}} \Pr_{\rho,B_{Q_i}}\sbra*{x_i}\rbra*{ x_i\delta_{P_i=Q_i}}^2 \\
        & = \frac{9^{\abs*{\supp\rbra*{P}}}}{4^n} \mathbb{E}_{Q \in \cbra*{X,Y,Z}^{\otimes \supp\rbra*{P}}}\prod_{i \in \supp\rbra*{P}} \sum_{x_i \in \cbra*{-1,1}} \Pr_{\rho,B_{Q_i}}\sbra*{x_i}\delta_{P_i = Q_i} \\
        & = \frac{9^{\abs*{\supp\rbra*{P}}}}{4^n} \mathbb{E}_{Q \in \cbra*{X,Y,Z}^{\otimes \supp\rbra*{P}}} \prod_{i \in \supp\rbra*{P}} \delta_{P_i=Q_i} \\
        & = \frac{3^{\abs*{\supp\rbra*{P}}}}{4^n},
    \end{aligned}
    \end{equation*}
    where the second line follows from the fact that the quantity inside $\mathbb{E}_{Q \in \cbra{X,Y,Z}^{\otimes n}}$ does not depend on the variables outside of $\supp\rbra{P}$ and the fact that $\rbra{x_i\delta_{P_i = Q_i}}^2=\delta_{P_i = Q_i};$ and the third line is true because $\sum_{x_i}\Pr_{\rho,B_{Q_i}}\sbra{x_i} = 1$. 
    
    Now, the claimed result follows from the Bernstein inequality and a union bound over the at most $\rbra{3n}^k$ Pauli operators of degree lower than $k$.
\end{proof}

Our algorithm to learn $k$-junta states is robust, in the sense that it also applies in the case of the Pauli spectrum of the state is $\rbra{\eps^2/2^{2n}}$-concentrated on the Pauli coefficients corresponding to $k$-qubits, which is the case where it exists $K \subseteq \sbra{n}$ of size $k$ such that 
\begin{equation*}
    \sum_{\supp\rbra*{P} \not\subseteq K} \abs*{\widehat{\rho}\rbra*{P}}^2\leq\frac{\eps^2}{2^{2n}}.
\end{equation*}

\begin{theorem}\label{theo:upperboundquantumjuntas}
    Let $\rho$ be a $n$-qubit state whose Pauli spectrum is $\rbra{\eps^2/2^{2n}}$-concentrated on a set of $k$ qubits. Then, using 
    \begin{equation*}
        O\rbra*{\frac{12^{k}\log\rbra*{\rbra*{3n}^k/\delta}}{\eps^2}}
    \end{equation*}
    copies of $\rho$ one can output $\rho'$ such that with success probability $\ge 1 - \delta$ satisfies 
    $$
        \sum_{P \in \cbra*{I,X,Y,Z}^{\otimes n}}\abs*{\widehat{\rho}^\prime\rbra*{P} - \widehat{\rho}\rbra*{P}}^2 \le \frac{\eps^2}{2^{2n}}.
    $$
    In particular, $\Abs{\rho'-\rho}_{\tr} \le \eps.$ Furthermore, the algorithm just does Pauli measurements on single copies of the state. 
\end{theorem}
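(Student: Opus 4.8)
The plan is to run the ``low-degree and sparse'' algorithm from the introduction, but over the Pauli basis, using Classical Shadows (\cref{theo:classicalshadows}) as the subroutine that estimates individual Pauli coefficients. First I would set the target precision for each coefficient: I want every estimate $\widehat\rho'(P)$ with $|\supp(P)|\le k$ to satisfy $|\widehat\rho'(P)-\widehat\rho(P)|\le \eps/(2\cdot 2^n\sqrt{4^k})$. By \cref{theo:classicalshadows} applied with error parameter $\eps/(2\cdot 2^n\cdot 2^k)$ and failure probability $\delta$, this costs
\[
O\!\left(\frac{3^k\log((3n)^k/\delta)}{2^{2n}\cdot(\eps/(2\cdot 2^n 2^k))^2}\right)=O\!\left(\frac{3^k\cdot 4^k\log((3n)^k/\delta)}{\eps^2}\right)=O\!\left(\frac{12^k\log((3n)^k/\delta)}{\eps^2}\right)
\]
copies, which is the claimed bound. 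Here I used that the $2^{2n}$ in the denominator of \cref{theo:classicalshadows} cancels against the $2^{2n}$ created by squaring the $1/2^n$ inside the target error.

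Next I would add the rounding (thresholding) step exactly as in the classical proof of \cref{theo:learnclasjuntas}: for each $P$ with $|\supp(P)|\le k$, set $\widehat\rho''(P)=0$ if $|\widehat\rho'(P)|\le \eps/(2\cdot 2^n\sqrt{4^k})$, and $\widehat\rho''(P)=\widehat\rho'(P)$ otherwise; for $|\supp(P)|>k$ set $\widehat\rho''(P)=0$. Let $\rho'=\sum_P \widehat\rho''(P)P$. The point of the threshold is that, on the good event, any $P$ with $\supp(P)\not\subseteq K$ has $|\widehat\rho(P)|$ small — but this is not literally zero here, only $(\eps^2/2^{2n})$-concentrated, so I have to be a bit more careful than in the exactly-junta case. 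Concretely I would split $\sum_P|\widehat\rho''(P)-\widehat\rho(P)|^2$ into (a) $\supp(P)\subseteq K$, where there are at most $4^k$ terms each of size $\le (\eps/(2\cdot 2^n\sqrt{4^k}))^2$ by the estimation guarantee and the fact that thresholding moves a kept coefficient by at most its own threshold and a zeroed coefficient has true value at most $2\cdot(\text{threshold})$; (b) $\supp(P)\not\subseteq K$ with $|\supp(P)|\le k$, where $\widehat\rho''(P)=0$ so the contribution is $\sum|\widehat\rho(P)|^2$ over such $P$, which is at most the concentration bound $\eps^2/2^{2n}$ (times a constant); and (c) $|\supp(P)|>k$, where again $\widehat\rho''(P)=0$ and the contribution is bounded by the concentration bound. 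Adding these with appropriately chosen constants in the threshold gives $\sum_P|\widehat\rho''(P)-\widehat\rho(P)|^2\le \eps^2/2^{2n}$, i.e.\ $\|\rho'-\rho\|_F^2\le \eps^2/4^n$. Finally, since $\rho'-\rho$ is a $2^n\times 2^n$ matrix, $\|\rho'-\rho\|_{\mathrm{tr}}\le 2^{n/2}\|\rho'-\rho\|_F\le \eps$ by Cauchy--Schwarz on singular values, giving the trace-distance conclusion.

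The main obstacle — really the only non-routine point — is bookkeeping the constants so that the three groups of terms each contribute at most a constant fraction of $\eps^2/2^{2n}$; in particular one must notice that the concentration hypothesis is stated for a fixed unknown $K$ of size $k$, and the thresholding guarantees $\widehat\rho''(P)=0$ for \emph{all} $P$ outside degree $k$, so groups (b) and (c) together are controlled by the single quantity $\sum_{\supp(P)\not\subseteq K}|\widehat\rho(P)|^2\le\eps^2/2^{2n}$. Everything else (the Parseval identity, the Frobenius-to-trace conversion, the cancellation of the $2^{2n}$ factors) is identical in spirit to the proof of \cref{theo:learnclasjuntas}, just transported from the Fourier basis on $\{-1,1\}^n$ to the Pauli basis on $(\mathbb{C}^2)^{\otimes n}$, with Classical Shadows replacing empirical Fourier coefficients. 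I would end by noting that the resulting $\rho'$ need not be a valid quantum state, but (as in \cref{rem:timecomplexity}) one can project it onto the PSD trace-one cone without increasing the trace-distance error by more than a constant factor, and that the whole procedure uses only single-copy Pauli measurements.
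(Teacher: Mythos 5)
Your overall architecture (Classical Shadows at per-coefficient precision $\sim\eps/(2^n 2^k)$, giving the $12^k$ copy count, then thresholding and Parseval, then the $2^{n/2}$ Frobenius-to-trace conversion) is the same as the paper's. But there is a genuine gap in how you handle group (b): Pauli operators $P$ with $|\supp(P)|\le k$ but $\supp(P)\not\subseteq K$. You assert that for these $\widehat\rho''(P)=0$, and your closing paragraph repeats this ("the thresholding guarantees $\widehat\rho''(P)=0$ for all $P$ outside degree $k$" -- but that covers only group (c), not (b)). This is false in the approximately-junta setting: the concentration hypothesis only bounds $\sum_{\supp(P)\not\subseteq K}|\widehat\rho(P)|^2$ by $\eps^2/2^{2n}$, so an individual coefficient outside $K$ can be as large as $\eps/2^n$, far above your threshold $\eps/(2\cdot 2^n\sqrt{4^k})$; its estimate then survives the threshold and $\widehat\rho''(P)=\widehat\rho'(P)\neq 0$. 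This is exactly the point the paper treats separately: it proves the extra claim that $|\widehat\rho(P)-\widehat\rho''(P)|\le|\widehat\rho(P)|$ for \emph{every} $P$, so that all surviving out-of-$K$ coefficients can be charged to the concentration hypothesis $\sum_{\supp(P)\not\subseteq K}|\widehat\rho(P)|^2\le\eps^2/2^{2n}$, no matter how many of them there are.

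Moreover, your choice of constants blocks the natural repair. You set the estimation precision equal to the threshold. The paper's inequality $|\widehat\rho(P)-\widehat\rho''(P)|\le|\widehat\rho(P)|$ for a kept coefficient needs: estimate above threshold and error at most precision imply $|\widehat\rho(P)|\ge \text{threshold}-\text{precision}\ge \text{precision}$, which requires the threshold to be at least twice the precision (the paper uses precision $\eps/(4\cdot 2^n\sqrt{4^k})$ against threshold $\eps/(2\cdot 2^n\sqrt{4^k})$). With precision equal to threshold, a coefficient with arbitrarily small nonzero true value can be kept, and in the worst case up to $(3n)^k$ out-of-$K$ low-degree coefficients are kept, each contributing up to $\text{precision}^2$, which overwhelms $\eps^2/2^{2n}$; your "each kept coefficient errs by at most the threshold" bound does not close this. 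The fix is exactly the paper's: take the per-coefficient precision to be half the threshold (this only changes the constant inside the $O(12^k\log((3n)^k/\delta)/\eps^2)$ copy count) and add the lemma $|\widehat\rho(P)-\widehat\rho''(P)|\le|\widehat\rho(P)|$, then split the error sum into $\supp(P)\subseteq K$ (at most $4^k$ terms, each $\le \eps^2/(2^{2n}4^k)$) and $\supp(P)\not\subseteq K$ (bounded by the concentration hypothesis), giving $2\eps^2/2^{2n}$, which is absorbed by rescaling $\eps$. The final PSD-projection remark is harmless but unnecessary: the theorem does not require $\rho'$ to be a valid state.
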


\begin{proof}[ of \cref{theo:upperboundquantumjuntas}]
    Similarly to the the proof of classical case \cref{theo:learnclasjuntas}, we use  $T = O\rbra{\frac{12^{k}\log\rbra{\rbra{3n}^k/\delta}}{\eps^2}}$ copies of the state obtain an estimate $\widehat{\rho}^\prime\rbra{P}$ for every $P$ with $\abs{\supp\rbra{P}} \le k$ such that 
    \begin{equation}\label{eq:qapprox1}
        \abs*{\widehat{\rho}\rbra*{P} - \widehat{\rho}^\prime\rbra*{P}} \le \frac{\eps}{4\sqrt{4^k}2^n}.
    \end{equation} 
    This can be done via Classical Shadows (see \cref{theo:classicalshadows}). Now, for every $P \in \cbra{I,X,Y,Z}^{\otimes n}$ we define 
    \begin{equation*}
        \widehat{\rho}^{\prime\prime}\rbra*{P} = 
        \begin{cases}
            0 & \abs*{\supp\rbra*{P}} > k, \\
            0 & \abs*{\widehat{\rho}^\prime\rbra*{P}} \le \eps/\rbra*{2 \cdot 2^n \cdot \sqrt{4^k}} \text{ and } \abs*{ \supp\rbra*{P} } \le k,  \\
            \widehat{\rho}^\prime\rbra*{P} & \text{otherwise}.    
        \end{cases}
    \end{equation*}
    In particular, from \cref{eq:qapprox1} it follows that for every $S$ with $\abs{ \supp\rbra{S}} \le k$ we have that 
    \begin{equation}\label{eq:qapprox2}
        \abs*{\widehat{\rho}\rbra*{P} - \widehat{\rho}^{\prime\prime}\rbra*{P}} \le \frac{\eps}{2^n\sqrt{4^k}}.        
    \end{equation}
    In addition, we claim that for every $P \in \cbra{I,X,Y,Z}^{\otimes n}$ 
    \begin{equation}\label{eq:qapprox3}
        \abs*{\widehat{\rho}\rbra*{P} - \widehat{\rho}^{\prime\prime}\rbra*{P}} \le \abs*{\widehat{\rho}\rbra*{P}}.
    \end{equation}
    Indeed, the only non-trivial case of \cref{eq:qapprox3} corresponds to $P$ with $\abs{\supp\rbra{P}} \le k$ and $\abs{\widehat{\rho}^\prime\rbra{P}} \ge \eps/\rbra{2\cdot 2^n \cdot \sqrt{4^k}}$. In that case, we have that 
    \begin{equation*}
    \begin{aligned}
        \abs*{\widehat{\rho}\rbra*{P}} 
        & \geq \abs*{ \widehat{\rho}^\prime\rbra*{P} } - \abs*{\widehat{\rho}\rbra*{P} - \widehat{\rho}^\prime\rbra*{P}} \\
        & \geq \eps/\rbra*{4\cdot 2^n \cdot \sqrt{4^k}} \\
        & \geq \abs*{\widehat{\rho}\rbra*{P} - \widehat{\rho}^\prime\rbra*{P}} \\
        & = \abs*{\widehat{\rho}\rbra*{P} - \widehat{\rho}^{\prime\prime}\rbra*{P}},
    \end{aligned}
    \end{equation*}
    where the first is due to triangle inequality, the second line is true because of \cref{eq:qapprox1} and the hypothesis on $P$, the third line again follows from \cref{eq:qapprox1}, and the fourth line is true because of the choice of $P$ and the definition of $\widehat{\rho}^{\prime\prime}\rbra{P}$.

    Finally, we claim that $\rho^{\prime\prime} = \sum_{P}\widehat{\rho}^{\prime\prime}\rbra{P}P$ is a good approximation to $\rho$. Indeed, let $K \subseteq \sbra{n}$ be the subset of qubits where the spectrum of $\rho$ is concentrated on, then
    \begin{equation*}
    \begin{aligned}
        \sum_{P \in \cbra*{I,X,Y,Z}^{\otimes n}} \abs*{\widehat{\rho}\rbra*{P} - \widehat{\rho}^{\prime\prime}\rbra*{P}}^2 
        & = \sum_{P \in \cbra*{I,X,Y,Z}^{\otimes K}} \abs*{\widehat{\rho}\rbra*{P} - \widehat{\rho}^{\prime\prime}\rbra*{P}}^2 + \sum_{P \notin \cbra*{I,X,Y,Z}^{\otimes K}} \abs*{ \widehat{\rho}\rbra*{P} - \widehat{\rho}^{\prime\prime}\rbra*{P} }^2\\
        & \leq \sum_{P \in \cbra*{I,X,Y,Z}^{\otimes K}} \frac{\eps^2}{2^{2n}4^k} + \sum_{P \notin \cbra*{I,X,Y,Z}^{\otimes K}} \abs*{ \widehat{\rho}\rbra*{P} }^2\\
        & \leq 2\frac{\eps^2}{2^{2n}},
    \end{aligned}
    \end{equation*}
    where in the second line we have used \cref{eq:qapprox2,eq:qapprox3}; and in the third line that $\abs{\cbra{I,X,Y,Z}^{\otimes K}} = 4^k$ and that the spectrum of $\rho$ is $\rbra{\eps^2/2^{2n}}$-concentrated on $K$.
\end{proof}

Now we are ready to prove our learning result for $k$-junta states, which we restate for the reader's convenience. 

\theolearnquantumjuntas*

\begin{proof} The upper bound follows from \cref{theo:upperboundquantumjuntas}. The lower bound $\Omega\rbra{4^k/\eps^2}$ follows from the fact that $k$-qubit states are $k$-juntas and the lower bound for learning $k$-qubit states of Haah et al.\ \cite{HHJ+17}. For the lower bound $\Omega\rbra{\log\rbra{n}/\eps^2}$ we will provide a set of $n$ states $\cbra{\rho_i}_{i \in \sbra{n}}$ of $n$ qubits that are $1$-junta, and satisfy
\begin{equation}\label{eq:faraway}
    \Abs*{\rho_i - \rho_j}_{\tr} \ge \eps \text{ if } i \ne j,
\end{equation}
and
\begin{equation}\label{eq:Holevoupper}
    \chi\rbra*{\cbra*{\rho_i^{\otimes T}}} \le T \eps^2 \text{ for every } T \in \N.
\end{equation}
From \cref{eq:faraway,eq:Holevoupper} the lower bound $\Omega\rbra{ \log\rbra{n}/\eps^2 }$ follows from \cref{lem:Holevolowerbound}. For every $\eps \in \interval[open]{0}{1/2}$, we define 
\begin{equation*}
    \rho_{\eps}=\frac{1}{2}\begin{pmatrix}
        1+\eps & 0\\ 0 & 1-\eps
    \end{pmatrix}.
\end{equation*}
For $i \in \sbra{n}$, we define
\begin{equation*}
    \rho_i = \frac{I}{2} \otimes \dots \otimes \underbrace{\rho_\eps}_{\text{$i$-th qubit}} \otimes \dots \otimes \frac{I}{2}.
\end{equation*}
\cref{eq:faraway} holds because if $i \ne j$, then 
\begin{equation*}
    \Abs*{\rho_i-\rho_j}_{\tr} 
    = \Abs*{\rho_\eps \otimes \frac{I}{2} - \frac{I}{2} \otimes  \rho_\eps}_{\tr} 
    = \Abs*{\frac{1}{2}\begin{pmatrix}
        0 & & & \\
        & \eps & & \\
        & & -\eps & \\
        & & & 0
    \end{pmatrix}}_{\tr} 
    = \eps.
\end{equation*}
Proving \cref{eq:Holevoupper} requires just a bit more work. We begin by noting that 
\begin{equation}\label{eq:entroyrhoeps}
    \abs*{S\rbra*{\rho_{\eps}} - 1} \le O\rbra*{\eps^2}
\end{equation}
for every $\eps < 1/2$. Indeed, 
\begin{equation*}
\begin{aligned}
    \abs*{ S\rbra*{\rho_{\eps}} - 1 } 
    & = \abs*{-\sum_{x \in \cbra*{\pm 1}}\frac{1 + x \eps}{2}\log\rbra*{\frac{1+x\eps}{2}} - 1} \\
    & = \abs*{\sum_{x \in \cbra*{\pm 1}}\frac{1 + x \eps}{2}\log\rbra*{1 + x \eps }} \\
    & = \abs*{\sum_{x \in \cbra*{\pm 1}}\frac{1 + x \eps}{2}\rbra*{x \eps + O\rbra*{\eps^2}}} \\
    & = O\rbra*{\eps^2},
\end{aligned}
\end{equation*}
where in the second line we have applied Taylor's theorem. 
We recall that the Holevo information is given by \begin{equation*}
    \chi\rbra*{\cbra*{\rho_i^{\otimes T}}} 
    = \underbrace{S\rbra*{\frac{1}{n}\sum_{i \in \sbra*{n}}\rho_i^{ \otimes T }}}_{(*)} - \underbrace{\frac{1}{n}\sum_{i \in \sbra*{n}} S\rbra*{\rho_i^{ \otimes T }}}_{(**)}.
\end{equation*} 
We will analyze the terms $(*)$ and $(**)$ separately. We begin with $(**)$:
\begin{align*}
    (**) = S\rbra*{\rho_1^{ \otimes T}} 
    = T\rbra*{S\rbra*{\rho_\eps} + \rbra*{n-1}} \ge Tn - O\rbra*{T \eps^2}, 
\end{align*}
where we have applied additivity of the entropy under the tensor product (see \cref{eq:additivityentropy}) and \cref{eq:entroyrhoeps}. The analysis of the term $(*)$ is a bit more involved: 
\begin{equation*}
\begin{aligned}
    (*) & = S\rbra*{\frac{1}{n}\cbra*{\rho_\eps^{ \otimes T} \otimes \rbra*{\frac{I}{2}}^{\otimes T} \otimes \dots \otimes \rbra*{\frac{I}{2}}^{\otimes T} + \dots + \rbra*{\frac{I}{2}}^{\otimes T} \otimes \dots \otimes \rho_\eps^{ \otimes T}} }\\
    & \le nS\rbra*{\frac{1}{n}\cbra*{\rho_\eps^{\otimes T} + \rbra*{n-1} \rbra*{\frac{I}{2}}^{ \otimes T}}}\\
    & \le nTS\rbra*{ \frac{1}{n}\cbra*{\rho_\eps + \rbra*{n-1}\frac{I}{2}} }\\
    & = nTS\rbra*{ \rho_{\frac{\eps}{n}} }\\
    & \le n T + T O\rbra*{\eps^2/n^2},
\end{aligned}
\end{equation*}
where in the second and third lines we have applied subadditivity of the entropy (see \cref{eq:subadditivityentropy}), and in the last line \cref{eq:entroyrhoeps}. Putting the analysis for terms $(*)$ and $(**)$ together, \cref{eq:Holevoupper} follows.
\end{proof}
\subsection{Testing quantum junta states}
In this section, we prove \cref{theo:testingjuntastates}. To do that, we introduce a proxy for the distance of $\rho$ to the space of $k$-junta states. The distance of $\rho$ to the space of $k$-junta state is 
\begin{equation*}
    d\rbra*{\rho, \text{$k$-junta}} \coloneqq \inf_{K \subset \sbra*{n}, \abs*{K} = k, \sigma_K} \Abs*{\rho - \sigma_K \otimes \frac{\Id_{\sbra*{n} - K}}{2^{n-k}}}_{\tr},
\end{equation*}
where $\sigma_K$ is a state on the qubits indexed by $K$.
The proxy is
\begin{equation}\label{eq:upperboundtod}
    \widetilde{d}\rbra*{\rho, \text{$k$-junta}} \coloneqq \inf_{K \subset \sbra*{n}, \abs*{K} = k} \Abs*{\rho - \rho_K \otimes \frac{\Id_{\sbra*{n} - K}}{2^{n-k}}}_{\tr}.
\end{equation}
This proxy $\widetilde{d}$ is equivalent to $d$ up to constant factors, and it is easier to analyze.
\begin{proposition}\label{prop:proxyfordtojuntas}
    Let $\rho$ be an $n$-qubit state and $k \in \N$. Then, 
    \begin{equation}
        d\rbra*{\rho, \text{$k$-\textup{junta}}} 
        \le \widetilde{d}\rbra*{\rho, \text{$k$-\textup{junta}}}
        \le 2d\rbra*{\rho, \text{$k$-\textup{junta}}},
    \end{equation}
    where $d\rbra{ \rho, \text{$k$-\textup{junta}} }$ is the minimum trace distance of $\rho$ to a $k$-junta state.
\end{proposition}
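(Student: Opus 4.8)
The plan is to prove the two inequalities separately: the left one is essentially definitional, and the right one follows from a single triangle inequality combined with the fact that the partial trace is non-increasing for the trace distance (data processing).

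For $d(\rho,k\text{-junta}) \le \widetilde d(\rho,k\text{-junta})$, I would simply observe that $\widetilde d$ is the infimum of the same quantity as $d$ but over a strictly smaller feasible set: in $\widetilde d$ the state on $K$ is pinned to the reduced state $\rho_K := \Tr_{[n]-K}[\rho]$, whereas in $d$ one optimizes over all states $\sigma_K$ on $K$. Hence every value attained in the infimum defining $\widetilde d$ is also attained in the one defining $d$, and the inequality is immediate.

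For $\widetilde d(\rho,k\text{-junta}) \le 2\,d(\rho,k\text{-junta})$, fix a small $\eta>0$ and pick $K\subseteq[n]$ with $|K|=k$ together with a state $\sigma_K$ on $K$ that is $\eta$-almost optimal for $d$, i.e.\ $\norm{\rho - \sigma_K\otimes \Id_{[n]-K}/2^{n-k}}_{\mathrm{tr}} \le d(\rho,k\text{-junta}) + \eta$. I would then bound $\widetilde d$ by plugging this same $K$ into its definition and estimating $\norm{\rho - \rho_K\otimes \Id_{[n]-K}/2^{n-k}}_{\mathrm{tr}}$. By the triangle inequality this is at most $\norm{\rho - \sigma_K\otimes \Id_{[n]-K}/2^{n-k}}_{\mathrm{tr}} + \norm{(\rho_K-\sigma_K)\otimes \Id_{[n]-K}/2^{n-k}}_{\mathrm{tr}}$; the first summand is $\le d+\eta$ by the choice of $\sigma_K$. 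For the second, use multiplicativity of the trace norm under tensor products together with $\norm{\Id_{[n]-K}/2^{n-k}}_{\mathrm{tr}}=1$ to reduce it to $\norm{\rho_K-\sigma_K}_{\mathrm{tr}}$, and then note that $\rho_K=\Tr_{[n]-K}[\rho]$ while $\sigma_K = \Tr_{[n]-K}[\sigma_K\otimes \Id_{[n]-K}/2^{n-k}]$, so contractivity of the trace distance under the channel $\Tr_{[n]-K}$ gives $\norm{\rho_K-\sigma_K}_{\mathrm{tr}} \le \norm{\rho - \sigma_K\otimes \Id_{[n]-K}/2^{n-k}}_{\mathrm{tr}} \le d+\eta$. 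Combining, $\widetilde d \le 2(d+\eta)$, and letting $\eta\to 0$ finishes the proof (alternatively one can invoke compactness of the set of $k$-junta states to see the infimum is attained and avoid $\eta$ altogether).

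The only genuine subtlety — hardly an obstacle — is the conceptual point of choosing the right auxiliary object: one must recognize that the canonical $k$-junta approximant built from the reduced state $\rho_K$ differs from an arbitrary $k$-junta approximant $\sigma_K\otimes \Id_{[n]-K}/2^{n-k}$ only by applying the partial-trace channel on the $K$-block, so switching $\sigma_K\to\rho_K$ costs at most one extra copy of $d$ by that channel's contractivity, which is exactly what produces the factor $2$. Everything else (triangle inequality, multiplicativity of $\norm{\cdot}_{\mathrm{tr}}$ under $\otimes$, data processing) is entirely standard.
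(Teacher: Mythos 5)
Your proposal is correct and follows essentially the same route as the paper: the first inequality by inclusion of feasible sets, and the second by combining the triangle inequality with contractivity of the trace norm under the partial trace $\Tr_{[n]-K}$ (the paper phrases this as monotonicity and then takes infimums, while you handle the infimum via an $\eta$-almost optimizer, a purely cosmetic difference).
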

\begin{proof}
    The first inequality follows from the inclusion of the feasibility region of the infimum of $\widetilde{d}$ in the feasibility region of one of $d$. To prove the second inequality, consider a $k$-junta state $\sigma_K \otimes \Id_{\sbra{n} - K}/2^{n-k}$. By monotonicity of the trace norm we have that $\Abs*{\rho_K-\sigma_K}_{\tr} \le \Abs*{\sigma_K \otimes \Id_{\sbra*{n} - K}/2^{n-k}-\rho}_{\tr}$, so by triangle inequality it follows that 
    \begin{equation*}
        \Abs*{\rho_K \otimes \Id_{\sbra*{n} - K} - \rho}_{\tr} \le 2\Abs*{\sigma_K \otimes \Id_{\sbra*{n} - K}/2^{n-k}-\rho}_{\tr}.
    \end{equation*}
    Now, the second inequality in the statement follows from taking infimums.
\end{proof}

First, we will give the upper bound, in which we will use tomography \cite{o2016efficient} and quantum state certification \cite{BOW19} as subroutines.

\begin{theorem}[\cite{o2016efficient}]\label{theo:tomography}
    Let $\rho$ be a $k$-qubit state. Then, $\Theta\rbra{4^k\log\rbra{1/\delta}/\eps^2}$ copies of $\rho$ are necessary and sufficient to $\eps$-learn $\rho$ in trace distance with success probability $\geq 1 - \delta$.
\end{theorem}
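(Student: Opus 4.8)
The plan is to establish the two directions separately, each by reducing to a standard ingredient: the sufficiency of $O(4^k\log(1/\delta)/\eps^2)$ copies to a collective-measurement tomography routine that controls fidelity, and the necessity to an information-theoretic packing argument.

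\textbf{Upper bound.} Write $d=2^k$. The core subroutine is a tomography algorithm that uses \emph{joint} measurements on $\rho^{\otimes T}$: decompose $(\mathbb C^d)^{\otimes T}$ via Schur--Weyl duality as $\bigoplus_{\lambda}\mathcal P_\lambda\otimes\mathcal Q_\lambda$, where $\lambda$ ranges over Young diagrams with $T$ boxes and at most $d$ rows, perform weak Schur sampling to read off $\lambda$ (whose normalization $\bar\lambda=\lambda/T$ concentrates around the sorted spectrum of $\rho$), and inside the Weyl block $\mathcal Q_\lambda$ apply Keyl's continuous POVM over $U(d)$ to obtain an estimate $\hat U$ of the eigenbasis; output $\hat\rho=\hat U\,\mathrm{diag}(\bar\lambda)\,\hat U^\dagger$. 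The analysis of this algorithm (using asymptotics of the Schur--Weyl distribution and of $\dim\mathcal Q_\lambda$) yields $\mathbb E[\,1-F(\rho,\hat\rho)\,]=O(d^2/T)$. Choosing $T=O(d^2/\eps^2)$ large enough makes the expected infidelity at most $\eps^2/6$, so by Markov and the Fuchs--van de Graaf inequality $\norm{\rho-\hat\rho}_{\mathrm{tr}}\le\sqrt{1-F(\rho,\hat\rho)^2}\le\sqrt{2(1-F(\rho,\hat\rho))}\le\eps$ with probability at least $2/3$. To amplify the success probability to $1-\delta$ I would run $m=O(\log(1/\delta))$ independent instances, producing candidates $\hat\rho_1,\dots,\hat\rho_m$ of which at least one is $\eps$-close with probability $\ge 1-\delta/2$, and then select a good candidate by a tournament that compares pairs using the quantum state certification test of B\u{a}descu, O'Donnell and Wright \cite{buadescu2019quantum}; each comparison costs $O(4^k\log(m/\delta)/\eps^2)$ further copies, for a total of $O(4^k\log(1/\delta)/\eps^2)$ copies.

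\textbf{Lower bound.} For the necessity of $\Omega(4^k/\eps^2)$ copies I would exhibit a large packing: by a Gilbert--Varshamov argument on the $(d^2-1)$-real-dimensional space of traceless Hermitian matrices of operator norm at most $1$, there is a family $\{\sigma_i\}_{i\in[M]}$ with $M=2^{\Omega(d^2)}$ and $\norm{\sigma_i-\sigma_j}_{\mathrm{tr}}=\Omega(1)$, giving $k$-qubit states $\rho_i=\tfrac{1}{d}(I+c\eps\,\sigma_i)$ that are pairwise $\Omega(\eps)$-far in trace distance. Expanding the von Neumann entropy around the maximally mixed state gives $S(\rho_i)\ge k-O(\eps^2)$, while $S(\tfrac1M\sum_i\rho_i^{\otimes T})\le kT$; an accessible-information/Holevo argument (exactly as in Haah et al.\ \cite{haah2017sample}, and in the spirit of \cref{lem:Holevolowerbound}) then forces $T=\Omega(\log M/\eps^2)=\Omega(4^k/\eps^2)$. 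The extra $\log(1/\delta)$ factor is already necessary for two fixed $\eps$-far states: distinguishing them with error probability $\delta$ requires $\Omega(\log(1/\delta))$ worth of relative entropy to be accumulated across the $T$ copies, and each copy contributes only $O(\eps^2)$, so $T=\Omega(\log(1/\delta)/\eps^2)$; the two estimates combine to $\Omega(4^k\log(1/\delta)/\eps^2)$.

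\textbf{Main obstacle.} The delicate part is the analysis of the collective-measurement routine --- in particular proving the $O(d^2/T)$ infidelity bound for Keyl's POVM, which rests on representation-theoretic estimates (concentration of the Schur--Weyl distribution around $\mathrm{spec}(\rho)$ and sharp bounds on Weyl-module dimensions) and genuinely requires entangled measurements across copies, since incoherent single-copy strategies are provably limited to $\Theta(d^3/\eps^2)$. A secondary technical point is the confidence-amplification step: converting the constant-probability guarantee into a $1-\delta$ one with only a $\log(1/\delta)$ overhead needs a quantum hypothesis-selection procedure, for which the certification primitive of \cite{buadescu2019quantum} (already invoked elsewhere in this paper) suffices.
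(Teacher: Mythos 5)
A point of orientation first: the paper does not prove \cref{theo:tomography} at all --- it is imported as a black box from \cite{o2016efficient}, and only the upper-bound direction is actually used (inside the junta-testing algorithm of \cref{prop:kjuntatestupper}). So there is no internal proof to compare against; what you have written is a reconstruction of the argument in the cited literature, and your upper-bound outline is the right one: weak Schur sampling plus Keyl's POVM giving $\mathbb E[1-F(\rho,\hat\rho)]=O(d^2/T)$, then Markov and Fuchs--van de Graaf, with the representation-theoretic estimate correctly identified as the genuinely hard step. Your $\Omega(4^k/\eps^2)$ packing-plus-Holevo argument is likewise the standard one from \cite{haah2017sample} and is sound.

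Two places where the proposal does not deliver the statement as literally written. (i) \emph{The lower bound with the $\log(1/\delta)$ factor.} You prove $\Omega(4^k/\eps^2)$ (at constant success probability) and $\Omega(\log(1/\delta)/\eps^2)$ (two-point distinguishing) separately, and then assert they ``combine to $\Omega(4^k\log(1/\delta)/\eps^2)$''. Two lower bounds for the same task combine only additively, to $\Omega((4^k+\log(1/\delta))/\eps^2)$; a Fano/Holevo argument at error probability $\delta$ still only yields $\Omega(\log M)$, not $\Omega(\log M\cdot\log(1/\delta))$. Establishing (or even believing) the multiplicative form would require a different argument, so as it stands the ``necessary'' half of the $\Theta(4^k\log(1/\delta)/\eps^2)$ claim is not proved by your sketch. (ii) \emph{Confidence amplification.} Running $m=O(\log(1/\delta))$ independent estimates and then comparing candidates pairwise with the certification test of \cite{buadescu2019quantum} costs on the order of $m^2$ comparisons, each carrying a $\log(m/\delta)$ factor, so your accounting gives $O(4^k\log^2(1/\delta)/\eps^2)$-type totals rather than $O(4^k\log(1/\delta)/\eps^2)$ (also note that certifying against a \emph{known} $k$-qubit state costs $\Theta(2^k/\eps^2)$, not $4^k/\eps^2$). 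To get the clean multiplicative $\log(1/\delta)$ you should either invoke the high-confidence version of the tomography theorem directly, or replace the tournament by a hypothesis-selection step whose total copy cost is $O((2^k+\log(m/\delta))/\eps^2)$.
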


\begin{theorem}[\cite{BOW19}] \label{theo:quantumstatecert}
    Let $\rho$ be an unknown $n$-qubit state and $\rho^\prime$ a known $n$-qubit state. Then, $\Theta\rbra{ 2^n\log\rbra{1/\delta}/\eps^2 }$ copies of $\rho$ are necessary and sufficient to test whether $\rho$ is $\eps$-close or $2\eps$-far in trace distance to $\rho'$ with success probability $\ge 1 - \delta.$ 
\end{theorem}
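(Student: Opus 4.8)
Since the statement is quoted from \cite{buadescu2019quantum}, I would reconstruct the argument of Bădescu--O'Donnell--Wright. Write $d=2^n$, fix the known target $\rho'=\sigma$, and set $\Delta=\rho-\sigma$. The natural first move for \emph{sufficiency} is to estimate a distance functional by a U-statistic built from collective measurements on pairs of copies: a SWAP measurement on $\rho\otimes\rho$ returns a $\pm1$ outcome with expectation $\Tr[\rho^2]$, and since $\sigma$ is known, averaging over $\binom{T}{2}$ pairs yields an unbiased estimate of $\Tr[\rho^2]$, hence of $\norm{\Delta}_F^2=\Tr[(\rho-\sigma)^2]=2^n\sum_P|\widehat\Delta(P)|^2$ after subtracting the estimable $\Tr[\rho\sigma]$ and the known $\Tr[\sigma^2]$. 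The overall plan is then: (i) build this pair-based unbiased estimator; (ii) translate the trace-distance gap into a constant-factor gap of the estimated functional; (iii) bound the estimator's variance to pin down the number of copies at $O(d/\eps^2)$; and (iv) amplify the success probability from $2/3$ to $1-\delta$ by running $O(\log(1/\delta))$ independent instances on fresh copies and taking a majority vote, which contributes the $\log(1/\delta)$ factor and gives the claimed $O(2^n\log(1/\delta)/\eps^2)$.

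The crux, and the step I expect to be the main obstacle, is (ii): the flat Frobenius statistic provably fails. In the close case $\norm{\Delta}_{\mathrm{tr}}\le\eps$ forces only $\sum_P|\widehat\Delta(P)|^2\le\eps^2/2^n$ (using $\norm{\Delta}_F\le\norm{\Delta}_{\mathrm{tr}}$), whereas in the far case $\norm{\Delta}_{\mathrm{tr}}\ge 2\eps$ together with $\norm{\Delta}_{\mathrm{tr}}\le\sqrt{2^n}\,\norm{\Delta}_F$ guarantees only $\sum_P|\widehat\Delta(P)|^2\ge 4\eps^2/2^{2n}$, so the two ranges overlap catastrophically for large $n$. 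This is the quantum avatar of the fact that plain $\ell_2$ testing is suboptimal for $\ell_1$ identity testing classically. To beat the $\sqrt{d}$ trace-vs-Hilbert--Schmidt loss one must replace the flat statistic by a $\sigma$-weighted (chi-squared-type) functional such as $\Tr[(\rho-\sigma)\,\sigma^{-1/2}(\rho-\sigma)\,\sigma^{-1/2}]$ together with a \emph{flattening/bucketing} of the spectrum of $\sigma$ into geometrically spaced groups, so that on each bucket the trace distance and the weighted $\ell_2$ distance become comparable. Implementing this quantumly — defining the weighted collective observable, keeping it a bounded valid measurement despite the $\sigma^{-1}$ weights via truncation of the small-eigenvalue part, and carrying out the variance computation in the Schur--Weyl / symmetric-subspace picture — is exactly the BOW technical contribution and is where the real work lies; the payoff is that $O(d/\eps^2)$ copies bring the estimator within a constant fraction of a constant-factor gap, enough to separate the $\eps$ and $2\eps$ cases.

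For \emph{necessity} I would reduce to the hardest special instance, mixedness testing with $\sigma=I/2^n$, which is itself a case of certification, and invoke the O'Donnell--Wright lower bound. Concretely, I would exhibit a hard ensemble $\rho=UDU^\dagger$ with $U$ Haar-random and $D$ a fixed diagonal whose eigenvalues are perturbed to $(1\pm\eps')/d$ in a balanced $\pm$ (Paninski-style) pattern, so that every member is $\Omega(\eps)$-far from $I/d$ in trace distance, and then show that $T$ copies of a random such $\rho$ are statistically indistinguishable from $(I/d)^{\otimes T}$ unless $T=\Omega(d/\eps^2)$. By Schur--Weyl duality the $T$-copy state decomposes along irreducible representations indexed by Young diagrams; the Haar average over $U$ collapses the problem to the induced distribution on this discrete index, and a second-moment ($\chi^2$) computation shows the $\eps'$-perturbation is washed out there until $T$ reaches $\Omega(d/\eps^2)$. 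The main obstacle on this side is precisely this representation-theoretic second-moment estimate. Finally, the $\log(1/\delta)$ in the necessity statement comes from the standard two-point bound: distinguishing a single far hypothesis from $\sigma$ with confidence $1-\delta$ requires $\Omega(\log(1/\delta))$ times the constant-confidence sample size, which combined with the ensemble bound yields $\Omega(2^n\log(1/\delta)/\eps^2)$.
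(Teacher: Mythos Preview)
The paper does not give its own proof of this statement: \cref{theo:quantumstatecert} is simply quoted from \cite{buadescu2019quantum} and used as a black box inside the proof of \cref{prop:kjuntatestupper}. So there is nothing in the paper to compare your argument against beyond the citation itself.

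That said, your reconstruction of the B\u{a}descu--O'Donnell--Wright argument is on the right track at the level of a sketch. You correctly identify the key failure mode of the flat Hilbert--Schmidt estimator (the $\sqrt{d}$ loss between $\norm{\cdot}_{\mathrm{tr}}$ and $\norm{\cdot}_F$), the fix via a $\sigma$-weighted $\chi^2$-type functional together with spectral bucketing of $\sigma$, and the lower bound via reduction to mixedness testing with a Haar-randomized Paninski-style ensemble analyzed through Schur--Weyl duality. These are precisely the ingredients of \cite{buadescu2019quantum}.

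One point to tighten: your justification for the $\log(1/\delta)$ factor in the \emph{lower} bound is not an argument as written. Saying that a two-point bound gives $\Omega(\log(1/\delta))$ and the ensemble bound gives $\Omega(d/\eps^2)$, and then ``combining'' them to $\Omega(d\log(1/\delta)/\eps^2)$, does not follow---two separate lower bounds yield only their maximum, not their product. To get the multiplicative $\log(1/\delta)$ one needs an information-theoretic argument that directly incorporates the failure probability (e.g., a Le Cam/Fano-type bound where the KL or Hellinger budget scales with $\log(1/\delta)$), or else one should state the lower bound more conservatively as $\Omega(d/\eps^2 + \log(1/\delta))$. Since the present paper only \emph{uses} the upper bound of \cref{theo:quantumstatecert}, this wrinkle does not affect anything downstream.
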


We are ready to state our upper bound for quantum junta state testing. 

\begin{theorem}\label{prop:kjuntatestupper}
    Let $\rho$ be an $n$-qubit state. Then, $O\rbra{n^k2^n\log\rbra{n^k/\delta}/\eps^2}$ copies of $\rho$ are sufficient to test whether $\rho$ is $\eps$-close or $7\eps$-far in trace distance from being $k$-junta with success probability $\ge 1 - \delta$. 
\end{theorem}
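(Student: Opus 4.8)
The plan is to reduce the testing problem to $\binom{n}{k}\le n^k$ instances of quantum state certification, one for each candidate set $K$ of relevant qubits. By \cref{prop:proxyfordtojuntas} it is enough to work with the proxy $\widetilde d(\rho,k\text{-junta})$ from \cref{eq:upperboundtod}, and the key observation is that, for a \emph{fixed} $K$ of size $k$, the state minimising the trace distance in \cref{eq:upperboundtod} is the explicit state $\rho_K\otimes \Id_{[n]-K}/2^{n-k}$ with $\rho_K=\Tr_{[n]-K}[\rho]$. The only reason we cannot certify $\rho$ directly against this state is that $\rho_K$ is unknown, so I would first learn it by tomography.

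Concretely, for each $K\subseteq[n]$ with $|K|=k$ we do two things on fresh copies of $\rho$: (i) spend $O(4^k\log(n^k/\delta)/\eps^2)$ copies to produce, via \cref{theo:tomography}, an estimate $\widehat\rho_K$ of the $k$-qubit reduced state $\rho_K$ with $\norm{\widehat\rho_K-\rho_K}_{\mathrm{tr}}\le\eps$; (ii) spend $O(2^n\log(n^k/\delta)/\eps^2)$ copies to run the certification algorithm of \cref{theo:quantumstatecert} deciding whether $\norm{\rho-\widehat\rho_K\otimes \Id_{[n]-K}/2^{n-k}}_{\mathrm{tr}}\le 3\eps$ or $\ge 6\eps$. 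The tester outputs ``$k$-junta'' if step (ii) accepts for at least one $K$, and ``$7\eps$-far'' otherwise. Choosing the hidden constants so that each of the $\le 2n^k$ subroutine calls errs with probability at most $\delta/(2n^k)$, a union bound gives overall success probability $\ge 1-\delta$, and since $4^k\le 2^n$ in the relevant regime the total copy count is $n^k\cdot O\big((4^k+2^n)\log(n^k/\delta)/\eps^2\big)=O\big(n^k 2^n\log(n^k/\delta)/\eps^2\big)$.

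For correctness, assume first $d(\rho,k\text{-junta})\le\eps$. Then $\widetilde d(\rho,k\text{-junta})\le 2\eps$ by \cref{prop:proxyfordtojuntas}, and since there are finitely many $K$ the infimum is attained, so some $K^\ast$ has $\norm{\rho-\rho_{K^\ast}\otimes \Id_{[n]-K^\ast}/2^{n-k}}_{\mathrm{tr}}\le 2\eps$; together with $\norm{\widehat\rho_{K^\ast}-\rho_{K^\ast}}_{\mathrm{tr}}\le\eps$ and the triangle inequality this gives $\norm{\rho-\widehat\rho_{K^\ast}\otimes \Id/2^{n-k}}_{\mathrm{tr}}\le 3\eps$, so the test for $K^\ast$ accepts and the tester outputs ``$k$-junta''. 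Conversely, if the tester outputs ``$k$-junta'' then step (ii) accepted for some $K$, hence $\norm{\rho-\widehat\rho_K\otimes \Id/2^{n-k}}_{\mathrm{tr}}<6\eps$, so $\norm{\rho-\rho_K\otimes \Id/2^{n-k}}_{\mathrm{tr}}<7\eps$ and therefore $d(\rho,k\text{-junta})\le\widetilde d(\rho,k\text{-junta})<7\eps$, i.e.\ $\rho$ is not $7\eps$-far.

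The one thing to get right is the constant bookkeeping: the tomography error $\eps$ is absorbed \emph{twice}, once on each side of the certification gap $3\eps$ versus $6\eps$, so that the completeness case lands strictly inside ``$\eps$-close implies accept'' and the soundness case lands strictly outside ``$7\eps$-far'', and the failure probabilities of all $O(n^k)$ subroutine calls must be suppressed by a factor $n^k$, which is exactly what the $\log(n^k/\delta)$ appearing in the copy bounds of \cref{theo:tomography,theo:quantumstatecert} provides. The conceptual step --- that for each $K$ it suffices to check the single explicit candidate $\rho_K\otimes \Id_{[n]-K}/2^{n-k}$, rather than ranging over all $k$-junta states supported on $K$ --- is immediate from the lower bound $d\le\widetilde d$ in \cref{prop:proxyfordtojuntas}.
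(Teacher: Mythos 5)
Your proposal is correct and matches the paper's proof essentially step for step: the same reduction to at most $n^k$ candidate sets, tomography of each reduced state followed by certification against $\widetilde\rho_K\otimes \Id_{[n]-K}/2^{n-k}$ with the $3\eps$ versus $6\eps$ gap, the same use of \cref{prop:proxyfordtojuntas} with triangle inequalities on both sides, and the same union bound giving $O(n^k2^n\log(n^k/\delta)/\eps^2)$ copies. No gaps to report.
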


\begin{proof}
First, we describe the algorithm. For every $K \subseteq \sbra{n}$ of size $k$, we perform the following procedure: $i)$ we perform tomography on $K$ and obtain an estimate $\widetilde{\rho}_K$ such that $\Abs{\widetilde{\rho}_K - \rho_K}_{\tr} \le \eps$, $ii)$ we test whether $\widetilde{\rho}_K \otimes \Id_{\sbra{n}-K}/2^{n-k}$ is $3\eps$-close or $6\eps$-far from $\rho$. If we find that any of the $\widetilde{\rho}_K \otimes \Id_{\sbra{n} - K}/2^{n-k}$ is close to $\rho$, we output that $\rho$ is close to a $k$-junta, and otherwise we output that it is far. 

Second, we analyze the complexity. For every $K \subseteq \sbra{n}$ of size $k$, in order to succeed with probability $1-\delta/n^k$ (so the total succeed probability is $\ge 1-\delta$ by a union bound), the step $i)$ uses $\Theta\rbra{4^k\log\rbra{n^k/\delta}/\eps^2}$ copies of $\rho$, via \cref{theo:tomography}, and the step $ii)$ uses $O\rbra{2^n\log\rbra{n^k/\delta}/\eps^2}$ copies, via \cref{theo:quantumstatecert}. As the step $ii)$ dominates (if $k < n/2$), and as there are at most $O\rbra{n^k}$ subsets of $\sbra{n}$ of size $k$, the total number of copies used are $O\rbra{n^k2^n\log\rbra{n^k/\delta}/\eps^2}$.

Finally, we analyze the correctness. If $\rho$ is $\eps$-close to being a $k$-junta, then by \cref{prop:proxyfordtojuntas} and triangle inequality, there exists $K$ such that
\begin{equation*}
    \Abs*{\widetilde{\rho}_K \otimes \Id_{\sbra*{n}-k}/2^{n-k}-\rho}_{\tr} \le 3\eps,
\end{equation*}
so the algorithm would output that $\rho$ is close to a $k$-junta, as desired. If $\rho$ is $7\eps$-far from being a $k$-junta, then by \cref{prop:proxyfordtojuntas} and triangle inequality follow that
\begin{equation*}
    \Abs*{\widetilde{\rho}_K \otimes \Id_{\sbra*{n} - k}/2^{n-k}-\rho}_{\tr} \ge 6\eps,
\end{equation*}
for every $K \subseteq \sbra{n}$ of size $k$. Hence, the algorithm outputs that $\rho$ is far from junta, as desired. 
\end{proof}

We now focus on the lower bound for junta testing. To do that, we reduce the problem of testing whether an unknown state is the maximally mixed state to the problem of testing the maximally mixed state \cite{OW21}. 
\begin{theorem}[\cite{OW21}]\label{theo:maximallymixedtest}
    Let $\rho$ be an unknown $n$-qubit state. Then, $\Theta\rbra{2^n\log\rbra{1/\delta}/\eps^2}$ copies of $\rho$ are necessary and sufficient to test whether $\rho$ is equal or $\eps$-far in trace distance to the maximally mixed state with success probability $\ge 1 - \delta.$ 
\end{theorem}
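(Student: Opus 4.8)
The plan is to prove the upper and lower bounds separately, in both cases routing through the Frobenius norm, which is far more tractable than the trace norm for this problem. For the upper bound, the first step is the norm comparison $\norm{A}_{\mathrm{tr}}\le \sqrt{2^n}\,\norm{A}_F$ (Cauchy--Schwarz on singular values) applied to $A=\rho-\Id_{[n]}/2^n$. Since $\norm{\rho-\Id_{[n]}/2^n}_F^2=\Tr[\rho^2]-2^{-n}$, being $\eps$-far in trace distance forces the purity gap $\Tr[\rho^2]-2^{-n}\ge \eps^2/2^n$, whereas $\rho=\Id_{[n]}/2^n$ gives $\Tr[\rho^2]=2^{-n}$. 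Thus it suffices to estimate the purity $\Tr[\rho^2]$ to additive accuracy $\eps^2/(3\cdot 2^n)$ and threshold the estimate at $(1+\tfrac12\eps^2)/2^n$.

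To estimate the purity I would use the unbiased degree-two $U$-statistic built from two-copy SWAP measurements: with $m$ copies, measure the $\mathrm{SWAP}$ observable on all $\binom{m}{2}$ pairs and average, obtaining $\widehat P$ with $\mathbb E[\widehat P]=\Tr[\rho^2]$. The decisive point is the variance, which for a degree-two $U$-statistic is $\mathrm{Var}[\widehat P]=O(\zeta_1/m+\zeta_2/m^2)$ with $\zeta_1=\Tr[\rho^3]-\Tr[\rho^2]^2$ and $\zeta_2=O(1)$. At the maximally mixed state $\zeta_1=2^{-2n}-2^{-2n}=0$, so the variance collapses to the $\zeta_2/m^2$ regime; and for the worst-case far states the spectral perturbation must be spread out (minimizing $\norm{\cdot}_F$ subject to $\norm{\cdot}_{\mathrm{tr}}\ge\eps$ forces all eigenvalue deviations to be $\approx\eps/2^n$), so that $\Tr[\rho^3]=2^{-2n}(1+O(\eps^2))$ and $\zeta_1=O(\eps^2/2^{2n})$ remains negligible against $\zeta_2/m^2$. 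Hence $\mathrm{sd}[\widehat P]=O(1/m)$ uniformly over the relevant instances, and resolving the gap $\eps^2/2^n$ needs only $m=O(2^n/\eps^2)$. Chebyshev gives constant success probability, and a median-of-means over $O(\log(1/\delta))$ independent blocks boosts it to $1-\delta$, for a total of $O(2^n\log(1/\delta)/\eps^2)$ copies.

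For the matching lower bound I would use the standard ``hidden rotation'' two-point method. Fix a traceless Hermitian $D$ with flat spectrum $\pm 2^{-n}$, so that $\rho_U=\Id_{[n]}/2^n+\eps\,UDU^\dagger$ is a valid state that is $\Theta(\eps)$-far from maximally mixed for every unitary $U$, and let $U$ be Haar-random. Any $m$-copy tester is handed either $(\Id_{[n]}/2^n)^{\otimes m}$ or the mixture $\mathcal E=\mathbb E_U[\rho_U^{\otimes m}]$, and its advantage is at most $\tfrac12\norm{\mathcal E-(\Id_{[n]}/2^n)^{\otimes m}}_{\mathrm{tr}}$, which I would control through the $\chi^2$-divergence. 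Writing $\delta_U=\eps\,UDU^\dagger$ and using $(\sigma^{\otimes m})^{-1}=2^{nm}\Id$ for $\sigma=\Id_{[n]}/2^n$, one gets the clean identity $\chi^2+1=\mathbb E_{U,U'}\big[(1+2^n\Tr[\delta_U\delta_{U'}])^m\big]$. The first-order term vanishes because $\mathbb E_{U'}[U'DU'^\dagger]=0$, and the second-order term is evaluated with the Weingarten identity $\mathbb E_U[UDU^\dagger\otimes UDU^\dagger]=\tfrac{\Tr[D^2]}{2^{2n}-1}(\mathrm{SWAP}-\Id_{[n]}/2^n)$, which yields $\mathbb E_{U,U'}[(\Tr[\delta_U\delta_{U'}])^2]=\eps^4(\Tr[D^2])^2/(2^{2n}-1)=O(\eps^4/2^{4n})$. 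Summing over the $\binom{m}{2}$ pairs gives $\chi^2=O(m^2\eps^4/2^{2n})$, which is $o(1)$ precisely when $m=o(2^n/\eps^2)$; the $\log(1/\delta)$ factor then follows from the standard amplification argument that driving the error below $\delta$ costs an extra $\Omega(\log(1/\delta))$ factor.

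The main obstacle I anticipate is the bookkeeping in this second-moment computation: getting the normalization of the quantum $\chi^2$ exactly right (the correct power of $2^n$ in the identity above is what produces the $2^n/\eps^2$ threshold rather than a spurious $1/\eps^2$), and verifying that the higher-order ($k\ge 3$) terms in the expansion of $(1+2^n\Tr[\delta_U\delta_{U'}])^m$ are genuinely lower order in the regime $m=\Theta(2^n/\eps^2)$. On the upper-bound side, the analogous care is needed to bound $\mathrm{Var}[\widehat P]$ uniformly over all $\eps$-far states rather than only the spread-out extremal ones, which I would handle by splitting into the near-maximally-mixed case (where the $\zeta_1=0$ cancellation dominates) and the large-purity case (where the signal gap itself is so large that a crude variance bound suffices).
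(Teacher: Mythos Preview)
The paper does not prove this statement at all: \cref{theo:maximallymixedtest} is quoted verbatim from \cite{odonnell2015quantum} and used only as a black box inside the proof of \cref{prop:kjuntatestlower}. So there is no in-paper argument to compare your proposal against.

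As a standalone sketch your outline is in the right spirit, but two points deserve scrutiny. First, on the upper bound you write ``measure the $\mathrm{SWAP}$ observable on all $\binom{m}{2}$ pairs and average''; with only $m$ physical copies this is not directly implementable, since measuring one pair disturbs both copies. What makes the $U$-statistic heuristic rigorous is weak Schur sampling on $\rho^{\otimes m}$, from whose Young-diagram outcome one extracts an estimator with exactly the variance profile you describe; this is precisely the route taken in \cite{odonnell2015quantum}, so your ``different'' estimator is really a shorthand for theirs. Relatedly, the uniform variance bound over all $\eps$-far states is not just bookkeeping: for states like $\rho=\tfrac12\ket{0}\bra{0}+\tfrac12\,\Id_{[n]}/2^n$ one has $\zeta_1=\Theta(1)$, far larger than $\eps^2/2^n$, and the argument only closes because for such $\rho$ the purity gap itself is $\Theta(1)$ rather than $\eps^2/2^n$, so one must compare $\sqrt{\zeta_1/m}$ to the \emph{actual} signal $\Tr[\rho^2]-2^{-n}$, not to the threshold. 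Second, your last sentence on the lower bound is not a valid step: there is no general ``reverse amplification'' that turns a constant-error lower bound into an $\Omega(\log(1/\delta))$ multiplicative factor, and the cited work establishes the $\Omega(2^n/\eps^2)$ bound for constant success probability.
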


\begin{theorem}\label{prop:kjuntatestlower}
    Let $\rho$ be an $n$-qubit state. Then, $\Omega\rbra{2^{n-k}\log\rbra{1/\delta}/\eps^2}$ copies of $\rho$ are necessary to test whether $\rho$ is a $k$-junta or $\eps$-far in trace distance from it with success probability $\ge 1 - \delta$. 
\end{theorem}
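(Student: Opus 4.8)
The plan is to reduce from testing whether an unknown state is maximally mixed. Writing $m:=n-k$, \cref{theo:maximallymixedtest} says that distinguishing whether an $m$-qubit $\sigma$ equals $I/2^{m}$ or is $\eps$-far from it in trace distance requires $\Omega(2^{n-k}\log(1/\delta)/\eps^2)$ copies. Given such a $\sigma$, I would run a hypothetical $k$-junta tester on the $n$-qubit state $\tau:=\tfrac{I_{[k]}}{2^{k}}\otimes\sigma$; since one copy of $\tau$ is obtained from one copy of $\sigma$ plus a fixed ancilla, a $T$-copy junta tester yields a $T$-copy maximal-mixedness tester. Completeness is immediate: if $\sigma=I/2^{m}$ then $\tau=I/2^{n}$, a $0$-junta and hence a $k$-junta, so the tester accepts.

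Soundness is the content: I need that $\sigma$ being $\eps$-far from $I/2^{m}$ forces $\tau$ to be $\Omega(\eps)$-far from every $k$-junta state. Monotonicity of the trace distance under partial trace gives $d(\tau,k\text{-junta})\ge d(\sigma,k\text{-junta})$, the right-hand distance being over $m$-qubit juntas: tracing out the first $k$ qubits turns $\norm{\tau-\rho_K\otimes I_{[n]-K}/2^{n-k}}_{\mathrm{tr}}$ into at least $\norm{\sigma-\rho_C\otimes I_{[m]-C}/2^{m-|C|}}_{\mathrm{tr}}$, where $C$ is the part of $K$ lying among $\sigma$'s qubits (so $|C|\le k$) and $\rho_C$ is a state on $C$. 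Combining this with \cref{prop:proxyfordtojuntas}, it suffices to exhibit hard instances $\sigma$ for the maximal-mixedness lower bound with $\inf_{|C|\le k}\norm{\sigma-\sigma_C\otimes I_{[m]-C}/2^{m-|C|}}_{\mathrm{tr}}=\Omega(\eps)$, where $\sigma_C=\Tr_{[m]\setminus C}[\sigma]$.

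I expect this robustness — that the hard instances are far from \emph{being a junta}, not merely far from the maximally mixed state — to be the main obstacle, since a generic $\eps$-far state fails it (a $1$-junta can be $\eps$-far from $I/2^{m}$ yet exactly a junta). I would take hard instances of the form $\sigma=\tfrac{1}{2^{m}}(I+\eps G)$ for a random traceless Hermitian contraction $G$ with $\Tr[G^2]=2^{m}$, and check that with high probability the Pauli mass $\sum_{\supp(P)\subseteq C}|\widehat G(P)|^2$ of $G$ on any fixed $C$ of size $\le k$ is $o(1)$: each coefficient has magnitude $\approx 2^{-m}$, so the $\le 4^{k}$ low-support Paulis contribute $4^{k-m}\ll1$, and one union-bounds over the $\le\binom{m}{k}$ sets $C$. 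Writing the error to the corresponding junta as $\sigma-\sigma_C\otimes I_{[m]-C}/2^{m-|C|}=\tfrac{\eps}{2^{m}}\bigl(G-\tfrac{1}{2^{m-|C|}}\Tr_{[m]\setminus C}[G]\otimes I_{[m]-C}\bigr)$, the bracketed matrix has operator norm $O(1)$ but Frobenius norm $\Omega(\sqrt{2^{m}})$ (its Pauli mass outside $C$ is $1-o(1)$), hence trace norm at least $\norm{\cdot}_F^2/\norm{\cdot}_\infty=\Omega(2^{m})$, which yields $\norm{\sigma-\sigma_C\otimes I_{[m]-C}/2^{m-|C|}}_{\mathrm{tr}}=\Omega(\eps)$ for every such $C$. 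Fixing a good $G$ then completes the reduction at junta-tester parameter $\Theta(\eps)$, so $T=\Omega(2^{n-k}\log(1/\delta)/\eps^2)$. (One can also avoid inspecting the O'Donnell--Wright instances and use the explicit family $\sigma=\tfrac{1}{2^{m}}\bigl(I+\tfrac{\eps}{\sqrt L}\sum_{j\le L}Z_{B_j}\bigr)$ over $L\approx\min(\lfloor m/(k+1)\rfloor,1/\eps^2)$ disjoint $(k+1)$-qubit blocks $B_j$: deleting any $k$ qubits kills at most $k$ of the blocks, so by Khintchine's inequality $\sigma$ stays $\Omega(\eps)$-far from maximal mixedness — hence from every junta — on the remaining qubits, while $\Omega(2^m/\eps^2)$ indistinguishability from $I/2^m$ can be argued directly.)
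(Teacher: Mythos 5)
Your reduction skeleton (pad $\sigma$ to an $n$-qubit state, use monotonicity under partial trace, and invoke \cref{prop:proxyfordtojuntas}) is fine, and you correctly identify the crux: with maximally mixed padding $\tau=\tfrac{I_{[k]}}{2^k}\otimes\sigma$, soundness needs the hard instances to be far from \emph{every} $k$-junta, not just from $I/2^m$. But your way of supplying such instances has a genuine gap. First, \cref{theo:maximallymixedtest} is a black-box statement about a task; to argue as you do, you must open up its proof, take the specific hard \emph{ensemble} (Haar-rotated $\tfrac{1}{2^m}(I+\eps G)$), prove the Pauli-mass concentration you sketch, and then argue that conditioning on the good event (probability $1-o(1)$) still forces $\Omega(2^m/\eps^2)$ copies. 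Your closing step ``fixing a good $G$ then completes the reduction'' is wrong as stated: once $G$ is a single fixed, known matrix, distinguishing $\tfrac{1}{2^m}(I+\eps G)$ from $I/2^m$ takes only $O(1/\eps^2)$ copies (estimate the observable $G$), so no $2^{m}/\eps^2$ hardness survives. The lower bound must be run against the (conditioned) ensemble, and that conditioning argument is the part you cannot skip. Second, the parenthetical ``explicit family'' $\sigma=\tfrac{1}{2^m}\bigl(I+\tfrac{\eps}{\sqrt L}\sum_j Z_{B_j}\bigr)$ is not hard for mixedness testing at all: its deviation from $I/2^m$ lives on $L\leq 1/\eps^2$ known low-weight Pauli terms, so measuring the block parities in the computational basis distinguishes it with $\mathrm{poly}(1/\eps)$ copies, nowhere near $\Omega(2^{n-k}/\eps^2)$; the claim that indistinguishability ``can be argued directly'' for this family is false.

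The missing idea is that one can choose the padding so that the robustness problem disappears and \cref{theo:maximallymixedtest} can be used as a black box. The paper sets $\rho=\ket{0^k}\bra{0^k}\otimes\rho'$ with $\rho'$ on $n-k$ qubits. Then for any candidate junta set $K\neq[k]$ there is a pure qubit $i\in[k]\setminus K$, and restricting to that qubit gives $\norm{\rho_K\otimes I_{[n]-K}/2^{n-k}-\rho}_{\mathrm{tr}}\geq\norm{\tfrac{I}{2}-\ket{0}\bra{0}}_{\mathrm{tr}}=1$; for $K=[k]$, monotonicity reduces the distance to $\norm{I_{[n]-K}/2^{n-k}-\rho'}_{\mathrm{tr}}$. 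Combined with \cref{prop:proxyfordtojuntas}, \emph{every} $\rho'$ that is $2\eps$-far from maximally mixed yields a $\rho$ that is $\eps$-far from all $k$-juntas, so a junta tester immediately solves mixedness testing on $n-k$ qubits with no knowledge of, or conditions on, the hard instances. If you want to salvage your route, replace ``fix a good $G$'' with the conditioned-ensemble argument and drop the explicit family; but the pure-padding reduction is both shorter and avoids re-proving anything about the O'Donnell--Wright construction.
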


\begin{proof}
    Assume that we had an algorithm $\calA$ able to test whether $\rho$ is a $k$-junta state or $\eps$-far from it. We further assume that $\rho$ equals $\ket{0^k}\bra{0^k} \otimes \rho^\prime$ for a $\rbra{n-k}$-qubit state $\rho^\prime$. We claim two things: $i)$ if $\rho^\prime$ is the maximally mixed state, then $\rho$ is a $k$-junta state; $ii)$ if $\rho^\prime$ is $2\eps$-far from the maximally mixed state, then $\rho$ is $\eps$-far from $k$-junta. Then, $\calA$ would be able to test whether the $\rbra{n-k}$-qubit state $\rho^\prime$ is maximally mixed or $2\eps$-far from it, so by \cref{theo:maximallymixedtest} follows that $\calA$ uses $\Omega\rbra{2^{n-k}\log\rbra{1/\delta}/\eps^2}$ copies of $\rho$.

    We now prove $i)$ and $ii)$. $i)$ follows from the definition. To prove $ii)$, let $K \subset \sbra{n}$ of size $k$. If $K = \sbra{k}$, then by monotonicity of the trace norm
    \begin{equation}\label{eq:lowerboundtest1}
        \Abs*{\rho_K \otimes \frac{\Id_{\sbra*{n}-K}}{2^{n-k}}-\rho}_{\tr} \ge \Abs*{\frac{\Id_{\sbra*{n} - K}}{2^{n-k}}-\rho'}_{\tr} \ge 2\eps.
    \end{equation}
    If $K \ne \sbra{k}$, then there is $i \in \sbra{k}$ such that $i \notin K$, so again by monotonocity of the trace norm
    \begin{equation}\label{eq:lowerboundtest2}
        \Abs*{\rho_K \otimes \frac{\Id_{\sbra*{n}-K}}{2^{n-k}}-\rho}_{\tr} \ge \Abs*{\frac{\Id_{\cbra*{i}}}{2}-\ket{0}\bra{0}}_{\tr} = 1.
    \end{equation}
    Finally, $ii)$ follows from \cref{eq:lowerboundtest1,eq:lowerboundtest2,prop:proxyfordtojuntas}.
\end{proof}

\begin{proof}[ of \cref{theo:testingjuntastates}]
    The proof follows from \cref{prop:kjuntatestupper,prop:kjuntatestlower}.
\end{proof}

\section{\texorpdfstring{$\QAC^0$}{} Circuits}

\subsection{\texorpdfstring{$\QAC^0$}{} circuits are close to juntas}

Nadimpalli et al.\ showed that, for fixed depth and size, the Pauli spectrum of the Choi state of a $\QAC^0$ circuit is concentrated on low-degree coefficients \cite[Theorem 18]{NPVY24}. However, we notice that something stronger holds, namely that these states are close to being juntas.

\begin{theorem}\label{theo:QACconc}
    Let $\rho$ be the Choi state of $\rbra{n+a+1}$ $\QAC^0$ circuit of depth $d$ and size $s$ and let $\eps > 0$. Then, there exists a set $K \subseteq \sbra{n+1}$  with $ \abs{K} \le \rbra*{\log\rbra*{2^a s^2/\eps}}^d$ such that 
    \begin{equation*}
        \sum_{\supp\rbra*{P} \not\subseteq K} \abs*{\widehat{\rho}\rbra*{P}}^2 \le \frac{\eps}{2^{2n}}.
    \end{equation*}
\end{theorem}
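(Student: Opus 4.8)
The plan is to bootstrap \cref{theo:QACconc} from the low-degree concentration result of Nadimpalli et al.~\cite[Theorem~18]{nadimpalli2023pauli}, which states that the Pauli spectrum of the Choi state $\rho$ is $(\eps/2^{2n})$-concentrated on Pauli operators of degree at most $\Delta := (\log(2^a s^2/\eps))^d$. The key observation is that concentration on \emph{low degree} does not by itself give concentration on a small \emph{junta}, but the specific structure used to prove the low-degree bound actually exhibits the relevant coordinates explicitly. So first I would revisit the argument of~\cite{nadimpalli2023pauli}: their proof proceeds by a light-cone / switching-type analysis that shows each gate in the circuit can only influence a bounded set of output qubits, and that the Pauli weight above degree $\Delta$ is small. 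I would extract from that argument not just the degree bound but the fact that the high-weight Pauli mass is supported on operators whose support lies in a fixed set $K$ of size at most $\Delta$ — namely the set of qubits reachable through the circuit's structure from a bounded-size "core" of gates.

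More concretely, the second step is to identify $K$. Since the circuit has size $s$ and depth $d$, one runs the same counting that produces the $\Delta = (\log(2^as^2/\eps))^d$ bound: the relevant monomials in the Pauli expansion of $\rho$ come from compositions of at most $\log(2^a s^2/\eps)$ gates per layer across $d$ layers, and the union of the (input-side) supports of these gates has size at most $\Delta$. Declaring $K$ to be this union (intersected with $[n+1]$), I then need to show $\sum_{\supp(P)\not\subseteq K}|\widehat\rho(P)|^2 \le \eps/2^{2n}$. This should follow because any Pauli operator $P$ with $\supp(P)\not\subseteq K$ must, by the light-cone structure, have degree exceeding $\Delta$ (every qubit outside $K$ is "hard to reach" and contributes to degree), so the sum over such $P$ is dominated by the high-degree tail that Nadimpalli et al.\ already bounded by $\eps/2^{2n}$. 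Alternatively, if a cleaner black-box route exists, one could argue that restricting $\rho$ to the qubits outside a well-chosen $K$ yields (approximately) the maximally mixed state, using that the channel acts trivially on those qubits up to small error.

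The main obstacle I anticipate is precisely the passage from "low-degree concentration" to "junta concentration": a priori a degree-$\Delta$ state can have all $n+1$ qubits in its support (the address function is the standard cautionary example, and indeed the authors flag it later in \cref{sec:lowerboundaddress}). So the crux is to verify that the proof of \cite[Theorem~18]{nadimpalli2023pauli} genuinely localizes the heavy Pauli mass to a \emph{fixed} small coordinate set, rather than merely bounding degree — i.e., that the "heavy" part of $\rho$'s Pauli spectrum comes from a bounded-size sub-circuit whose output light cone is $K$, with everything else being (close to) identity on the complementary qubits. If that structural refinement holds (which the authors assert it does — "they implied something stronger"), then the rest is the same truncation bookkeeping as in \cref{theo:upperboundquantumjuntas}. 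I would therefore spend most of the write-up making that localization precise and deferring the routine weight estimates to the cited theorem.
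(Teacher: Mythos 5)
There is a genuine gap. Your reduction hinges on the claim that any Pauli $P$ with $\supp(P)\not\subseteq K$ must have degree exceeding $\Delta$, so that its mass is absorbed by the low-degree concentration of \cite[Theorem 18]{nadimpalli2023pauli}. That claim is false: a weight-one Pauli acting on a single qubit outside $K$ has degree $1$, and the low-degree theorem says nothing about its coefficient. This is exactly the low-degree-versus-junta distinction you yourself flag as the crux, and your write-up ultimately defers it ("verify that the proof localizes the heavy mass"), which is the entire content of the theorem; Theorem 18 cannot be used as a black box here. The paper's actual mechanism is different: one removes from the circuit all Toffoli gates of arity larger than $l$ (there are at most $s$ of them), and \cref{lem:removelongtoffoli} guarantees the Choi state changes by only $O\bigl(s^2/(2^l 2^{2(n+a+2)})\bigr)$ in Pauli $\ell_2^2$; the truncated circuit has depth $\leq d$ with all gates of arity $\leq l$, so a light-cone argument makes its Choi state an \emph{exact} $(l^d+1)$-junta, and the error term — not a degree tail — bounds the off-junta mass of the original state. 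Choosing $l=\log(s^2 2^{a+1}/\eps^2)$ gives the stated junta size.

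A second missing ingredient is the treatment of the auxiliary qubits. The gate-removal and light-cone steps live on the $(n+a+2)$-qubit Choi state of the full unitary channel $\Phi$, whereas the theorem concerns the Choi state of $\Phi_\sigma$, where the $a+1$ ancilla qubits are initialized in a fixed state $\sigma$. Passing from one to the other uses \cref{lem:ChoivsChoiWithNoAncillas} together with Cauchy--Schwarz and Parseval applied to $\sigma^T$, and this is precisely where the $2^{a}$ factor in the bound (and hence in the junta size $(\log(2^a s^2/\eps))^d$) comes from. Your proposal never mentions $\sigma$ or the ancillas, so even granting a correct localization argument for $\rho_\Phi$, the statement about $\rho_{\Phi_\sigma}$ would not follow without this step.
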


To prove \cref{theo:QACconc}, we have to borrow a few lemmas from \cite{NPVY24} and apply them in a careful way. In that work, results are stated for the Choi representation of a channel, and in our work, we use the Choi state of the channel. Both are easily related, as the Choi state is obtained by dividing the Choi representation by the dimension of the space.  

Let $U$ be the unitary implemented by a $\rbra{n+a+1}$ $\QAC^0$ circuit. Then, it defines a $\rbra{n+a+1}$-to-1 qubit channel via
\begin{equation*}
    \Phi\rbra*{\cdot} = \Tr_{\sbra*{n+a}}\sbra*{U \cdot U^\dagger}.
\end{equation*}
The first lemma we need states that the Choi state of this $\rbra{n+a+1}$-to-$1$ qubit channel does not change much if one removes from the circuit a few Toffoli gates acting on many qubits \cite[Lemma 23]{NPVY24}.
\begin{lemma}[\cite{NPVY24}]\label{lem:removelongtoffoli}
    Let $\Phi$ be the $\rbra{n+a+1}$-to-1 channel defined by an $\rbra{n+a+1}$-qubit $\QAC^0$ circuit. Let $\Phi^\prime$ be the $\rbra{n+a+1}$-to-1 channel obtained by removing from the circuit $m$ Toffoli gates acting on at least $l$ qubits each. Then, the Choi states satisfy
    \begin{equation*}
        \sum_{P \in \cbra*{I,X,Y,Z}^{\otimes \rbra*{n+a+2}}} \abs*{ \widehat{\rho}_{\Phi}\rbra*{P} - \widehat{\rho}_{\Phi^\prime}\rbra*{P}}^2 = O\rbra*{\frac{m^2}{2^{l}2^{2\rbra*{n+a+2}}}}.
    \end{equation*}
\end{lemma}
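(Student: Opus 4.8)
The plan is to translate the left-hand side into a Frobenius distance between Choi states and then control it by a hybrid argument over the removed Toffolis. By Parseval's identity for the Pauli basis, the quantity to bound equals $\frac{1}{2^{n+a+2}}\norm{\rho_{\Phi}-\rho_{\Phi'}}_F^2$, since both Choi states live on $n+a+2$ qubits. Writing $U=U^{(0)},U^{(1)},\dots,U^{(m)}=U'$ for the chain of circuits in which the long Toffolis are deleted one at a time (each intermediate $U^{(j)}$ is still unitary), the triangle inequality for $\norm{\cdot}_F$ gives $\norm{\rho_{\Phi}-\rho_{\Phi'}}_F\leq \sum_{j}\norm{\rho_{\Phi^{(j-1)}}-\rho_{\Phi^{(j)}}}_F$. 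Hence it suffices to prove a single-removal bound of order $2^{-l/2}2^{-(n+a)/2}$; squaring and summing then produces the factor $m^2$.

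For a single removal, let $\ket{\Omega}$ be the maximally entangled state on the $n+a+1$ input qubits and their reference copy, and set $\ket{\psi}=(U\otimes\Id)\ket{\Omega}$ and $\ket{\psi'}=(U'\otimes\Id)\ket{\Omega}$, so that $\rho_{\Phi}=\Tr_{[n+a]}[\ket{\psi}\bra{\psi}]$ and likewise for $\Phi'$, the partial trace being over the $n+a$ discarded output qubits. If the removed Toffoli acts on $l'\geq l$ qubits, then $U-U'=A\,(T-I)\,B$ with $A,B$ unitary, and $T-I=(X-I)\otimes P$ where $P$ is the rank-one projector onto the all-controls-triggered configuration. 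Using unitarity of $A,B$ together with the maximally entangled state identity $\bra{\Omega}(M\otimes\Id)\ket{\Omega}=2^{-(n+a+1)}\Tr[M]$, a direct computation of $\Tr[(T-I)^\dagger(T-I)]=4\cdot 2^{(n+a+1)-l'}$ yields $\norm{\ket{\psi}-\ket{\psi'}}_2^2=4\cdot 2^{-l'}\leq 4\cdot 2^{-l}$; write $\ket{\delta}=\ket{\psi}-\ket{\psi'}$.

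The crux is passing from this Euclidean estimate to the Frobenius norm of the partial-traced difference. Expanding $\ket{\psi}\bra{\psi}-\ket{\psi'}\bra{\psi'}=\ket{\delta}\bra{\psi}+\ket{\psi'}\bra{\delta}$ and tracing out the discarded qubits gives $\rho_{\Phi}-\rho_{\Phi'}=\Tr_{[n+a]}[\ket{\delta}\bra{\psi}]+\Tr_{[n+a]}[\ket{\psi'}\bra{\delta}]$. Viewing $\ket{\psi},\ket{\delta}$ as matrices indexed by (kept register, discarded register), the partial trace realizes the matrix product $\Tr_{[n+a]}[\ket{\delta}\bra{\psi}]=\delta\,\psi^\dagger$, so $\norm{\delta\psi^\dagger}_F\leq\norm{\delta}_F\,\norm{\psi}_{\mathrm{op}}=\norm{\ket{\delta}}_2\,\norm{\psi}_{\mathrm{op}}$. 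The decisive point is that since $\ket{\psi}$ is maximally entangled across the output and reference registers, the reduced state on the $n+a$ discarded qubits is exactly $\Id/2^{n+a}$; thus all singular values of $\psi$ equal $2^{-(n+a)/2}$ and $\norm{\psi}_{\mathrm{op}}=2^{-(n+a)/2}$ (the same holds for $\psi'$). Each term is therefore at most $2\cdot 2^{-l/2}\cdot 2^{-(n+a)/2}$, giving $\norm{\rho_{\Phi}-\rho_{\Phi'}}_F\leq 4\cdot 2^{-l/2}2^{-(n+a)/2}$ for a single removal.

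I expect this last conversion to be the main obstacle: the partial trace is not a contraction in Frobenius norm, and the naive chain $\norm{\cdot}_F\leq\norm{\cdot}_{\mathrm{tr}}\leq\norm{\ket{\psi}\bra{\psi}-\ket{\psi'}\bra{\psi'}}_{\mathrm{tr}}=O(2^{-l/2})$ discards the decisive factor $2^{-(n+a)/2}$ and yields nothing useful; exploiting the maximal mixedness of the discarded register to cap $\norm{\psi}_{\mathrm{op}}$ is what recovers it. Assembling the pieces, summing the single-removal bound over the $m$ hybrids, squaring, and dividing by $2^{n+a+2}$ gives
$$\sum_{P\in\{I,X,Y,Z\}^{\otimes(n+a+2)}}|\widehat\rho_{\Phi}(P)-\widehat\rho_{\Phi'}(P)|^2\leq \frac{(4m)^2\,2^{-l}\,2^{-(n+a)}}{2^{n+a+2}}=O\!\left(\frac{m^2}{2^{l}\,2^{2(n+a+2)}}\right),$$
as claimed.
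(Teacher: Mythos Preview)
The paper does not supply its own proof of this lemma: it is imported verbatim as \cite[Lemma~23]{nadimpalli2023pauli} and used as a black box in the proof of \cref{theo:QACconc}. So there is no in-paper argument to compare against.

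Your proposal is correct. The Parseval step, the hybrid over removed gates, and the single-removal computation $\norm{\ket{\delta}}_2^2=4\cdot 2^{-l'}$ are all fine. The only place that requires care is the one you flagged yourself: getting the extra $2^{-(n+a)/2}$ after the partial trace. Your resolution is sound. Writing $\ket{\psi}$ as a matrix $\psi$ indexed by (kept, discarded), the reduced state on the discarded register is (up to transpose) $\psi^\dagger\psi$; since each intermediate $U^{(j)}$ is unitary, this reduced state is exactly $\Id/2^{n+a}$, forcing all singular values of $\psi$ to equal $2^{-(n+a)/2}$, and hence $\norm{\Tr_{[n+a]}[\ket{\delta}\bra{\psi}]}_F=\norm{\delta\psi^\dagger}_F\leq \norm{\delta}_F\norm{\psi}_{\mathrm{op}}$ gives the desired factor. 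The same bound applies to every hybrid pair because each $U^{(j)}$ remains unitary. Summing, squaring, and dividing by $2^{n+a+2}$ yields the stated $O(m^2/(2^{l}2^{2(n+a+2)}))$.

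One minor comment: you could also phrase the key step without matricization by noting that $\Tr_{[n+a]}[\ket{\delta}\bra{\psi}]=\Tr_{[n+a]}[(\Id\otimes\sqrt{\sigma})\,\ket{\delta}\bra{\psi}\,(\Id\otimes\sqrt{\sigma})^{-1}]$ is not the right move, whereas the operator-norm bound you used is exactly what captures the maximal mixedness; your instinct to avoid the trace-norm detour is correct and necessary.
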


Recall that $\rbra{n+a+1}$-qubit $\QAC^0$ circuit also defines an $n$-to-1 qubit channel when the auxiliary register is initialized on a fixed $\rbra{a+1}$-qubit state $\sigma$, namely 
\begin{equation*}
    \Phi_{\sigma}\rbra*{\rho \otimes \sigma} = \Phi\rbra*{\rho \otimes \sigma}. 
\end{equation*}
The second lemma we need relates to the Pauli spectrum of the Choi states of $\Phi_{\sigma}$ and $\Phi$ \cite[Proposition 28]{NPVY24}.
\begin{lemma}[\cite{NPVY24}]\label{lem:ChoivsChoiWithNoAncillas}
    Let $\Phi$ and $\Phi_{\sigma}$ be the channels as above determined by a $\QAC$ circuit. Then, their Choi states satisfy 
    \begin{align*}
        \widehat{\rho}_{\Phi_\sigma}\rbra*{P} = 2^{a+1}\sum_{Q \in \cbra*{I,X,Y,Z}^{\otimes n}} \widehat{\rho}_{\Phi}\rbra*{P\otimes Q} \Tr\sbra*{Q\sigma^T}.
    \end{align*}
\end{lemma}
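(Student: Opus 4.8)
The plan is to relate the two Choi states through the single identity that writes a maximally entangled state in the Pauli basis, and then to read off Pauli coefficients. Recall that an $m$-to-$1$ qubit channel $\Psi$ has Choi state $\rho_\Psi=(\Psi\otimes\mathrm{Id}_m)(\ket{\text{EPR}_m}\bra{\text{EPR}_m})$, and that
$$\ket{\text{EPR}_m}\bra{\text{EPR}_m}=\frac{1}{2^{2m}}\sum_{R\in\{I,X,Y,Z\}^{\otimes m}}R\otimes R^T.$$
Applying this to $\Phi$ (with $m=n+a+1$) and to $\Phi_\sigma$ (with $m=n$), I would split the reference register of $\rho_\Phi$ into the $n$ qubits dual to the input wires and the $a+1$ qubits dual to the auxiliary wires; call the latter register $A$. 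These are exactly the qubits carrying the Paulis $Q\in\{I,X,Y,Z\}^{\otimes(a+1)}$ of the statement. (I note that the index set of the sum is the auxiliary register, of size $a+1$, rather than of size $n$.)

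The central step is the operator identity
$$\rho_{\Phi_\sigma}=2^{a+1}\,\Tr_{A}\big[(\Id_{n+1}\otimes\sigma^T)\,\rho_\Phi\big],$$
where $\sigma^T$ acts on register $A$ and $\Id_{n+1}$ on the output together with the $n$ input-reference qubits. I would prove this using the ``ricochet'' property of the EPR pair: contracting the auxiliary-reference half of $\ket{\text{EPR}_{a+1}}\bra{\text{EPR}_{a+1}}$ against $\sigma^T$ prepares $\sigma$ on the auxiliary input up to normalization, namely $\Tr_{A}[(\Id\otimes\sigma^T)\ket{\text{EPR}_{a+1}}\bra{\text{EPR}_{a+1}}]=\sigma/2^{a+1}$. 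Feeding this state into $\Phi$ on the auxiliary wires turns $\Phi$ into $\Phi_\sigma$ by the defining relation $\Phi_\sigma(\cdot)=\Phi(\cdot\otimes\sigma)$, and multiplying by $2^{a+1}$ cancels the $1/2^{a+1}$. Since $\Phi$ does not act on the reference, the partial trace over $A$ commutes with it, so the whole identity reduces to the one-register EPR computation above.

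With the operator identity in hand, the lemma follows by taking the Pauli coefficient of both sides. Writing $\widehat\rho_{\Phi_\sigma}(P)=\frac{1}{2^{n+1}}\Tr[P\rho_{\Phi_\sigma}]$ and pulling the Pauli $P$ through the partial trace turns the right-hand side into $\frac{2^{a+1}}{2^{n+1}}\Tr[(P\otimes\sigma^T)\rho_\Phi]$. Expanding $\sigma^T=2^{-(a+1)}\sum_Q\Tr[Q\sigma^T]\,Q$ in the Pauli basis of register $A$ and substituting $\Tr[(P\otimes Q)\rho_\Phi]=2^{n+a+2}\,\widehat\rho_\Phi(P\otimes Q)$ collapses all the powers of two to the stated prefactor $2^{a+1}$, giving $\widehat\rho_{\Phi_\sigma}(P)=2^{a+1}\sum_Q\widehat\rho_\Phi(P\otimes Q)\Tr[Q\sigma^T]$.

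The step I expect to require the most care is the bookkeeping of normalizations: the two Choi states live on spaces of different dimension ($n+a+2$ versus $n+1$ qubits), the EPR Pauli identity carries a $2^{-2m}$ rather than $2^{-m}$, and the Pauli-coefficient convention contributes further powers of two. Obtaining the clean factor $2^{a+1}$—and not $1$ or $2^{-(a+1)}$—hinges on combining these correctly; a useful sanity check is the degenerate case $n=0$, $a+1=1$ with $\Phi=\mathrm{id}$, where both sides evaluate to $\frac{1}{2}\Tr[P\sigma]$. The transpose on $\sigma$ is best handled at the level of $\sigma^T$, as above, rather than by expanding individual Paulis, which avoids tracking the signs coming from $Y^T=-Y$.
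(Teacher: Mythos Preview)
The paper does not prove this lemma; it is imported verbatim from \cite{nadimpalli2023pauli} (their Proposition~28), so there is no in-paper argument to compare against. Your proof is correct: the operator identity $\rho_{\Phi_\sigma}=2^{a+1}\Tr_A[(\Id\otimes\sigma^T)\rho_\Phi]$ follows exactly from the ricochet computation you describe, and expanding $\sigma^T$ in the Pauli basis together with the two normalization conventions $\widehat\rho_{\Phi_\sigma}(P)=2^{-(n+1)}\Tr[P\rho_{\Phi_\sigma}]$ and $\widehat\rho_{\Phi}(P\otimes Q)=2^{-(n+a+2)}\Tr[(P\otimes Q)\rho_\Phi]$ collapses the constants to the stated $2^{a+1}$. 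Your sanity check in the $n=0$ case is valid. You are also right that the sum should range over $Q\in\{I,X,Y,Z\}^{\otimes(a+1)}$ rather than $\{I,X,Y,Z\}^{\otimes n}$; this is a typo in the lemma statement, and indeed the paper itself uses the correct index set $\{I,X,Y,Z\}^{\otimes(a+1)}$ when it invokes the lemma in the proof of \cref{theo:QACconc}.
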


\begin{proof}[ of \cref{theo:QACconc}]
    Let $\Phi$ be the $\rbra{n+a+1}$-to-1 channel determined by $\rbra{n+a+1}$-qubit $\QAC^0$ circuit of depth $d$ a size $s$. Let $l \in \N$ be fixed later. Let $\Phi^\prime$ be the $\rbra{n+a+1}$-to-1 channel obtained by removing from the circuit the Toffoli gates that act on more than $l$ qubits. As there are at most $s$ of them, by \cref{lem:removelongtoffoli} we have that the Choi states satisfy 
    \begin{equation}\label{eq:PhivsPhi'}
        \sum_{P \in \cbra*{I,X,Y,Z}^{ \otimes \rbra*{n+a+2}} } \abs*{ \widehat{\rho}_{\Phi}\rbra*{P} - \widehat{\rho}_{\Phi^\prime}\rbra*{P}}^2
        = O\rbra*{\frac{s^2}{2^{l}2^{2\rbra*{n+a+1}}} }.
    \end{equation}
    Now, by a light-cone argument, as the depth of the circuit without the \emph{long} Toffoli gates is at most $d$, then at the end of the circuit the output qubit only depends on at most $l^d$ other qubits. This implies that the $\rho_{\Phi'}$ is a $k$-junta state for $k = l^d+1$. By \cref{eq:PhivsPhi'}, if $K \subseteq \sbra{n+a+2}$ is the set of $k$ qubits on which $\rho_{\Phi^\prime}$ depends on, then 
    \begin{equation}\label{eq:PhiIsCloseToJunta}
        \sum_{P \notin \cbra*{I,X,Y,Z}^{K}} \abs*{ \widehat{\rho}_{\Phi}\rbra*{P} }^2 
        = O\rbra*{ \frac{s^2}{2^{l}2^{2\rbra*{n+a+2}}} }.
    \end{equation}
    Now, if $K^\prime \subseteq \sbra{n+1}$ is the subset of non-auxiliary qubits of $K$, i.e., $K^\prime = K \cap \sbra{n+1}$, then
    \begin{equation*}
    \begin{aligned}
            \sum_{\supp\rbra*{P} \subseteq \sbra*{K'} } \abs*{ \widehat{\rho}_{\Phi^\prime}\rbra*{P} }^2
            & = 2^{2(a+1)} \sum_{\supp\rbra*{P} \subset \sbra*{K^\prime}} \abs*{ \sum_{Q \in \cbra*{I,X,Y,Z}^{ \otimes \rbra*{a+1} }} \widehat{\rho}_{\Phi}\rbra*{P \otimes Q}\Tr\sbra*{Q\sigma^T} }^2 \\
            & \leq 2^{ 2\rbra*{a+1} }\sum_{\supp\rbra*{P} \subset \sbra*{K^\prime}} \rbra*{\sum_{Q \in \cbra*{I,X,Y,Z}^{\otimes \rbra*{a+1}}} \abs*{\widehat{\rho}_{\Phi}(P \otimes Q)}^2 } \cdot \rbra*{ \sum_{Q \in \cbra*{I,X,Y,Z}^{ \otimes \rbra*{a+1}}} \abs*{\Tr\sbra*{Q \sigma^T}}^2 }\\
            & = 2^{3\rbra*{a+1}} \Abs*{\sigma^T}_{F}^2\sum_{\supp\rbra*{P} \subset \sbra*{K^\prime}} \sum_{Q \in \cbra*{I,X,Y,Z}^{\otimes \rbra*{a+1}}} \abs*{ \widehat{\rho}_{\Phi}\rbra*{P \otimes Q} }^2 \\
            & \leq 2^{3\rbra*{a+1}}\sum_{\supp\rbra*{P} \not\subseteq \sbra{K}} \abs*{\widehat{\rho}_{\Phi}\rbra*{P} }^2\\
            & = 2^{3\rbra*{a+1}} O\rbra*{ \frac{s^2}{2^{l}2^{2\rbra*{n+a+2}}} }\\
            & = O\rbra*{ \frac{s^2 2^{a+1}}{2^{l} 2^{2\rbra*{n+1}}} },
    \end{aligned}
    \end{equation*}
    where the first line is true by \cref{lem:ChoivsChoiWithNoAncillas}; in the second we apply Cauchy-Schwarz inequality; in the third we use Parseval identity with $\sigma^T$; in the fourth line we use that if $\supp\rbra{P} \not \subseteq K'$, then $\supp(P\otimes Q)\not\subseteq K$; and in the fifth line we use \cref{eq:PhiIsCloseToJunta}. Now, the result follows by taking $l = \log\rbra{s^2 2^{a+1}/\eps^{2}}$.
\end{proof}

\subsection{Learning \texorpdfstring{$\QAC^0$}{} circuits}
In this section, we prove \cref{theo:learningQAC0}, which we restate for the reader's convenience. 
\theolearningQAC*
\begin{proof}
    \cref{theo:learningQAC0} quickly follows from \cref{theo:upperboundquantumjuntas,theo:QACconc} and using that for two $n$-qubit states $\rho$ and $\rho'$ we have that  by Parseval's identity 
    \begin{equation*}
        2^{2n}\sum_{P \in \cbra*{I,X,Y,Z}^{ \otimes n }} \abs*{\widehat{\rho}\rbra*{P} - \widehat{\rho}^\prime\rbra*{P} }^2 = 2^n \Abs*{\rho - \rho^\prime}_{F}^2.
    \end{equation*}
\end{proof}
\subsection{New lower bounds on the computing power of \texorpdfstring{$\QAC^0$}{} circuits}\label{sec:lowerboundaddress} 

Finally, we show how to use \cref{theo:QACconc} to improve on the lower bounds on the computing power of $\QAC^0$ circuits. To improve on the lower bounds that one would obtain with \cite[Theorem 18]{NPVY24}, one should look for functions of low degree that are far from being juntas. With that purpose, we consider the address function, which is known to be the Boolean function of degree $D+1$ that depends on more variables \cite{NS94}. To define it, let add $:\{-1,1\}^{D}\to [2^D]$ be a bijection. The $D$-address function $f \colon \cbra*{-1,1}^{D} \times \cbra*{-1,1}^{2^D} \to \cbra*{-1,1}$ is defined by 
\begin{equation*}
    f\rbra*{x,y} = \sum_{a \in \cbra*{-1,1}^D, y \in \cbra*{-1,1}^{2^D} } \rbra*{ \frac{x_1 a_1+1}{2} } \dots \rbra*{ \frac{x_k a_k+1}{2} } y_{\mathrm{add}\rbra*{a}}
\end{equation*}
for every $x \in \cbra{-1,1}^D$ and $y \in \cbra{-1,1}^{2^D}$. Note that $f$ has degree $D+1$, but depends on $2^D+D$ variables. Moreover, we can show that $f$ is far from every Boolean function that depends on fewer than $2^D$ variables.

\begin{fact}\label{fact:addressfarfromjunta}
    Let $f$ be the $D$-address function. Let $k \in \sbra{2^D}$. Then, the degree of $f$ is $D+1$ and $f$ is $\rbra{ \rbra{2^D-k}/2^{D+1} }$-far from being a $k$-junta.
\end{fact}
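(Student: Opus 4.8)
The plan is to compute the Fourier expansion of the $D$-address function explicitly, read off its degree, and then lower-bound its distance to every $k$-junta by a direct combinatorial argument on the ``address'' bits $y$. First I would observe that for each fixed $a \in \{-1,1\}^D$, the indicator $\prod_{j=1}^{D}\bigl(\tfrac{x_j a_j + 1}{2}\bigr)$ is the $\{0,1\}$-valued AND-type selector that equals $1$ iff $x = a$, and multilinearly expanding it gives a polynomial in $x$ of degree exactly $D$ supported on all $2^D$ monomials in the $x$-variables. Multiplying by the single variable $y_{\mathrm{add}(a)}$ produces monomials of degree up to $D+1$, and summing over $a$ one sees that the top-degree part (the full monomial $x_1\cdots x_D$ times some $y$-combination) does not cancel, so $\deg f = D+1$. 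This handles the degree claim and sets up the distance claim: from the expansion, for a fixed assignment $x = a$ the function simply outputs $y_{\mathrm{add}(a)}$, i.e.\ $f(a,y) = y_{\mathrm{add}(a)}$, so $f$ genuinely depends on all $2^D$ variables $y_1, \dots, y_{2^D}$.

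Next I would bound the distance of $f$ to an arbitrary $k$-junta $g$, where $k \le 2^D$. Write $g$ as depending on a set $J$ of at most $k$ coordinates among the $D + 2^D$ inputs. Since $|J \cap \{y\text{-coordinates}\}| \le k$, there are at least $2^D - k$ indices $i$ with $y_i \notin J$; pick one such $i$ and let $a^* = \mathrm{add}^{-1}(i)$. The plan is to condition on $x = a^*$ and on the $y$-coordinates other than $y_i$ being arbitrary: under a uniform input, the event $x = a^*$ has probability $2^{-D}$, and conditioned on it, $f(a^*, y) = y_i$ is a uniformly random bit independent of everything $g$ sees (since $g$ does not read $y_i$, and $g$'s value is some fixed bit once we fix $x = a^*$ and the coordinates of $J$). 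Hence on this event $f$ and $g$ disagree with probability exactly $1/2$, contributing $2^{-D} \cdot \tfrac12 = 2^{-(D+1)}$ to $\Pr[f \ne g]$ from a single value of $x$. Summing this argument over the $2^D - k$ distinct choices of $i$ (equivalently, over the $2^D - k$ distinct values $a^* = \mathrm{add}^{-1}(i)$, which are disjoint events in $x$) gives $\Pr[f(x,y) \ne g(x,y)] \ge (2^D - k) \cdot 2^{-(D+1)} = (2^D - k)/2^{D+1}$, as claimed.

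The one subtlety to get right — and the main thing to be careful about rather than a genuine obstacle — is the independence/disjointness bookkeeping: I must ensure the $2^D - k$ contributions are summed over \emph{disjoint} events (they are, since each is supported on a distinct value of $x = a^*$) and that within each such event the conditional disagreement probability is exactly $1/2$. The latter holds because $g$ reads none of the selected $y_i$'s, so after fixing $x = a^*$ the restricted $g$ is a function of $y$-coordinates all lying outside $\{y_i\}$, hence constant in $y_i$, while the restricted $f$ equals $y_i$ itself; a single fair coin flip on $y_i$ then gives disagreement probability $1/2$ regardless of how the other coordinates are set. Taking the infimum over all $k$-juntas $g$ yields the distance bound, completing the proof of the fact.
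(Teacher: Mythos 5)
Your proposal is correct and follows essentially the same route as the paper: you lower-bound the distance by conditioning on the address $x$, observing that at least $2^D-k$ addresses point to $y$-coordinates the junta $g$ never reads, and that on each such (disjoint) event the disagreement probability is exactly $1/2$, which is just the complementary phrasing of the paper's bound $\Pr[g=f]\leq \tfrac{1}{2^D}\bigl(k+\tfrac{1}{2}(2^D-k)\bigr)$. Your explicit verification of the degree claim (the monomials $x_1\cdots x_D\, y_i$ have coefficient $\pm 2^{-D}$ and cannot cancel) is a fine addition, since the paper asserts the degree without proof.
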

\begin{proof}
    Let $g \colon \cbra{-1,1}^{D+2^D} \to \cbra{-1,1}$ be a $k$-junta. The distance between $g$ and $f$ is 
    \begin{equation*}
        d\rbra*{f,g} = \Pr_{x,y}\sbra*{g\rbra*{x,y} \ne f\rbra*{x,y} }
        = 1 - \Pr_{x,y}\sbra*{ g\rbra*{x,y} = f\rbra*{x,y} }
        = 1 - \Pr_{x,y}\sbra*{ g\rbra*{x,y} = y_{\mathrm{add}\rbra*{x}} },
    \end{equation*}
    where in the last equality we have used that $\rbra{ \frac{x_1 a_1 + 1}{2} } \dots \rbra{\frac{x_k a_k +1}{2} } = \delta_{a,x}.$ Now, 
    \begin{equation*}
        \Pr_{x,y}\sbra*{ g\rbra*{x,y} = y_{\mathrm{add}\rbra*{x}} }
        = \frac{1}{2^D} \sum_{x \in \cbra*{-1,1}^D} \Pr_y\sbra*{ g\rbra*{x,y} = y_{{\mathrm{add}}\rbra*{x}} }
        \le \frac{1}{2^D}\rbra*{ k + \frac{1}{2}\rbra*{2^D - k} },
    \end{equation*}
    where in the inequality we have used that $g$ depends on at most $k$ variables of $y_1, \dots, y_{2^D}$, and that if $g$ does not depend on $y$, then $\Pr_y\sbra*{ g\rbra*{x,y} = y_{{\mathrm{add}}\rbra*{x}} } = 1/2$. Putting everything together follows that $d\rbra*{f,g} \ge \rbra*{ (2^D - k)/2^{D+1}}$. 
\end{proof}

\begin{corollary}\label{cor:lowerboundaddress}
    In order to compute the $D$-address function with a depth $d$, size $s$ $\QAC^0$ circuit with $a$-auxiliary qubits up to error 1/4, the parameters need to satisfy 
    \begin{equation*}
        s^2 2^a = \Omega\rbra*{2^{\rbra*{2^D}^{1/d}}}.
    \end{equation*}
\end{corollary}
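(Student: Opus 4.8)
The plan is the standard ``far from juntas $\Rightarrow$ strong circuit'' contradiction, fed by \cref{theo:QACconc} and \cref{fact:addressfarfromjunta}. Suppose for contradiction that a depth-$d$, size-$s$, $a$-auxiliary-qubit $\mathsf{QAC}^0$ circuit on $n=D+2^D$ input qubits computes the $D$-address function $f$ with error $\le 1/4$, in the sense that on every computational-basis input $\ket{x}$ (with the auxiliary register in the circuit's fixed state $\sigma$) a $Z$-measurement of the output qubit returns $f(x)$ with probability $\ge 3/4$. Let $\Phi_\sigma$ be the induced $n$-to-$1$ channel and $\rho_{\Phi_\sigma}$ its $(n+1)$-qubit Choi state. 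I will show the hypotheses of \cref{theo:QACconc} force $f$ to be close to a Boolean $k$-junta with $k=(\log(2^as^2/\eps))^d$, which contradicts \cref{fact:addressfarfromjunta} unless $k$ is large. It is crucial here that \cref{theo:QACconc} gives closeness to a \emph{junta}, not merely low degree: $f$ itself is low degree, so a low-degree conclusion alone would be useless.

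The first step is to de-quantize the circuit to a real-valued function. Set $h(x):=\Tr[Z\,\Phi_\sigma(\ket{x}\bra{x})]\in[-1,1]$, the signed output bias on input $x$. The success condition reads exactly $\tfrac{1}{2}(1+f(x)h(x))\ge 3/4$, i.e. $f(x)h(x)\ge 1/2$ for every $x$; in particular $|h(x)|\ge 1/2$ and $\sign(h(x))=f(x)$ pointwise. Next, a short Choi--Jamiolkowski computation -- of the same flavour as the one behind \cref{eq:ChoiofPhif}, using only trace preservation of $\Phi_\sigma$ and the transpose trick for the EPR pairs -- shows that the Pauli coefficient $\widehat\rho_{\Phi_\sigma}(Z_S\otimes Z)$ equals $\widehat h(S)$ up to the fixed normalization $2^{-(n+1)}$, where $Z_S\otimes Z$ denotes the Pauli acting as $Z$ on the output qubit, as $Z$ on the input-copy qubits indexed by $S\subseteq[n]$, and as the identity elsewhere.

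Now apply \cref{theo:QACconc} with a small absolute constant $\eps$ (say $\eps=1/128$): there is $K\subseteq[n+1]$ with $|K|\le k:=(\log(2^as^2/\eps))^d$ and $\sum_{\supp(P)\not\subseteq K}|\widehat\rho_{\Phi_\sigma}(P)|^2\le \eps/2^{2n}$. Keeping only the terms $P=Z_S\otimes Z$ and using the identification above gives $\sum_{S\cup\{\mathrm{out}\}\not\subseteq K}|\widehat h(S)|^2=O(\eps)$. Since $\norm{h}_{L_2}^2\ge 1/4$ (as $|h|\ge 1/2$) while $O(\eps)<1/4$, the output qubit must lie in $K$; writing $K'=K\cap[n]$ (so $|K'|<k$) and $h':=\sum_{S\subseteq K'}\widehat h(S)\prod_{i\in S}x_i$, Parseval gives $\norm{h-h'}_{L_2}^2=O(\eps)$ with $h'$ depending only on the variables in $K'$. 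Finally let $g:=\sign(h')$, a Boolean $k$-junta: whenever $g(x)\ne f(x)=\sign(h(x))$ the numbers $h(x)$ and $h'(x)$ have opposite signs, so $|h(x)-h'(x)|\ge |h(x)|\ge 1/2$, and Markov's inequality yields $\Pr_x[f\ne g]\le 4\norm{h-h'}_{L_2}^2=O(\eps)<1/4$. If $k\le 2^{D-1}$ this contradicts \cref{fact:addressfarfromjunta}, which gives $\Pr_x[f\ne g]\ge (2^D-k)/2^{D+1}\ge 1/4$. Hence $(\log(2^as^2/\eps))^d> 2^{D-1}$, i.e. $\log(s^22^a)=\Omega\big((2^D)^{1/d}\big)$, which is the claimed bound.

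The step I expect to be most delicate is the Choi--Jamiolkowski bookkeeping in the second paragraph: correctly matching the ``output'' tensor factor of the Choi state to the physical output qubit, carrying the $2^{-(n+1)}$ normalization through so that the bound $\eps/2^{2n}$ of \cref{theo:QACconc} genuinely translates into $\sum_{S\not\subseteq K'}|\widehat h(S)|^2=O(\eps)$, and the small argument that forces the output qubit into $K$. Everything after that is a routine rounding argument, with the pointwise bound $|h(x)|\ge 1/2$ -- the only place the ``error $\le 1/4$'' hypothesis enters -- supplying exactly the margin needed to upgrade an $L_2$-approximation of $h$ into a Boolean $k$-junta approximating $f$.
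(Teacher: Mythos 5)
Your proposal is correct and follows essentially the same route as the paper: combine \cref{theo:QACconc} (Choi states of small $\mathsf{QAC}^0$ circuits are close to juntas) with \cref{fact:addressfarfromjunta} (the address function is far from every $k$-junta) to force $(\log(s^22^a/\eps))^d=\Omega(2^D)$. The paper's two-line proof leaves implicit exactly the bookkeeping you spell out --- the bias function $h$, the identification $\widehat\rho_{\Phi_\sigma}(Z_S\otimes Z)=\widehat h(S)/2^{n+1}$, locating the output qubit in $K$, and the sign-rounding that converts the error-$1/4$ guarantee into closeness of Boolean functions --- so your write-up is a faithful filling-in of the same argument rather than a different one.
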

\begin{proof}
    By \cref{fact:addressfarfromjunta} it follows that the $D$-address function is $1/8$-far from every $\rbra{\rbra{3/4}2^D}$-junta. On the other hand, by \cref{theo:QACconc} it follows that the Choi-state of the $\QAC^0$ that does not satisfy $\log\rbra{s^2 2^a}^d = \Omega\rbra{2^D}$ circuit is $1/8$-close to a $\rbra{\rbra{3/4}2^D}$-junta. Putting both things together, the claimed result follows.
\end{proof}
\begin{remark}
    If one used \cite[Theorem 18]{NPVY24} instead of \cref{theo:QACconc} in the proof of \cref{cor:lowerboundaddress}, one would obtain a weaker lower bound $s^2 2^a = \Omega\rbra{2^{D/d}}$.
\end{remark}

\section*{Acknowledgments} 
We thank Jop Briët, Alexandros Eskenazis, Yuval Filmus, Jonas Helsen, Dale Jacobs, and Francisca Vasconcelos for useful conversations.
J.B. was supported by the Quantum Advantage Pathfinder project of UKRI Engineering and Physical Sciences Research Council under grant No.~EP/X026167/1. Partial work was done when J.B. was in the Centre for Quantum Technologies, National University of Singapore, supported under grant No.~A-0009870-00-00. 
F.E.G. thanks the Hausdorff Research Institute of Mathematics of Bonn, which hosted F.E.G. during the Dual Trimester Program: ``Boolean Analysis in Computer Science'', where part of this paper was written. F.E.G was supported by the European Union’s Horizon 2020 research and innovation programme under the Marie Sk{\l}odowska-Curie grant agreement No. 945045, and by the NWO Gravitation project NETWORKS under grant No. 024.002.003. We thank the anonymous referees for their helpful feedback. A short version of this work has been accepted to appear in the Proceedings of ICML'26 and was presented as a talk in AQIS'25.

\appendix

\bibliographystyle{alphaurl}
\bibliography{Bibliography}

@inproceedings{BOW19,
  title={Quantum state certification},
  author={B{\u{a}}descu, Costin and O'Donnell, Ryan and Wright, John},
  booktitle={Proceedings of the 51st Annual ACM SIGACT Symposium on Theory of Computing},
  pages={503--514},
  year={2019},
  doi={10.1145/3313276.3316344}
}

@article{EIS23,
    title = {Low-degree learning and the metric entropy of polynomials},
	author = {Eskenazis, Alexandros and Ivanisvili, Paata and Streck, Lauritz},
	journal = {Discrete Analysis},
	year = {2023},
    doi = {10.19086/da.88507},
}

@inproceedings{HKOT23,
  title={Query-optimal estimation of unitary channels in diamond distance},
  author={Haah, Jeongwan and Kothari, Robin and O’Donnell, Ryan and Tang, Ewin},
  booktitle={Proceedings of the 64th Annual IEEE Symposium on Foundations of Computer Science},
  year={2023},
  pages={363-390},
  doi={10.1109/FOCS57990.2023.00028}
}

@misc{Moo99,
  title={Quantum circuits: {F}anout, parity, and counting},
  author={Moore, Cristopher},
  howpublished = {ArXiv preprints},
  eprint = {quant-ph/9903046v3},
  year={1999},
}

@article{FFG+06,
    author = {Fang, Maosen and Fenner, Stephen and Green, Frederic and Homer, Steven and Zhang, Yong},
    title = {Quantum lower bounds for fanout},
    year = {2006},
    volume = {6},
    number = {1},
    journal = {Quantum Information and Computation},
    pages = {46–57},
    doi = {10.26421/QIC6.1-3}
}

@misc{FGPT25,
      title={Tight bounds on depth-2 QAC-circuits computing parity}, 
      author={Fenner, Stephen and Grier, Daniel and Padé, Daniel and Thierauf, Thomas},
      year={2025},
      eprint={2504.06433},
      howpublished = {ArXiv preprints},
}

@inproceedings{Ros20,
  author =	{Rosenthal, Gregory},
  title =	{{Bounds on the {QAC$^0$} Complexity of Approximating Parity}},
  booktitle =	{Proceedings of the 12th Innovations in Theoretical Computer Science Conference},
  pages =	{32:1-32:20},
  year =	{2021},
  volume = {185},
  doi = {10.4230/LIPIcs.ITCS.2021.32},
}

@inproceedings{CJLW21,
  title={Learning and testing junta distributions with subcube conditioning},
  author={Chen, Xi and Jayaram, Rajesh and Levi, Amit and Waingarten, Erik},
  booktitle={Proceedings of the 34th Annual Conference on Learning Theory},
  pages={1060--1113},
  year={2021},
  url = 	 {https://proceedings.mlr.press/v134/chen21b.html}
}

@article{EFH+22,
  title={The randomized measurement toolbox},
  author={Elben, Andreas and Flammia, Steven T. and Huang, Hsin-Yuan and Kueng, Richard and Preskill, John and Vermersch, Beno{\^i}t and Zoller, Peter},
  journal={Nature Reviews Physics},
  year={2022},
  doi={10.1038/s42254-022-00535-2},
  pages = {9-24},
  volume = {5},
}

@book{Wat18,
  title={The theory of quantum information},
  author={Watrous, John},
  year={2018},
  publisher={Cambridge University Press},
  doi={10.1017/9781316848142}
}

@inproceedings{ABRdW16,
    author = {Ambainis, Andris and Belovs, Aleksandrs and Regev, Oded and de Wolf, Ronald},
    title = {Efficient Quantum Algorithms for (Gapped) Group Testing and Junta Testing},
    booktitle = {Proceedings of the 2016 Annual ACM-SIAM Symposium on Discrete Algorithms},
    pages = {903-922},
    year = {2016},
    doi = {10.1137/1.9781611974331.ch65},
}

@article{ZLK+24,
  title = {Learning Quantum States and Unitaries of Bounded Gate Complexity},
  author = {Zhao, Haimeng and Lewis, Laura and Kannan, Ishaan and Quek, Yihui and Huang, Hsin-Yuan and Caro, Matthias C.},
  journal = {PRX Quantum},
  volume = {5},
  pages = {040306},
  year = {2024},
  doi = {10.1103/PRXQuantum.5.040306},
}

@article{OW21,
  title={Quantum spectrum testing},
  author={O'Donnell, Ryan and Wright, John},
  journal ={Communications in Mathematical Physics},
  volume = {387},
  pages = {1-75},
  year = {2021},
  doi = {10.1007/s00220-021-04180-1}
  }

@inproceedings{Ouf23,
  title={Sample-Optimal Quantum Process Tomography with non-adaptive Incoherent Measurements},
  author={Oufkir, Aadil},
  booktitle={Proceedings of the 2023 IEEE International Symposium on Information Theory},
  year={2023},
  doi = {10.1109/ISIT54713.2023.10206538},
}

@inproceedings{ABR16,
  title={Learning and testing junta distributions},
  author={Aliakbarpour, Maryam and Blais, Eric and Rubinfeld, Ronitt},
  booktitle={Proceedings of the 29th Annual Conference on Learning Theory},
  pages={19--46},
  year={2016},
  url = {https://proceedings.mlr.press/v49/aliakbarpour16.html},
}

@inproceedings{o2016efficient,
    author = {O'Donnell, Ryan and Wright, John},
    title = {Efficient quantum tomography},
    year = {2016},
    isbn = {9781450341325},
    publisher = {Association for Computing Machinery},
    address = {New York, NY, USA},
    url = {https://doi.org/10.1145/2897518.2897544},
    doi = {10.1145/2897518.2897544},
    booktitle = {Proceedings of the Forty-Eighth Annual ACM Symposium on Theory of Computing},
    pages = {899–912},
    numpages = {14},
    keywords = {tomography, longest increasing subsequence, Robinson-Schensted-Knuth correspondence, Quantum computing},
    location = {Cambridge, MA, USA},
    series = {STOC '16}
}

@article{MO10,
  title={Quantum {B}oolean functions},
  author={Montanaro, Ashley and Osborne, Tobias J.},
  journal={Chicago Journal of Theoretical Computer Science},
  volume={2010},
  pages={1-45},
  year={2010},
  doi={10.4086/cjtcs.2010.001}
}

@inproceedings{CNY23,
  title={Testing and Learning Quantum Juntas Nearly Optimally},
  author={Chen, Thomas and Nadimpalli, Shivam and Yuen, Henry},
  booktitle={Proceedings of the 2023 Annual ACM-SIAM Symposium on Discrete Algorithms},
  pages={1163--1185},
  year={2023},
  doi={10.1137/1.9781611977554.ch43}
}

@article{BY25,
  author={Bao, Zongbo and Yao, Penghui},
  journal={IEEE Transactions on Pattern Analysis and Machine Intelligence}, 
  title={On Testing and Learning Quantum Junta Channels}, 
  year={2025},
  volume={47},
  number={4},
  pages={2991-3002},
  doi={10.1109/TPAMI.2025.3528648}
}

@article{AS07,
  title={Quantum algorithms for learning and testing juntas},
  author={At{\i}c{\i}, Alp and Servedio, Rocco A},
  journal={Quantum Information Processing},
  volume={6},
  number={5},
  pages={323--348},
  year={2007},
  doi={10.1007/s11128-007-0061-6}
}

@article{HKP20,
  title={Predicting many properties of a quantum system from very few measurements},
  author={Huang, Hsin-Yuan and Kueng, Richard and Preskill, John},
  journal={Nature Physics},
  volume={16},
  number={10},
  pages={1050--1057},
  year={2020},
  doi={10.1038/s41567-020-0932-7}
}

@article{LMN93,
  title={Constant depth circuits, {F}ourier transform, and learnability},
  author={Linial, Nathan and Mansour, Yishay and Nisan, Noam},
  journal={Journal of the ACM},
  volume={40},
  number={3},
  pages={607--620},
  year={1993},
  doi={10.1145/174130.174138}
}

@article{VZ23,
  title={Noncommutative {B}ohnenblust--{H}ille inequalities},
  author={Volberg, Alexander and Zhang, Haonan},
  journal={Mathematische Annalen},
  pages={1-20},
  year={2023},
  doi={https://doi.org/10.1007/s00208-023-02680-0}
}

@inproceedings{VH25,
  title = 	 {Learning shallow quantum circuits with many-qubit gates},
  author =       {Vasconcelos, Francisca and Huang, Hsin-Yuan},
  booktitle = 	 {Proceedings of Thirty Eighth Conference on Learning Theory},
  pages = 	 {5553-5604},
  year = 	 {2025},
  volume = 	 {291},
  url = 	 {https://proceedings.mlr.press/v291/vasconcelos25a.html},
}

@inproceedings{ADOY25,
    title = {On the Computational Power of {QAC$^0$} with Barely Superlinear Ancillae},    
    author = {Anshu, Anurag and Dong, Yangjing and Ou, Fengning and Yao, Penghui},
    booktitle = {Proceedings of the 57th Annual ACM Symposium on Theory of Computing},
    pages = {1476–1487},
    year = {2025},
    doi = {10.1145/3717823.3718189},
}

@article{NS94,
  title={On the degree of Boolean functions as real polynomials},
  author={Nisan, Noam and Szegedy, Mario},
  journal={Computational Complexity},
  volume={4},
  pages={301-313},
  year={1994},
  doi={10.1007/BF01263419}
}

@inproceedings{NPVY24,
    author = {Nadimpalli, Shivam and Parham, Natalie and Vasconcelos, Francisca and Yuen, Henry},
    title = {On the Pauli Spectrum of {QAC$^0$}},
    year = {2024},
    doi = {10.1145/3618260.3649662},
    booktitle = {Proceedings of the 56th Annual ACM Symposium on Theory of Computing},
    pages = {1498–1506},
}

@article{MMB+25,
  title={Learning quantum states of continuous-variable systems},
  author={Mele, Francesco A. and Mele, Antonio A. and Bittel, Lennart and Eisert, Jens and Giovannetti, Vittorio and Lami, Ludovico and Leone, Lorenzo and Oliviero, Salvatore F.E.},
  journal={Nature Physics},
  volume = {21},
  pages  = {2002--2008},
  year={2025},
  doi={10.1038/s41567-025-03086-2}
}

@article{BHL95,
    title = {Learning in the Presence of Finitely or Infinitely Many Irrelevant Attributes},
    author = {Blum, Avrim and Hellerstein, Lisa and Littlestone, Nick},    
    journal = {Journal of Computer and System Sciences},
    volume = {50},
    number = {1},
    pages = {32-40},
    year = {1995},
    doi = {https://doi.org/10.1006/jcss.1995.1004},
}

@inproceedings{Blu94,
  title={Relevant examples and relevant features: Thoughts from computational learning theory},
  author={Blum, Avrim},
  booktitle={Proceedings of the 1994 AAAI Fall Symposium},
  volume={5},
  pages={1},
  year={1994}
}

@inproceedings{CDL+24,
  title={Mildly exponential lower bounds on tolerant testers for monotonicity, unateness, and juntas},
  author={Chen, Xi and De, Anindya and Li, Yuhao and Nadimpalli, Shivam and Servedio, Rocco A},
  booktitle={Proceedings of the 2024 Annual ACM-SIAM Symposium on Discrete Algorithms},
  pages={4321--4337},
  year={2024},
  doi={10.1137/1.9781611977912.151}
}

@inproceedings{NP24,
  title={Optimal Non-adaptive Tolerant Junta Testing via Local Estimators},
  author={Nadimpalli, Shivam and Patel, Shyamal},
  booktitle={Proceedings of the 56th Annual ACM Symposium on Theory of Computing},
  pages={1039-1050},
  year={2024},
  doi={10.1145/3618260.364968}
}

@inproceedings{valiant2012finding,
  title={Finding correlations in subquadratic time, with applications to learning parities and juntas},
  author={Valiant, Gregory},
  booktitle={Proceedings of the 53rd IEEE Annual Symposium on Foundations of Computer Science},
  pages={11-20},
  year={2012},
  doi={10.1109/FOCS.2012.27}
}

@inproceedings{MOS03,
  title={Learning juntas},
  author={Mossel, Elchanan and O'Donnell, Ryan and Servedio, Rocco P},
  booktitle={Proceedings of the 35th Annual ACM Symposium on Theory of Computing},
  pages={206-212},
  year={2003},
  doi={10.1145/780542.78057}
}

@article{BL97,
  title={Selection of relevant features and examples in machine learning},
  author={Blum, Avrim and Langley, Pat},
  journal={Artificial intelligence},
  volume={97},
  number={1},
  pages={245--271},
  year={1997},
  doi = {https://doi.org/10.1016/S0004-3702(97)00063-5},
}

@book{NC10,
    title={Quantum Computation and Quantum Information}, 
    doi={10.1017/CBO9780511976667}, 
    author={Nielsen, Michael A. and Chuang, Isaac L.}, 
    publisher={Cambridge University Press}, 
    year={2010}
}

@article{HHJ+17,
  title={Sample-optimal tomography of quantum states},
  author={Haah, Jeongwan and Harrow, Aram W. and Ji, Zhengfeng and Wu, Xiaodi and Yu, Nengkun},
  journal={IEEE Transactions on Information Theory},
  volume={63},
  number={9},
  pages={5628--5641},
  year={2017},
  publisher={IEEE},
  doi={10.1109/TIT.2017.2719044}
}

@book{Wil13,
  title={Quantum information theory},
  author={Wilde, Mark M.},
  year={2013},
  publisher={Cambridge University Press},
  doi = {10.1017/9781316809976}
}
\end{document}